\newcommand{\bea}{\begin{eqnarray}}
\newcommand{\eea}{\end{eqnarray}}
\def\beq#1#2\eeq{
        \begin{equation}
        \label{#1}
            #2
        \end{equation}}
\newcommand{\al}{\alpha}
\renewcommand{\tilde}{\widetilde}
\def\btheor#1\etheor{
        \begin{theor}
            #1
        \end{theor}
    }
    \def\bsled#1\esled{
        \begin{sled}
            #1
        \end{sled}   }
\newtheorem*{rem1}{Remark 1}
\newtheorem*{rem3}{Remark 3}
\newtheorem*{rem11}{Remark 11}
\newtheorem*{rem12}{Remark 12}
\newtheorem*{rem13}{Remark 13}
\newtheorem{theorem}{Theorem}
\newtheorem{lemma}{Lemma}
\newtheorem{cor}{Corollary}
\def\hm#1{#1\nobreak\discretionary{}{\hbox{\m@th$#1$}}{}}
\def\mi#1{\discretionary{\hbox{\m@th$#1$}}{\hbox{\m@th$#1$}}{}}
\begin{document}
\title{\bf Single-User MIMO System, Painlev\'e Transcendents and Double Scaling}
\author{Hongmei Chen, Min Chen\thanks{chen121386@163.com}, Gordon Blower\thanks{Corresponding author g.blower@lancaster.ac.uk},
Yang Chen\thanks{yayangchen@umac.mo and yangbrookchen@yahoo.co.uk}\\
        Faculty of Science and Technology, Department of Mathematics,\\
        University of Macau.\\
        Department of Mathematics, Lancaster University.}
\date{\today}
\maketitle
\begin{abstract}
In this paper we study a particular Painlev\'e V (denoted ${\rm P_{V}}$)
that arises from Multi-Input-Multi-Output (MIMO) wireless communication systems.
Such a $P_V$ appears through its intimate relation with
the Hankel determinant that describes the moment generating function (MGF) of the Shannon capacity.
This originates through the multiplication of the Laguerre weight or the
Gamma density $x^{\alpha} {\rm e}^{-x},\;x> 0,$ for $\alpha>-1$ by $(1+x/t)^{\lambda}$ with $t>0$ a scaling parameter.
Here the $\lambda$ parameter ``generates" the Shannon capacity; see Yang Chen and Matthew McKay, IEEE Trans. IT, 58 (2012) 4594--4634.
It was found that the MGF has an integral representation as a functional of $y(t)$ and $y'(t)$, where $y(t)$ satisfies
the ``classical form" of $P_V$. In this paper, we consider the situation where $n,$ the
number of transmit antennas, (or the size of the random matrix), tends to infinity, and
the signal-to-noise ratio (SNR) $P$ tends to infinity, such that $s={4n^{2}}/{P}$ is finite. Under such double scaling
the MGF, effectively an infinite determinant, has an integral representation in terms of a ``lesser" $P_{III}$.
We also consider the situations where $\alpha=k+1/2,\;\;k\in \mathbb{N},$ and $\alpha\in\{0,1,2,\dots\}$ $\lambda\in\{1,2,\dots\},$
linking the relevant quantity to a solution of the two dimensional sine-Gordon equation in radial coordinates and a certain discrete Painlev\'e-II.
\\
From the large $n$ asymptotic of the orthogonal polynomials, that appears naturally, we obtain the double
scaled MGF for small and large $s$, together with the constant term in the large $s$ expansion.
With the aid of these, we derive a number of cumulants and find that the capacity distribution function is non-Gaussian.

\end{abstract}
\noindent

\setcounter{equation}{0}
\section{Introduction.}
Multiple-Input-Multiple-Output(MIMO) wireless communication systems play
an increasingly important role in the wireless communications
development and research. The ability to increase the capacity (effectively)
 without the restrictions of power and bandwidth channels,
due to the discoveries of Telatar \cite{Telatar1999}, Foschini and Gans \cite{FoschiniGans1998},
 has spurred a tremendous amount of research.
The fundamental descriptor, the outage capacity of the information theoretic aspect of MIMO systems, is hard to
characterize, as this requires knowledge of the entire distribution of the channel mutual information.
Gaussian approximations were usually made by computing a few moments of the distribution,
see \cite{HKL2008, MC 2005, SRS2003, TV2005}.
In \cite{KMMC2011} the Coulomb fluid approximation was employed to characterize the large
 deviations behavior of the  MIMO channel capacity. With the aid of a particular Painlev\'e $P_V$ equation, Chen and McKay \cite{YangMcky2012} obtain the mean, the variance, and any
desired number of high order cumulants. Furthermore, a proposal was made in \cite{YangMcky2012}
on a double scaling scheme where $n$ tends to infinity, and the signal-noise ratio $P$ (SNR) tends to infinity, such that the parameter $s=4n^{2}/P$ is finite. One could think of this as the thermodynamic limit of such random matrix ensembles.
\par
In this paper, we carry through the double scaling and obtain the resulting double scaled moment generating function (MGF), described effectively
by an infinite determinant. It is found that the
logarithmic derivative of the double scaled Hankel determinant 
satisfies a Jimbo--Miwa--Okamoto $\sigma$-form Painlev\'e equation (essentially a $P_{III}$).  Further the
MGF has an integral representation in terms of a particular $P_{V}$ or a $P_{III},$ both in the classical form.
The asymptotic expansions of the MGF for small $s$ and large $s$ are obtained. Furthermore, with certain restrictions on the parameters, closed form expressions of the MGF are obtained.
\par
This paper is organized as follows. In Section 2, we briefly recall the single-user model. The logarithmic
derivative of the determinant of a particular $n$ dimensional Hankel matrix satisfies the Jimbo--Miwa--Okamoto $\sigma$-form
of the Painlev\'e equation, and the Hankel determinant has an integral representation in terms of a $P_{V},$
obtained by Chen and McKay \cite{YangMcky2012}. In this situation, the parameters in the $P_V$, either the $\sigma$ or
the classical form, depend on $n$.
\par
In Section 3, we show that the double scaled Hankel determinant has an integral representation
in terms of a particular $P_V$ or $P_{III}$. In this situation the parameters that occur
in the Painlev\'e equations are independent of $n$.
The formal power series expansion for small $s$ and large $s$ of the MGF are worked out.
For large $s,$ the hard-to-find constant term in asymptotic expansion is obtained
through the connection with the polynomials orthogonal with respect to the deformed Laguerre weight,
$x^{\al}{\rm e}^{-x}(t+x)^{\lambda},$ for $\alpha>-1,\; t>0,\;x\geq 0.$
\par
In Section 4, with Normand's formulas \cite{N2004}, the asymptotic expansions of the Hankel
determinant for finite $n$ and the double-scaled situation are also found. Here, with chosen special values of the parameters,
the double-scaled Hankel determinant degenerate to the asymptotic expansions obtained by Tracy and Widom \cite{TW1994}.
\par
In Section 5, we take
$\alpha=k+\frac{1}{2},$ $k\in \mathbb{N}$ and $\lambda\in\mathbb{R}$. Here one finds rational solutions in terms of
the Umemura polynomials and obtains closed form expression of the scaled MGF.
Furthermore, if $\alpha=0,$  then the $P_V$ is equivalent to the $2$-dimensional sine-Gordon equation in the
radial variable. If $\lambda\in\mathbb{N}$ this degenerates to a discrete Painlev\'e II.
Taking $\alpha=1$, and $\lambda\in\mathbb{N}$, and combining with the
Backlund transformation of the seed solution, expressed in terms of the modified Bessel-functions $I_1(z)$ and $I_0(z),$
we obtain solution of $P_{V}$ with $\alpha\in \mathbb{N},$ $\lambda\in \mathbb{N}.$
In these results, $\lambda\in \mathbb{N}$ labels the discrete $P_{II}$, a second order non-linear difference equation,
and  $\alpha\in \mathbb{N}$ denotes the number of ``applications" taken in the B\"{a}cklund transformation.
\par
In Section 6, we obtained the asymptotic behavior of the scaled MGF, and note that
the distribution of the outage capacity is no longer Gaussian.
\par
In Section 7, we display the reproducing or Christoffel--Darboux kernel constructed out of the polynomials orthogonal with respect to the
deformed Laguerre weight $x^{\alpha}e^{-x}(t+x)^{\lambda}$. We note that the
polynomials satisfy ladder operator relations, as in \cite{YangMcky2012}. With the double scaling
described above namely, $t\rightarrow 0,$ $n\rightarrow \infty,$  such that $s=4nt$ is finite,
(or equivalently $s=4n^2/P$ is finite), where
the ``coordinates" $x,$ $y$ are re-scaled to $x=\frac{X}{4n},$ $y=\frac{Y}{4n},$ it is found that scaled kernel
can be expressed in terms of the solutions of a second order ordinary differential equation (ODE). If $s=0,$
then the second order ODE becomes the Bessel differential equation. If $s>0$, one finds a Heun-like second order ODE.

\section{Single-User System Model}
In this section we give a brief description of the random matrix model which arises from a particular
wireless communication system.
See \cite{BasorChen} and \cite{YangMcky2012} and the references therein for in-depth discussion of such models. A point to point MIMO communication system has $n_{t}$ transmit and $n_{r}$ receive antennas.
The MIMO channel is usually described by
\bea\label{a1}
y=\textbf{H}x+\textbf{n},
\eea
where $\textbf{H} \in \mathbb{C}^{n_{r}\times n_{t}}$ represents the channel matrix,
and the $(j,k)$ entry $h_{jk}$ of $\textbf{H}$ denotes the wireless fading coefficients between
the $j{\rm th}$ transmit and the $k{\rm th}$ receive antenna. Here $x\in\mathbb{C}^{n_{t}}$ and $y\in \mathbb{C}^{n_{r}}$
are the input and output signal vector, and $\textbf{n}_{n_{r}\times {\rm 1}} \in \mathbb{C}^{n_{r}}$ is a complex
Gaussian distributed noise vector. Under assumption of no correlations between the input and output antennas,
$\textbf{H}$ is modeled with a complex Gaussian distribution with independently and identically distributed entries
$h_{jk}$, so$\textbf{H} \sim \mathcal{CN}(0,{\rm I_{n_{r}}})$. The input signal vector $x$ are chosen to meet a power constraint,
$$
{\rm tr}\textbf{Q}_x\leq P,\;\;(P>0),
$$
where $\textbf{Q}_x=xx*$ is covariance of a zero-mean circularly symmetric complex Gaussian distribution.
The common assumption one makes is that
$$
\textbf{Q}_x=\frac{P}{n_t}\textbf{I}_{n_t},
$$
which one interprets as sending independent complex Gaussian signals from each transmit antenna, each with power $P/n_t.$ The capacity of a communication link, due to Shannon, is determined by the ``mutual information" between the transmitter and receiver,
with the constraint given above, the mutual information reads,
$$
I(x;y,\textbf{H})=\ln\det(\textbf{I}_{n_t}+\textbf{HH}^{\dag}/t),\;\;\;t:=n_{t}/P.
$$
To proceed further, recall the notations in \cite{YangMcky2012},
$$
m={\rm max}\{n_{r},n_{t}\}, \quad n={\rm min}\{n_{r},n_{t}\}, \quad \alpha=m-n.
$$
Let $\textbf{W}$ be the complex Wishart matrix
\bea\label{a4}
\textbf{W}=\left\{
\begin{array}{cc}
{\textbf{HH}^{\dag}}, &n_{r} < n_{t},\\
{\textbf{H}^{\dag}{\textbf{H}}}, &n_{r} \geq n_{t},
\end{array}
\quad {\rm and} \quad n={\rm dimension}\{\textbf {W}\}.
\right.
\eea
If ${\{x_j}\}_{j=1}^{n}$ are the positive eigenvalues of $\textbf{W},$
then the joint probability density, up to a positive constant, reads
\bea\label{a3}
p(x_1,x_2,\dots ,x_n)dx_1\dots dx_n\propto \prod_{1\leq j<k\leq n}(x_k-x_j)^2\prod_{l=1} x_l^{\alpha}{\rm e}^{-x_l}dx_l,
\eea
where $x_j\in (0,\infty),\;j=1,\dots ,n.$
The outage probability, see for example \cite{ZhengTse2003}, describes an outage event, where communication at rate $R$
becomes impossible, reads
$$
{\rm P_{{\rm out}}(R)}={\rm Prob}(I(x; y, {\textbf{H}})<R).
$$
\par
The moment generating function (MGF) of outage capacity can be expressed as
\bea\label{a6}
\mathcal{M}(\lambda)&=&E_{\textbf{H}}\left({\rm e}^{\lambda I(x; y, {\textbf{H}})}\right) \nonumber\\
&=&\frac{t^{-n\lambda}}{Z_n}\frac{1}{n!}\int_{\mathbb{R}_{+}^{n}}\prod_{1\leq j<k\leq n}(x_k-x_j)^2\prod_{l=1}^{n}(t+x_{l})^{\lambda}x_{l}^{\alpha}e^{-x_{l}}dx_l
\eea
where $Z_n$ is a constant, chosen so that $\mathcal{M}(0)=1$. The cumulant generating function (CGF) has a power series expansion around $\lambda =0$,
\bea\label{a3}
{\rm ln}{\mathcal{M}(\lambda)}=\sum_{\ell=1}^{\infty}\kappa_{\ell}\frac{\lambda^{\ell}}{\ell!},
\eea
where  $\kappa_{\ell}$ are the cumulants. For convenience, we look at an object allied to ${\cal M}(\lambda);$
\begin{align}\label{Q1}
\mathbf{M}_{n}(t; \alpha, \lambda):&=\mathcal{M}(\lambda)t^{n\lambda}\nonumber\\
&=\frac{\frac{1}{n!}\int_{{\mathbb{R}}_{+}^{n}}\prod_{1\leq j<k\leq n}(x_{j}-x_{k})^2\prod_{\ell=1}^{n}\left(t+x_{\ell}\right)^{\lambda}x_{\ell}^{\alpha}
e^{-x_{\ell}}dx_{\ell}}{\frac{1}{n!}\int_{{\mathbb{R}}_{+}^{n}}\prod_{1\leq j<k\leq n}(x_{j}-x_{k})^2\prod_{\ell=1}^{n}x_{\ell}^{\alpha+\lambda}
e^{-x_{\ell}}dx_{\ell}}=\frac{D_{n}(t; \alpha, \lambda)}{D_{n}(0; \alpha, \lambda)},
\end{align}
where
\begin{equation}\label{Q2}
D_{n}(t; \alpha, \lambda)=\frac{1}{n!}\int_{{\mathbb{R}}_{+}^{n}}\prod_{j<k}(x_{j}-x_{k})^2\prod_{\ell=1}^{n}
\left(t+x_{\ell}\right)^{\lambda}x_{\ell}^{\alpha}
e^{-x_{\ell}}dx_{\ell},
\end{equation}
is the Hankel determinant, generated from the deformed Laguerre weight on $(0, \infty )$, defined by,
\begin{equation}\label{AA1}
w(x; t, \alpha,\lambda):=(x+t)^{\lambda}x^{\alpha}e^{-x},\;\;t>0,\;\;\alpha>-1,\; x\geq 0.
\end{equation}
Note that
$D_{n}(0; \alpha, \lambda)$ has a closed form expression,
\begin{equation}\label{Q15A}
D_{n}(0; \alpha, \lambda)=\frac{G(n+1)G(n+\lambda+\alpha+1)}{G(\lambda+\alpha+1)},
\end{equation}
with $G(z)$ is the Barnes $G$-function, and satisfies the functional equation $G(z+1)=\Gamma(z)G(z)$, $z\in \mathbb{C}\cup\{\infty\},$ the initial condition being $G(1)=1.$ See Voros \cite{Voros} for a description of $G(z).$
\par
The Painlev\'e equations that are involved in this paper are $P_{III}(\alpha, \beta, \gamma, \delta)$
\begin{equation}w''={\frac{(w')^2}{w}}-{\frac{w'}{z}}+{\frac{\alpha w^2+\beta}{z}}+\gamma w^3+{\frac{\delta}{w}};\end{equation}
and $P_{V}(\alpha, \beta, \gamma, \delta )$ given by
\begin{equation}w''={\frac{3w-1}{2w(w-1)}}(w')^2 -{\frac{w'}{z}}+{\frac{(w-1)^2}{z^2}}\Bigl( \alpha w+{\frac{\beta}{w}}\Bigr) +{\frac{\gamma w}{z}}+{\frac{\delta w(w+1)}{w-1}}.\end{equation} 

In subsequent results, these appear in  $\sigma$ form, as in the isomonodromic deformation theory of Jimbo and Miwa \cite{JimboMiwa1981}. We recall Theorem 1 in \cite{YangMcky2012}; that the log-derivative of the
Hankel determinant satisfies a particular $\sigma$-form of a Painlev\'e. 
\begin{theorem}
The logarithmic derivative of the Hankel determinant in (\ref{Q2}), associated with the deformed Laguerre weight (\ref{AA1}),
\begin{align}\label{Q3}
H_n(t):=t\frac{d}{dt}\ln{\frac{D_{n}(t; \alpha, \lambda)}{D_{n}(0; \alpha, \lambda)}},
\end{align}
satisfies
\begin{equation}\label{Q4}
(tH_{n}''(t))^{2}=\left[tH_{n}'-H_{n}+n\lambda+(2n+\alpha+\lambda)H_{n}'\right]^{2}
-4\left(tH_{n}'-H_{n}+\delta_{n}\right)\left[(H_{n}')^{2}+\lambda{H_{n}'}\right],
\end{equation}
with $\delta_{n}:=n(n+\alpha+\lambda).$ If $\sigma(t)=H_{n}(t)-n\lambda,$ then $\sigma(t)$ satisfies a version of the Jimbo--Miwa--Okamoto \cite{JimboMiwa1981, Okamoto1981} $\sigma$-form of $P_V,$
\begin{equation*}
(t\sigma'')^{2}=(\sigma-t\sigma'+2\sigma'^{2}+(v_{0}+v_{1}+v_{2}+v_{3})\sigma')^{2}-4(v_{0}+\sigma')(v_{1}+\sigma')(v_{2}+\sigma')(v_{3}+\sigma'),
\end{equation*}
with parameters
$$
v_{0}=0,\quad v_{1}=-n,\quad v_{2}=-\alpha-n,\quad v_{3}=\lambda.
$$
\end{theorem}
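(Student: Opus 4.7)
The plan is to derive Theorem 1 through the ladder-operator machinery for the monic orthogonal polynomials $P_n(x;t)$ associated with the deformed Laguerre weight $w(x;t,\alpha,\lambda)$ on $(0,\infty)$. First I would recall that the Hankel determinant factors as $D_n(t;\alpha,\lambda) = \prod_{k=0}^{n-1} h_k(t)$, where $h_k(t)$ is the square norm of $P_k$; consequently
\[
H_n(t) = t\,\partial_t \ln\!\bigl(D_n(t;\alpha,\lambda)/D_n(0;\alpha,\lambda)\bigr) = t\sum_{k=0}^{n-1} \partial_t \ln h_k(t),
\]
so the task reduces to producing a closed-form expression for $H_n$ in terms of a small set of auxiliary quantities whose $t$-evolution is controlled by Toda-like relations.

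Second, I would set $v(x) = -\ln w(x;t,\alpha,\lambda) = x - \alpha \ln x - \lambda \ln(x+t)$ and write the lowering/raising relations
\[
P_n'(x) = -B_n(x) P_n(x) + A_n(x) P_{n-1}(x),
\]
with $A_n, B_n$ given by the standard contour-integral formulas involving $(v'(x)-v'(y))/(x-y)$ and the polynomials $P_n^2, P_n P_{n-1}$. For this weight $v'(x)$ has simple poles at $x=0$ and $x=-t$, so $A_n$ and $B_n$ depend on only four auxiliary residues, which I denote $R_n(t), R_n^\ast(t)$ (from the pole at $-t$) and $r_n(t), r_n^\ast(t)$ (from the pole at $0$). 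The compatibility conditions $(S_1), (S_2), (S_2')$ of the ladder-operator framework, together with the boundary terms produced by integration by parts against the Laguerre factor $x^\alpha e^{-x}$, then translate into a polynomial system linking these residues with the three-term recurrence coefficients $\alpha_n(t), \beta_n(t)$.

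Third, the Toda-like identity $t\,\partial_t \ln h_n(t) = -\lambda R_n(t)$ (obtained by differentiating $h_n = \int_0^\infty P_n^2 w\,dx$ in $t$) gives
\[
H_n(t) = -\lambda \sum_{k=0}^{n-1} R_k(t),
\]
and the telescoping sum rules from $(S_1)$ and $(S_2')$ collapse this into a rational expression in the single-index data $R_n, r_n, \alpha_n, \beta_n$. Combining with $\partial_t \alpha_n$ and $\partial_t \beta_n$ (also expressible through $R_n, r_n$) yields $H_n'(t)$ and $tH_n''(t)$ as rational functions of the auxiliary quantities. Eliminating $R_n$ and $r_n$ by taking resultants against the quadratic constraint from $(S_2)$ produces the second-order algebraic ODE (\ref{Q4}). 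The shift $\sigma = H_n - n\lambda$ then absorbs the constant term inside the first bracket and reorganises the right-hand side into the Jimbo--Miwa--Okamoto quartic form; matching coefficients of $\sigma'$ and $\sigma$ with the quartic $(v_0+\sigma')(v_1+\sigma')(v_2+\sigma')(v_3+\sigma')$ identifies $v_0=0,\ v_1=-n,\ v_2=-\alpha-n,\ v_3=\lambda$.

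The main obstacle will be the elimination step, which is algebraically heavy: several polynomial identities among $R_n, r_n, \alpha_n, \beta_n$ must be combined, and the cancellations that reveal the clean quartic structure are not apparent term-by-term. A secondary difficulty is tracking $n$ as an explicit parameter rather than a summation variable, since the double-scaling analysis in Section 3 will send $n\to\infty$ while keeping $v_3=\lambda$ and certain combinations of $v_1,v_2$ fixed; a derivation that does not respect this structure would be hard to re-use when passing to the \emph{lesser} $P_{III}$ limit.
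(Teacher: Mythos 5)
The paper does not prove this theorem itself; it is quoted verbatim from Theorem 1 of Chen and McKay \cite{YangMcky2012}, and your outline --- Hankel determinant as $\prod_k h_k$, ladder operators for the weight $x^{\alpha}e^{-x}(x+t)^{\lambda}$ with auxiliary residues at the poles $x=0$ and $x=-t$, the compatibility conditions $(S_1),(S_2),(S_2')$, the identity $\partial_t\ln h_n$ in terms of $R_n$ as in (\ref{A06}), and elimination down to the quartic $\sigma$-form --- is exactly the derivation carried out in that reference, so it is essentially the same approach as the paper's source. The only quibble is normalization: with the paper's definition (\ref{A06}) one gets $\partial_t\ln h_n=R_n$ and hence $H_n=t\sum_{k=0}^{n-1}R_k$, rather than your $-\lambda\sum_k R_k$, but this is a harmless difference of convention in how the residue is normalized.
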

We list below, for reference, two coupled Riccati equations, (2.37) and (2.38), \cite{YM2010}),
\begin{equation}\label{Q5}
t\:r_{n}'(t)=\frac{r_{n}^{2}-\lambda{r_{n}}}{R_{n}}-\frac{R_{n}}{1-R_{n}}\left[r_{n}
\left(2n+\alpha+\lambda\right)+n\left(n+\alpha\right)+\frac{r_{n}^{2}-\lambda{r_{n}}}{R_{n}}\right],
\end{equation}
\begin{equation}\label{Q6}
t\:R_{n}'(t)=2r_{n}-\lambda+\left[2n+\alpha+\lambda+\left(1-R_{n}\right)t\right]R_{n}.
\end{equation}
where $r_{n}(t)$ and $R_{n}(t)$ are expressed in terms of the monic polynomials $P_{n}(x;t)$ are orthogonal with respect
to (\ref{AA1}) and can be found in \cite{YangMcky2012}:
\bea\label{A06}
R_{n}(t):=\frac{\lambda}{h_{n}(t)}\int_{0}^{\infty}\frac{(P_{n}(x))^{2}}{x+t}w(x; t, \alpha,\lambda)dx,
\eea
\bea\label{A07}
r_{n}(t):=\frac{\lambda}{h_{n-1}(t)}\int_{0}^{\infty}\frac{P_{n}(x)P_{n-1}(x)}{x+t}w(x; t, \alpha,\lambda)dx.
\eea
\\
Here $h_{n}(t)$ is the square of the $L^{2}$ norm:
\begin{equation*}
h_{n}(t):=\int_{0}^{\infty} (P_n(x))^2\;w(x;t,\al,\lambda)\;dx.
\end{equation*}
To ease the notations we do not indicate here that the orthogonal polynomials also depend upon $t,\alpha$ and $\lambda.$ Letting
\begin{equation}\label{Q7}
R_{n}(t)=1+\frac{1}{y(t)-1},\quad  {\rm or}\quad  y(t)=1+\frac{1}{R_{n}(t)-1},
\end{equation}
followed by substituting  (\ref{Q6}) and (\ref{Q7}) into (\ref{Q5}), we see that $y(t)$ satisfies
the following \par
\noindent $P_{V}(\frac{\alpha^{2}}{2},-\frac{\lambda^{2}}{2},2n+1+\alpha+\lambda,-\frac{1}{2} ),$ namely,
\begin{equation}\label{Q8}
y''(t)=\frac{3y-1}{2y(y-1)}\left(y'(t)\right)^{2}-\frac{y'(t)}{t}+\frac{(y-1)^{2}}{2t^{2}}
\left(\alpha^{2}y-\frac{\lambda^{2}}{y}\right)+\frac{(2n+1+\alpha+\lambda)y}{t}-\frac{y(y+1)}{2(y-1)}.
\end{equation}
\par
One can make a change of variable $t=e^z$, and obtain the modified $\tilde P_V$ equation, with the property that all local solutions can analytically continued to single-valued meromorphic solutions in the complex plane by Theorem 5.1 of \cite{GLS2002}. Combining ($(235),$ $(242)$, \cite{YangMcky2012}), (\ref{Q6}) and (\ref{Q7}), we see that
$H_{n}(t)$ 
 can be expressed in terms $y(t)$ and $y'(t)$ as,
\begin{equation}\label{Q9}
H_{n}(t)=\frac{t^2(y')^{2}}{4y(y-1)^{2}}+\frac{nty}{y-1}-\frac{\left(\alpha{y}+\lambda\right)^{2}}{4y}
-\frac{t^{2}y}{4(y-1)^{2}}+\frac{t(\alpha{y}+\lambda)}{2(y-1)}.
\end{equation}
With the aid of (\ref{Q1}), (\ref{Q3}) and (\ref{Q9}),
the logarithm of the MGF denoted by $\mathbf{M}_{n}(t; \alpha, \lambda)$ has the integral representation
\begin{align}\label{Q16}
\ln{\mathbf{M}}_{n}&(t; \alpha, \lambda)\nonumber\\
&=\int_{0}^{t}\left(\frac{x^2(y'(x))^{2}}{4y(x)(y(x)-1)^{2}}+\frac{nx\;y(x)}{y(x)-1}-\frac{\left(\alpha{y(x)}+\lambda\right)^{2}}{4y(x)}
-\frac{x^{2}y(x)}{4(y(x)-1)^{2}}+\frac{x(\alpha{y(x)}+\lambda)}{2(y(x)-1)}\right)\frac{dx}{x}.
\end{align}

\section{Double Scaling}
In this section, we investigate the behavior of $\mathbf{M}_{n}(t; \alpha, \lambda)$ by sending $n \rightarrow \infty$ and
$t\rightarrow 0,$ such that $s:=4nt=4n^2/P$ finite, such a double scaling scheme had been pointed out by Chen and McKay \cite{YangMcky2012}.
\\
For convenience, we first introduce two notations,
\begin{equation}\label{Q11}
g(s)=\lim_{n\rightarrow\infty} y\left(\frac{s}{4n}\right),
\end{equation}
\begin{equation}\label{Q10}
\mathcal{H}(s)=\lim_{n\rightarrow\infty} H_{n}\left(\frac{s}{4n}\right).
\end{equation}
The existence of these limits may be inferred from the finite $n$-version of Painlev\'e equations.
\begin{theorem}
Let $s=4nt,$ where $n\rightarrow\infty,$ $t\rightarrow 0,$ such that $s$ is finite. Then
$g(s)$ satisfies a particular Painlev\'{e} differential equation,
\begin{equation}\label{Q13}
\frac{d^2g}{ds^2}=\frac{3g-1}{2g(g-1)}\left(\frac{dg}{ds}\right)^2-\frac{1}{s}\frac{dg}{ds}+\frac{(g-1)^2}{2s^2}
\left(\alpha^2g-\frac{\lambda^2}{g}\right)+\frac{g}{2s},
\end{equation}
known as $P_{V}\left(\frac{\alpha^{2}}{2},-\frac{\lambda^{2}}{2},\frac{1}{2},0\right),$ subject to the initial
conditions $g(0)=-\frac{\lambda}{\alpha},\: g'(0)=\frac{\lambda}{2\alpha\left((\lambda+\alpha)^{2}-1\right)},$ and $\alpha >0,$ $\alpha+\lambda \notin \mathbb{Z}.$
\par
Moreover, $\mathcal{H}(s)$ satisfies
\begin{equation}\label{Q14}
\left(s\mathcal{H}''\right)^{2}=\mathcal{H}'\left(4\mathcal{H}'-1\right)\left(\mathcal{H}
-s\mathcal{H}'\right)+\frac{1}{16}\left[4\left(\alpha+\lambda\right)\mathcal{H}'-\lambda\right]^{2},
\end{equation}
with the initial conditions $\mathcal{H}(0)=0,$ $\mathcal{H}'(0)=\frac{\lambda}{4(\alpha+\lambda)}.$ Furthermore,
\begin{equation}\label{Q15}
\mathcal{H}(s)=\frac{(sg'(s))^{2}}{4g(s)(g(s)-1)^{2}}-\frac{\left(\lambda+\alpha{g(s)}\right)^{2}}{4g(s)}+\frac{sg(s)}{4(g(s)-1)}.
\end{equation}
\end{theorem}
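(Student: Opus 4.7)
\bigskip

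\noindent\textbf{Proof proposal.}
The plan is to carry the double scaling directly through the three identities already at our disposal: the classical $P_V$ (\ref{Q8}) satisfied by $y(t)$, the $\sigma$-form (\ref{Q4}) satisfied by $H_n(t)$, and the explicit Okamoto-type representation (\ref{Q9}) of $H_n$ in terms of $y$ and $y'$. The key chain-rule relations are immediate: setting $t=s/(4n)$, $g(s)=y(s/(4n))$, $\mathcal{H}(s)=H_n(s/(4n))$, one has
\begin{equation*}
y'(t)=4n\,g'(s),\qquad y''(t)=16n^2\,g''(s),\qquad H_n'(t)=4n\,\mathcal{H}'(s),\qquad H_n''(t)=16n^2\,\mathcal{H}''(s),
\end{equation*}
and, crucially, $t\,y'(t)=s\,g'(s)$ and $tH_n'(t)=s\mathcal{H}'(s)$. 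I will take the existence of the limits (\ref{Q11}) and (\ref{Q10}) for granted, as stated just before the theorem.

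Step one is to substitute into (\ref{Q8}) and divide both sides by $16n^2$. Every term is either scale-invariant under the substitution or gets cleanly absorbed: the two singular terms $\frac{3y-1}{2y(y-1)}(y')^2$ and $(y-1)^2(\alpha^2 y-\lambda^2/y)/(2t^2)$ are invariant, the term $y'(t)/t$ becomes $g'(s)/s$, the forcing $(2n+1+\alpha+\lambda)y/t$ becomes $(2n+1+\alpha+\lambda)g/(4ns)\to g/(2s)$, and the purely algebraic $y(y+1)/(2(y-1))$ is damped by the $1/(16n^2)$ factor and vanishes. This produces (\ref{Q13}), i.e.\ $P_V(\alpha^2/2,-\lambda^2/2,1/2,0)$. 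Step two is to perform the same substitution in (\ref{Q9}); the first three terms are scale-invariant, while $t^2 y/[4(y-1)^2]$ and $t(\alpha y+\lambda)/[2(y-1)]$ carry explicit factors of $1/n^2$ and $1/n$ respectively and drop out, leaving precisely (\ref{Q15}).

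Step three, the derivation of the $\sigma$-form (\ref{Q14}), is the main obstacle, because (\ref{Q4}) is quadratic in $H_n'$ and multiplying out produces contributions of orders $n^4$, $n^3$ and $n^2$ that must cancel pairwise between the two summands on the right. Writing $A:=s\mathcal{H}'-\mathcal{H}$ and grouping the expansions, the first bracket equals $A+n[\lambda+4(\alpha+\lambda)\mathcal{H}']+8n^2\mathcal{H}'$, whose square contributes $64n^4(\mathcal{H}')^2$ at leading order and $16n^3\mathcal{H}'[\lambda+4(\alpha+\lambda)\mathcal{H}']$ at the next, exactly matching the leading and subleading expansions of $4(tH_n'-H_n+\delta_n)[(H_n')^2+\lambda H_n']=4(A+n(\alpha+\lambda)+n^2)(16n^2(\mathcal{H}')^2+4n\lambda\mathcal{H}')$. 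The surviving $n^2$ terms, after division of both sides by $16n^2$ and recognition of the perfect square $\lambda^2-8\lambda(\alpha+\lambda)\mathcal{H}'+16(\alpha+\lambda)^2(\mathcal{H}')^2=[4(\alpha+\lambda)\mathcal{H}'-\lambda]^2$, give exactly (\ref{Q14}); this bookkeeping is where care is needed.

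Step four addresses the initial data. That $\mathcal{H}(0)=0$ follows immediately from $H_n(0)=0$, which is the content of (\ref{Q3}) at $t=0$. For $\mathcal{H}'(0)$ I will differentiate $\ln M_n$ at $t=0$: from $M_n=D_n(t;\alpha,\lambda)/D_n(0;\alpha,\lambda)$ one obtains $\frac{d}{dt}\ln M_n\big|_{t=0}=\lambda\,E\bigl[\sum_\ell 1/x_\ell\bigr]$, where the expectation is taken under the undeformed LUE with weight $x^{\alpha+\lambda}e^{-x}$, giving $n\lambda/(\alpha+\lambda)$; the chain rule then yields $\mathcal{H}'(0)=\lambda/[4(\alpha+\lambda)]$. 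The condition $g(0)=-\lambda/\alpha$ is forced by (\ref{Q15}) at $s=0$ together with $\mathcal{H}(0)=0$, since the right-hand side reduces to $-(\lambda+\alpha g(0))^2/(4g(0))$. Finally, $g'(0)$ is obtained by substituting the ansatz $g(s)=-\lambda/\alpha+cs+O(s^2)$ into (\ref{Q13}) and balancing the $1/s$ terms; Taylor expansion of $\alpha^2 g-\lambda^2/g$ around $g_0=-\lambda/\alpha$ is linear with slope $2\alpha^2$, and this produces the single linear equation $c[(\alpha+\lambda)^2-1]=\lambda/(2\alpha)$, yielding the stated value and clarifying why the hypothesis $\alpha+\lambda\notin\mathbb{Z}$ is imposed.
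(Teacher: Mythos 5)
Your proposal is correct and follows essentially the same route as the paper, which simply substitutes $t=s/(4n)$ into (\ref{Q8}), (\ref{Q4}) and (\ref{Q9}) and takes $n\to\infty$; your order-by-order cancellation of the $n^4$ and $n^3$ terms in the $\sigma$-form and your derivation of the initial data (which the paper's one-line proof omits entirely) check out, in particular the balance $c[(\alpha+\lambda)^2-1]=\lambda/(2\alpha)$ and the value $\mathcal{H}'(0)=\lambda/[4(\alpha+\lambda)]$ agree with the expansions (\ref{aa28})--(\ref{aa29}) given later in the paper.
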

\begin{proof}
With the aid of $t=s/4n,$ the equations 
follow by  taking the limit as $n\rightarrow \infty$ of (\ref{Q8}), (\ref{Q4}) and (\ref{Q9}).
\begin{rem1}
Letting $\mathcal{H}(s)=\sigma(4s)$ and combining with equation (\ref{Q14}), then $\sigma(4s)$ satisfies a $\sigma$-form ((3.13), \cite{Jimbo1982}) of $P_{III},$
$$
(s\sigma'')^2=4\sigma'(\sigma'-1)(\sigma-t\sigma')+\left(\theta_{\infty}\sigma'-\frac{\theta_{0}+\theta_{\infty}}{2}\right)^{2},
$$
with parameters $\theta_{0}=\lambda-\alpha,$ $\theta_{\infty}=\lambda+\alpha.$
\\
If we denote
\begin{equation}\label{Q12}
\mathbf{M}(s; \alpha, \lambda):=\lim_{n\rightarrow\infty}\mathbf{M}_{n}\left(\frac{s}{4n}; \alpha, \lambda\right)=\lim_{n\rightarrow\infty}{\frac{D_{n}(\frac{s}{4n}; \alpha, \lambda)}{D_{n}(0; \alpha, \lambda)}},
\end{equation}
with the initial condition $\mathbf{M}(0; \alpha, \lambda)=1$, then its logarithm has an integral representation in terms of
$P_{V}\left(\frac{\alpha^{2}}{2},-\frac{\lambda^{2}}{2},\frac{1}{2},0\right).$
With the aid of (\ref{Q1}), (\ref{Q3}) and (\ref{Q15}), one finds,
\begin{equation}\label{Q17}
\ln\mathbf{M}(s; \alpha, \lambda)=\int_{0}^{s}\left(\frac{(xg'(x))^{2}}{4g(x)(g(x)-1)^{2}}-\frac{\left(\lambda+\alpha{g(x)}\right)^{2}}{4g(x)}
+\frac{xg(x)}{4(g(x)-1)}\right)\frac{dx}{x}.
\end{equation}
\end{rem1}

\end{proof}

\subsection{Small $s$ and large $s$ behavior of the $\mathbf{M}(s; \alpha, \lambda)$.}
In this subsection we obtain formal expansion in small $s$ and large $s$ of $\mathbf{M}(s; \alpha, \lambda)$.
\\
For small and positive $s,$ let the expansion of
$P_{V}\left(\frac{\alpha^{2}}{2},-\frac{\lambda^{2}}{2},\frac{1}{2},0\right)$ be
of the form $\sum_{j=0}^{\infty}a_{j}s^{j},$ with $g(0)=-\frac{\lambda}{\alpha},$  with $g'(0)=\frac{\lambda}{2\alpha((\lambda+\alpha)^{2}-1)}.$ Substituting these into (\ref{Q13})
 we find,
\begin{equation}\label{Q22}
g(s)=-\frac{\lambda}{\alpha}+\frac{\lambda}{2\alpha((\lambda+\alpha)^{2}-1)}s
+\frac{\lambda[\lambda^{3}+\alpha{\lambda^{2}}-5\lambda(\alpha^{2}-1)+3\alpha(\alpha^{2}-1)]}
{8\alpha(\lambda+\alpha)[(\lambda+\alpha)^{2}-1]^{2}[(\lambda+\alpha)^{2}-4]}s^{2}+\mathcal{O}(s^{3}),
\end{equation}
where $\lambda+\alpha \notin \mathbb{Z}.$

In Theorem 37.5 of \cite{GLS2002}, the authors consider algebraic branch points of solutions of $P_V$ and conclude that under certain circumstances the branching order is two. Hence, for large and positive $s,$ substitute the series expansion $\sum_{j=1}^{\infty}b_{j}s^{-\frac{j}{2}}$ into (\ref{Q13}).
A straightforward computation gives,
\begin{equation}\label{Q19}
g(s)=-\lambda{s^{-\frac{1}{2}}}-\lambda^{2}s^{-1}-\frac{\lambda(8\lambda^{2}-
4\alpha^{2}+1)}{8}s^{-\frac{3}{2}}-\frac{\lambda^{2}(2\lambda^{2}-4\alpha^{2}+1)}{2}s^{-2}+\mathcal{O}\left(s^{-\frac{5}{2}}\right).
\end{equation}
\par
The theorem below gives $\mathbf{M}(s; \alpha, \lambda)$ for small $s$ and large $s.$
\begin{theorem}
The asymptotic expansions of $\mathbf{M}(s; \alpha, \lambda)$ for small and large $s,$ are:\\
\begin{enumerate}\item For small $s,$ with $\alpha+\lambda \notin \mathbb{Z}$,
\begin{align}\label{Q20}
\ln\mathbf{M}(s; \alpha, \lambda)=&\frac{\lambda}{4(\alpha+\lambda)}s-\frac{\alpha{\lambda}}{32(\alpha+\lambda)^{2}[(\alpha+\lambda)^{2}-1]}s^{2}\nonumber\\
&+\frac{\alpha{\lambda}(\alpha-\lambda)}{96(\alpha+\lambda)^{3}[(\alpha+\lambda)^{2}-1][(\alpha+\lambda)^{2}-4]}s^{3}+\mathcal{O}(s^{4}).
\end{align}
\item For large $s,$
\begin{align}\label{Q21}
\ln\mathbf{M}(s; \alpha, \lambda)=&c(\alpha, \lambda)+\lambda{s^{\frac{1}{2}}}-\frac{\lambda(2\alpha+\lambda)}{4}{\ln}s-\frac{\lambda(4\alpha^{2}-1)}{8}s^{-\frac{1}{2}}-\frac{\lambda^{2}(4\alpha^{2}-1)}{16}s^{-1}\nonumber\\
&+\frac{\lambda(4\alpha^{2}-1)(4\alpha^{2}-9-16\lambda^{2})}{384}s^{-\frac{3}{2}}+\frac{\lambda^{2}(4\alpha^{2}-1)(4\alpha^{2}-4\lambda^{2}-9)}{128}s^{-2}\nonumber\\
&{}\quad{}-\frac{\lambda(4\alpha^{2}-1)[225+16\alpha^{4}+720\lambda^{2}+128\lambda^{4}-8\alpha^{2}(17+40\lambda^{2})]}{5120}s^{-\frac{5}{2}}+\mathcal{O}\left(s^{-3}\right),
\end{align}
where $c(\alpha, \lambda)$ is an integration constant, independent of $s.$
\end{enumerate}
\end{theorem}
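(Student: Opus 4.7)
The whole argument rests on the integral representation (\ref{Q17}) for $\ln\mathbf{M}(s;\alpha,\lambda)$ combined with the two formal series (\ref{Q22}) and (\ref{Q19}) for $g(s)$. The plan is to substitute each of these expansions into the integrand
\[
\mathcal{I}(x) := \frac{1}{x}\left(\frac{(xg'(x))^{2}}{4g(x)(g(x)-1)^{2}}-\frac{(\lambda+\alpha g(x))^{2}}{4g(x)}+\frac{xg(x)}{4(g(x)-1)}\right),
\]
expand $\mathcal{I}(x)$ as a formal series in the appropriate local variable, and integrate term by term.

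\textbf{Small $s$.} First I would verify that $\mathcal{I}(x)$ extends analytically at $x=0$. The initial condition $g(0)=-\lambda/\alpha$ is exactly what is needed to make $\lambda+\alpha g(x)$ vanish at $x=0$, so the apparent $1/x$ singularity in the middle summand is removable; the first summand equals $x(g')^{2}/(4g(g-1)^{2})=O(x)$; and the third reduces at $x=0$ to $g(0)/(4(g(0)-1))=\lambda/(4(\alpha+\lambda))$, supplying the leading coefficient in (\ref{Q20}). Substituting (\ref{Q22}) and carrying out the algebra yields a Taylor series for $\mathcal{I}(x)$ about $x=0$; term-by-term integration produces (\ref{Q20}). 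Bookkeeping shows that the three displayed coefficients of $g$ in (\ref{Q22}) are exactly what is needed to obtain the $s$, $s^{2}$, $s^{3}$ terms of $\ln\mathbf{M}$.

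\textbf{Large $s$.} Here (\ref{Q19}) is a formal expansion in powers of $x^{-1/2}$. Substituting it into $\mathcal{I}(x)$ and combining the contributions of the three summands gives, for large $x$, an expansion of the form
\[
\mathcal{I}(x) = \frac{\lambda}{2}\,x^{-1/2}-\frac{\lambda(2\alpha+\lambda)}{4}\,x^{-1}+\sum_{k\ge 3}\gamma_{k}\,x^{-k/2}.
\]
The $x^{-1/2}$ and $x^{-1}$ coefficients both come from summing the large-$x$ asymptotics of the middle and third pieces of $\mathcal{I}$ (the first piece contributes only at order $x^{-3/2}$ and beyond). Integrating from a fixed large reference point up to $s$, the $x^{-1}$ term produces $-\tfrac{\lambda(2\alpha+\lambda)}{4}\ln s$, the $x^{-1/2}$ term produces $\lambda s^{1/2}$, and the remaining $\gamma_{k}$ deliver the displayed powers $s^{-j/2}$ in (\ref{Q21}). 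What cannot be recovered from this purely local computation is the accumulated integration constant, which is named $c(\alpha,\lambda)$; by definition it is the $s\to\infty$ limit of $\ln\mathbf{M}(s;\alpha,\lambda)$ minus the elementary terms exhibited in (\ref{Q21}).

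\textbf{Main obstacle.} Once one trusts the formal series (\ref{Q22}) and (\ref{Q19}), everything reduces to routine expansion and integration. The genuine difficulty is the constant $c(\alpha,\lambda)$: no amount of local Painlev\'e asymptotics can pin it down, since shifting $\ln\mathbf{M}$ by a constant does not change any of the coefficients in (\ref{Q21}). Its evaluation requires an independent input, namely the large-$n$ asymptotics of the orthogonal polynomials for the deformed Laguerre weight (\ref{AA1}) together with the closed form (\ref{Q15A}) for $D_{n}(0;\alpha,\lambda)$; this is precisely the connection-problem analysis that is carried out later in the paper. A secondary issue is to confirm that the formal series (\ref{Q19}) really represents the branch of $P_{V}\bigl(\tfrac{\alpha^{2}}{2},-\tfrac{\lambda^{2}}{2},\tfrac12,0\bigr)$ that is selected by the double-scaling limit, rather than some other local solution; the algebraic branch-point result of Theorem~37.5 of \cite{GLS2002} invoked just before (\ref{Q19}) makes this plausible, but a rigorous argument requires matching to the small-$s$ data through the $\sigma$-form (\ref{Q14}).
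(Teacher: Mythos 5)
Your proposal is correct and follows essentially the same route as the paper: the authors likewise obtain (\ref{Q20}) by inserting the small-$s$ series (\ref{Q22}) into the integral representation (\ref{Q17}) and integrating term by term, and obtain (\ref{Q21}) by the analogous substitution of (\ref{Q19}), with $c(\alpha,\lambda)$ left as an undetermined integration constant to be fixed later via the orthogonal-polynomial asymptotics. Your added remarks on the removable singularity at $x=0$ and on why the constant cannot be recovered from the local expansion are accurate elaborations of what the paper leaves implicit.
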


\begin{proof}
For small $s,$ we insert (\ref{Q22}) into (\ref{Q17}), then (\ref{Q20}) follows immediately. Similarly, for large $s,$ after substituting (\ref{Q19}) into (\ref{Q17}) followed by straightforward computations, (\ref{Q21}) is found. The hypothesis $\lambda+\alpha\notin \mathbb{Z}$ is appropriate in view of the form of the coefficients.
\end{proof}

Any monic polynomials orthogonal with respect to some weight has the Heine multiple-integral representation.
With the deformed Laguerre weight (\ref{AA1}), one finds,
\begin{align}\label{Q23}
P_{n}&(z; t, \alpha, \lambda)\nonumber\\
&=\frac{1}{D_{n}(t,\alpha,\lambda)}\frac{1}{n!}\int_{(0,\infty)^{n}}\prod_{q=1}^{n}(z-x_{q})\prod_{1\leq j<k\leq n}\left(x_{j}-x_{k}\right)^{2}\prod_{\ell=1}^{n}w(x_{\ell};t,\alpha,\lambda)dx_{\ell}.
\end{align}
Note $$P_n(z;t,\alpha,\lambda)=z^n+ \textsf{p}_1(n,t,\:\alpha,\:\lambda)\;z^{n-1}+\dots +P_n(0;t,\;\alpha,\:\lambda).$$
\\
It is immediate that the constant term of the monic orthogonal polynomial can be expressed by the quotient of Hankel determinants, that is,
\begin{equation}\label{Q24}
(-1)^{n}P_{n}\left(0;t,\alpha,\lambda\right)=\frac{D_{n}\left(t; \alpha+1,\lambda\right)}{D_{n}(t; \alpha,\lambda)}.
\end{equation}
If $t=0,$ then the constant term of the monic Laguerre polynomials, reads ((5.1.7), \cite{Szego1939}), with suitable modification,
\begin{equation}\label{QQ1}
(-1)^{n}P_{n}\left(0; 0, \alpha, \lambda\right)=\frac{\Gamma(n+1+\alpha+\lambda)}{\Gamma(1+\alpha+\lambda)}.
\end{equation}
From the above considerations, we state below a corollary which will ultimately help us to determine
the constant $c(\alpha, \lambda)$. 
\begin{cor}
Let $n \rightarrow \infty,$ $t\rightarrow 0,$ such that $s=4nt$ finite. Then
\begin{align}\label{Q25}
\lim_{n\rightarrow \infty}\frac{(-1)^{n}P_{n}\left(0;\frac{s}{4n}; \alpha,\lambda\right)}{(-1)^{n}P_{n}\left(0;0,\alpha,\lambda\right)}&=\frac{\mathbf{M}(s; \alpha+1,\lambda)}{\mathbf{M}(s; \alpha,\lambda)}\nonumber\\
&=\exp\Bigl(c_{1}(\alpha, \lambda)-\frac{\lambda}{2}\ln{s}-\frac{\lambda(1+2\alpha)}{2}s^{-\frac{1}{2}}-\frac{(1+2\alpha)\lambda^{2}}{4}s^{-1}\nonumber\\
&\quad +\frac{\lambda(1+2\alpha)(4\alpha(1+\alpha)-3-8\lambda^{2})}{48}s^{-\frac{3}{2}}+\mathcal{O}\left(s^{-2}\right)\Bigr),\end{align}
where $c_{1}(\alpha, \lambda)$ is another constant independent of $s$,
\begin{equation}\label{Q26}
c_{1}(\alpha, \lambda)=c(\alpha+1, \lambda)-c(\alpha, \lambda),
\end{equation}
and $c(\alpha, \lambda)$ is the constant in (\ref{Q21}).
\end{cor}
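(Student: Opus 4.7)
The plan is to identify the ratio on the left-hand side with a ratio of two moment generating functions, and then read off the asymptotics by subtracting the expansion (\ref{Q21}) at $\alpha+1$ from the expansion at $\alpha$.

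First I would apply (\ref{Q24}) twice, once at the argument $t=s/(4n)$ and once at $t=0$, to write
\begin{equation*}
\frac{(-1)^{n}P_{n}(0;s/(4n),\alpha,\lambda)}{(-1)^{n}P_{n}(0;0,\alpha,\lambda)}
=\frac{D_{n}(s/(4n);\alpha+1,\lambda)/D_{n}(s/(4n);\alpha,\lambda)}{D_{n}(0;\alpha+1,\lambda)/D_{n}(0;\alpha,\lambda)}
=\frac{\mathbf{M}_{n}(s/(4n);\alpha+1,\lambda)}{\mathbf{M}_{n}(s/(4n);\alpha,\lambda)},
\end{equation*}
where the last equality uses definition (\ref{Q1}). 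Then passing $n\to\infty$ with $s=4nt$ fixed, definition (\ref{Q12}) of $\mathbf{M}(s;\alpha,\lambda)$ yields the first equality in the corollary. Note that (\ref{QQ1}) is not strictly needed to obtain the ratio of $\mathbf{M}$'s, but it serves as an explicit sanity check on the normalisation at $t=0$ and is consistent with the Barnes-$G$ evaluation (\ref{Q15A}) since $D_n(0;\alpha+1,\lambda)/D_n(0;\alpha,\lambda)$ telescopes via $G(z+1)=\Gamma(z)G(z)$.

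Next, I would take the logarithm and write
\begin{equation*}
\ln\frac{\mathbf{M}(s;\alpha+1,\lambda)}{\mathbf{M}(s;\alpha,\lambda)}=\ln\mathbf{M}(s;\alpha+1,\lambda)-\ln\mathbf{M}(s;\alpha,\lambda),
\end{equation*}
and insert the expansion (\ref{Q21}) into each term. Since (\ref{Q21}) holds for any admissible $\alpha$, the constant piece contributes $c(\alpha+1,\lambda)-c(\alpha,\lambda)$, which is exactly the quantity called $c_{1}(\alpha,\lambda)$ in (\ref{Q26}). The $\lambda s^{1/2}$ leading term is independent of $\alpha$ and cancels. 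The logarithm term contributes $-\tfrac{\lambda}{4}\bigl[(2(\alpha+1)+\lambda)-(2\alpha+\lambda)\bigr]\ln s=-\tfrac{\lambda}{2}\ln s$. For the inverse-power terms one uses the finite differences $4(\alpha+1)^{2}-4\alpha^{2}=4(2\alpha+1)$ and analogous identities for the cubic and quartic polynomials in $\alpha$ appearing at order $s^{-3/2}$, which reproduce the coefficients stated in (\ref{Q25}).

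The main (and essentially only) obstacle is the careful bookkeeping of the $\alpha\mapsto\alpha+1$ finite differences of the rational coefficients in (\ref{Q21}): while straightforward, one must verify that the shift in $\alpha$ leaves the integration constant as the single unknown, so that all non-constant terms in (\ref{Q25}) follow mechanically from (\ref{Q21}). A minor point to address is the validity of term-by-term subtraction of the asymptotic series, which follows from uniformity of (\ref{Q21}) in $\alpha$ on compact subsets avoiding $\alpha+\lambda\in\mathbb{Z}$, so the formal subtraction preserves the asymptotic character. The final identification of the constant difference with $c_{1}(\alpha,\lambda)$ then produces the corollary, and in particular explains why the knowledge of the polynomial ratio on the left will later be leveraged to fix $c(\alpha,\lambda)$ itself.
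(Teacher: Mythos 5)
Your proposal is correct and follows essentially the same route as the paper: apply (\ref{Q24}) to convert the ratio of constant terms into a ratio of Hankel determinants, pass to the limit to obtain $\mathbf{M}(s;\alpha+1,\lambda)/\mathbf{M}(s;\alpha,\lambda)$, and then subtract the large-$s$ expansion (\ref{Q21}) at $\alpha$ from that at $\alpha+1$. Your explicit finite-difference checks (e.g.\ $4(\alpha+1)^2-4\alpha^2=4(2\alpha+1)$ and the resulting coefficient $\lambda(1+2\alpha)(4\alpha(1+\alpha)-3-8\lambda^2)/48$ at order $s^{-3/2}$) reproduce exactly the coefficients in (\ref{Q25}), which the paper leaves as ``straightforward computations.''
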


\begin{proof}
From (\ref{Q24}), one finds,
\begin{equation*}
\lim_{n\rightarrow \infty}\frac{(-1)^{n}P_{n}\left(0;\frac{s}{4n},\alpha,\lambda\right)}{(-1)^{n}P_{n}\left(0;0,\alpha,\lambda\right)}=
\lim_{n\rightarrow\infty}\frac{D_{n}(s/4n; \alpha+1,\lambda)}{D_{n}(s/4n; \alpha, \lambda)}\frac{D_{n}(0; \alpha,\lambda)}{D_{n}(0; \alpha+1, \lambda)}
=\frac{\mathbf{M}(s; \alpha+1,\lambda)}{\mathbf{M}(s; \alpha,\lambda)}.
\end{equation*}
Inserting $\mathbf{M}(s; \alpha,\lambda),$ (\ref{Q21}), into the above equation, one finds through straightforward computations
the equations (\ref{Q25}) and (\ref{Q26}) are found.
\end{proof}
With the aid of the formulas derived from the Dyson's Coulomb Fluid; see (\cite{ChenLawrence1998, Dyson1962},\cite{ChenIsmail1997}, \cite{CC2015, CCF2015}), the constant $c_{1}(\alpha, \lambda)$ can also be determined. In particular, for the deformed Laguerre weight, let ${\rm v}(x):=-\ln{w(x; t, \alpha, \lambda)}=x-\lambda\ln{(x+t)}-\alpha{\ln{x}},$ and consequently,
\begin{equation}\label{Q27}
{\rm v}'(x)=1-\frac{\lambda}{x+t}-\frac{\alpha}{x}.
\end{equation}
The equilibrium density is supported on $(a,b)$ where the endpoints satisfy the following supplementary conditions, which can be found, for instance, in
\cite{ChenIsmail1997, ChenLawrence1998},
\begin{equation}
\int_{a}^{b}\frac{x{\rm v}'(x)}{\sqrt{(b-x)(x-a)}}dx=2\pi{n},
\end{equation}
\begin{equation}
\int_{a}^{b}\frac{{\rm v}'(x)}{\sqrt{(b-x)(x-a)}}dx=0.
\end{equation}
 With the aid of the identities in \cite{YangHaqMcky2013}, \cite{DMY2014} and \cite{GR2007}, one finds,
\begin{equation}\label{Q28}
\frac{\lambda}{\sqrt{(t+a)(t+b)}}+\frac{\alpha}{\sqrt{ab}}-1=0,
\end{equation}
and
\begin{equation}\label{Q29}
\frac{\lambda{t}}{\sqrt{(t+a)(t+b)}}+\frac{a+b}{2}-\alpha-\lambda=2n.
\end{equation}
Let $X:=\sqrt{ab}$ denote the geometric mean of the end points, $a,\:b$.
Eliminating the center of mass $(a+b)/2$ from  (\ref{Q28}) and (\ref{Q29}), we deduce that $X$ satisfies the following quintic equation
\begin{equation}\label{Q30}
\left[t^{2}+t\left(2(2n+\alpha+\lambda-t)+\frac{2\alpha{t}}{X}\right)+X^{2}\right]\left(1-\frac{\alpha}{X}\right)^{2}=\lambda^{2}.
\end{equation}
From the quintic, we give an estimate of $X,$ and this will help us to derive the constant $c_{1}(\alpha, \lambda)$
in (\ref{Q26}). To proceed further, first note that for large $n$, we can approximate  $P_{n}(z)$ by
\begin{equation}\label{Q34}
P_{n}(z) \sim \exp\left[-S_{1}(z)-S_{2}(z)\right], \quad z\notin [a,b],
\end{equation}
where

$$S_{1}(z)={\frac{1}{4}}\ln \Bigl[ {\frac{16(z-a)(z-b)}{(b-a)^2}}\Bigl(\frac{ \sqrt{z-a}-\sqrt{z-b}}{
\sqrt{z-a}+\sqrt{z-b}}\Bigr)^2\Bigr]$$
and
$$S_2(z)=-N \ln \Bigl({\frac{ \sqrt{z-a}+\sqrt{z-b}}{2}}\Bigr)^2 +\int_a^b {\frac{dy}{2\pi}} {\frac{{\rm v}(y)}{\sqrt{(b-y)(y-a)}}}\Bigl[ {\frac{\sqrt{(z-a)(z-b)}}{y-z}}+1\Bigr]$$
are given by (4.7) and (4.8) in \cite{ChenLawrence1998}. Here we adopt
steps similar to those in \cite{CC2015}, and state the approximation in the following theorem.
\begin{theorem}
If ${\rm v}(x)=x-\lambda\ln{(x+t)}-\alpha{\ln{x}},$ $x>0,$ $t>0,$ $\alpha>0,$ then the value at $z=0$ of $P_{n}(z;t,\alpha,\lambda)$ is given by
\begin{align}\label{Q35}
&(-1)^{n}P_{n}\left(0;t,\alpha,\lambda\right)\sim (-1)^{n}\exp\left[-S_{1}(0;t,\alpha,\lambda)-S_{2}(0;t,\alpha,\lambda)\right]\nonumber\\
&\sim n^{n+\alpha+\lambda+\frac{1}{2}}e^{-n}\nonumber\\
&\times\exp\left[-\left(\alpha+\frac{1}{2}\right)\ln{\alpha}+\alpha-\frac{\lambda}{2}\ln{(4nt)}-\frac{\lambda(1+2\alpha)}{2}
\left(4nt\right)^{-\frac{1}{2}}-\frac{(1+2\alpha)\lambda^{2}}{4}\left(4nt\right)^{-1}\right];
\end{align}
this asymptotic estimate is valid for $n\rightarrow\infty,$ $t\rightarrow 0,$ such that $4nt$ is finite and large.
\end{theorem}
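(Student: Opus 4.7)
The strategy is to evaluate the saddle-point approximation (\ref{Q34}) at $z=0$, using the potential ${\rm v}(y)=y-\la\ln(y+t)-\al\ln y$ together with the double-scaling asymptotics of the endpoints $a,b$ that follow from the supplementary conditions (\ref{Q28}), (\ref{Q29}) and the quintic (\ref{Q30}). First I would fix the endpoint asymptotics. In the regime $t=s/(4n)$ equation (\ref{Q29}) gives $a+b=4n+2(\al+\la)+o(1)$, while (\ref{Q28}) keeps $\sqrt{ab}$ bounded; a formal expansion of the quintic (\ref{Q30}) in $1/n$ then yields $X:=\sqrt{ab}=X_{0}(s,\al,\la)+O(1/n)$, where $X_{0}$ is a root of $(s+X_{0}^{2})(1-\al/X_{0})^{2}=\la^{2}$. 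Hence $a\sim X_{0}^{2}/(4n)\to 0$ and $b\sim 4n$, and these are the only facts about $a,b$ I will need in the sequel.

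Next I would compute $S_{1}(0)$ and $S_{2}(0)$ separately, using the branch convention $\sqrt{-a}=i\sqrt{a}$, $\sqrt{-b}=i\sqrt{b}$ so that the factors of $(-1)^{n}$ on the two sides of (\ref{Q34}) match. $S_{1}(0)$ is algebraic in $a,b$, reduces to elementary logarithms of $a$, $b$, $a+b$, and contributes the factor $n^{1/2}$ after inserting the endpoint asymptotics. For $S_{2}(0)$ I would split ${\rm v}(y)$ into the three pieces $y$, $-\al\ln y$, $-\la\ln(y+t)$ and evaluate each against the equilibrium weight $[\pi\sqrt{(b-y)(y-a)}]^{-1}$ and its $z=0$ Cauchy partner. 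For the first two pieces this is exactly the classical computation of \cite{ChenLawrence1998,CC2015}, based on the standard integrals
\begin{equation*}
\int_{a}^{b}\frac{dy}{\pi\sqrt{(b-y)(y-a)}}=1,\qquad \int_{a}^{b}\frac{\ln y\,dy}{\pi\sqrt{(b-y)(y-a)}}=2\ln\frac{\sqrt{a}+\sqrt{b}}{2},
\end{equation*}
and their $\sqrt{ab}/y$-analogues; combined with the $-n\ln((\sqrt{-a}+\sqrt{-b})/2)^{2}$ term they give the $\la$-independent prefactor $n^{n+\al+\tfrac{1}{2}}e^{-n}\exp[-(\al+\tfrac{1}{2})\ln\al+\al]$.

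The genuinely new ingredient is the $-\la\ln(y+t)$ piece. I would evaluate it using the closed form
\begin{equation*}
\int_{a}^{b}\frac{\ln(y+t)\,dy}{\pi\sqrt{(b-y)(y-a)}}=2\ln\frac{\sqrt{b+t}+\sqrt{a+t}}{2},
\end{equation*}
together with the analogue obtained by inserting $\sqrt{(z-a)(z-b)}/(y-z)\big|_{z=0}=-\sqrt{ab}/y$, which again has a closed form in $\sqrt{a},\sqrt{b},\sqrt{a+t},\sqrt{b+t}$. Substituting $a\sim X_{0}^{2}/(4n)$, $b\sim 4n$ and expanding in the small parameter $(4nt)^{-1/2}$ produces, after simplification, the factor $n^{\la}$ (combining with the previous paragraph to give the stated $n^{n+\al+\la+\tfrac{1}{2}}$) together with the $-\tfrac{\la}{2}\ln(4nt)$, $-\tfrac{\la(1+2\al)}{2}(4nt)^{-1/2}$ and $-\tfrac{(1+2\al)\la^{2}}{4}(4nt)^{-1}$ corrections of (\ref{Q35}).

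The main technical obstacle will be organising the cancellations between the three pieces: each produces spurious $\ln n$ and $\ln t$ terms individually, and only after using the supplementary conditions (\ref{Q28})--(\ref{Q29}) as \emph{exact} identities, not merely as asymptotic relations, do the separate logarithms combine into the single $\ln(4nt)$ of (\ref{Q35}). A subsidiary subtlety is the uniformity of (\ref{Q34}) near the shrinking left endpoint $a\to 0$: this is where the hypothesis that $4nt$ be finite \emph{and large} is needed, since $a\sim X_{0}^{2}/(4n)$ then stays well separated from $0$ on the hard-edge scale, and the Coulomb-fluid approximation remains valid at $z=0$.
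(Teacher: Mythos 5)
Your proposal follows essentially the same route as the paper's own proof: evaluate $\exp[-S_{1}(0)]$ and $\exp[-S_{2}(0)]$ from the Coulomb-fluid formulas of \cite{ChenLawrence1998,CC2015} using the closed-form equilibrium-measure integrals, then feed in the endpoint asymptotics $\frac{a+b}{2}\sim 2n+\alpha+\lambda$ and $\sqrt{ab}=X\sim\alpha+\alpha\lambda(4nt)^{-1/2}+\cdots$ obtained from the degenerate quintic (\ref{Q51}), and expand in $(4nt)^{-1/2}$. The extra detail you supply (the explicit splitting of ${\rm v}$ into three pieces, the branch convention producing $(-1)^{n}$, and the uniformity caveat near the shrinking endpoint $a$) is consistent with, and a reasonable elaboration of, what the paper does.
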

\begin{proof}
Inserting ${\rm v}(x)$ into ((4.2), \cite{CC2015}) and setting $z=0$, with the help of the integral identities in the \cite{YangHaqMcky2013}, \cite{DMY2014} and \cite{GR2007}
one finds,
\begin{align}\label{Q36}
\exp(-S_{2}&(0;t,\alpha,\lambda))\nonumber \\
\sim  &(-1)^{n}n^{n+\alpha+\lambda}e^{-n}\exp\left\{\sqrt{ab}-\alpha\ln{\sqrt{ab}}-\frac{\lambda}{2}\ln{(4nt)}-\lambda\ln\left[\frac{\sqrt{ab}}{2\sqrt{nt}}
+\left(1+\frac{ab}{4nt}\right)^{\frac{1}{2}}\right]\right\}.
\end{align}
With the expression $\exp(-S_{1}(z))$ ((4.3),\cite{CC2015}), by setting $z=0,$  one finds,
\begin{equation}\label{Q37}
\exp\left[-S_{1}(0;t,\alpha,\lambda)\right]\sim n^{\frac{1}{2}}\left(ab\right)^{-\frac{1}{4}}.
\end{equation}
To proceed further, we still need to get estimates on $a+b$ and $X=\sqrt{ab}.$ Sending $n\rightarrow \infty,$ $t\rightarrow 0,$
one finds (\ref{Q30}), is approximated by
\begin{equation}\label{Q51}
\left(X-\alpha\right)^{2}\left(4nt+X^{2}\right)-\lambda^{2}X^{2}=0.
\end{equation}
Hence,
\begin{equation}\label{Q32}
\sqrt{ab}=X\sim \alpha+\alpha{\lambda}(4nt)^{-\frac{1}{2}}+\alpha{\lambda^{2}}(4nt)^{-1}-
\frac{\alpha{\lambda}(\alpha^{2}-2\lambda^{2})}{2}(4nt)^{-\frac{3}{2}},
\end{equation}
and
\begin{equation}\label{Q33}
\frac{a+b}{2}\sim 2n+\alpha+\lambda.
\end{equation}
By these estimations, the asymptotic expansion for $P_{n}(0;t,\alpha,\lambda),$ namely $(\ref{Q35})$, follows immediately.
\end{proof}
\begin{rem3}
With the aid of the asymptotic formula,
\begin{equation}\label{AA9}
\frac{\Gamma(n+1+\alpha+\lambda)}{\sqrt{2\pi}}\sim n^{n+\alpha+\lambda+\frac{1}{2}}e^{-n},\quad {\rm for} \quad n\rightarrow\infty,
\end{equation}
we rewrite the asymptotic expression of $(-1)^{n}P_{n}(0;t,\alpha,\lambda)$ 
 as,
\begin{align}\label{Q38}
(-1)^{n}P_{n}(0;t,\alpha,\lambda)\sim &\frac{\Gamma(n+1+\alpha+\lambda)}{\Gamma(1+\alpha+\lambda)}\exp\left[\ln\left(\frac{\Gamma(1+\alpha+\lambda)}{\sqrt{2\pi}}\right)-
\left(\alpha+\frac{1}{2}\right)\ln{\alpha}+\alpha\right.\nonumber\\
&\left.-\frac{\lambda}{2}\ln{(4nt)}-\frac{\lambda(1+2\alpha)}{4}\left(4nt\right)^{-\frac{1}{2}}
-\frac{(1+2\alpha)\lambda^{2}}{4}\left(4nt\right)^{-1}\right].
\end{align}
Hence comparing with the exact evaluation of (\ref{Q25}), and the closed form of $(-1)^{n}P_{n}(0;0,\alpha,\lambda)$ given by (\ref{QQ1}), one finds,
\begin{align}\label{Q39}
\frac{P_{n}(0;t,\alpha,\lambda)}{P_{n}(0;0,\alpha,\lambda)}\sim \exp\left(c_{1}(\alpha, \lambda)-\frac{\lambda}{2}\ln(4nt)-\frac{(1+2\alpha)\lambda}{2}(4nt)^{-\frac{1}{2}}-
\frac{(1+2\alpha)\lambda^{2}}{4}\left(4nt\right)^{-1}\right),
\end{align}
from which $c_{1}(\alpha, \lambda)$ may be identified as
\begin{equation}
c_{1}(\alpha, \lambda)=\ln\left(\frac{\Gamma(1+\alpha+\lambda)}{\alpha}\right)-
\left[\left(\alpha-\frac{1}{2}\right)\ln\alpha-\alpha+\frac{1}{2}\ln\left(2\pi\right)\right].
\end{equation}
Noting the difference equation (\ref{Q26}), $c(\alpha, \lambda)$ is found to be
\begin{equation}\label{Q40}
c(\alpha, \lambda)=\ln\left(\frac{G(1+\alpha+\lambda)}{G(1+\alpha)}\right)-b(\alpha),
\end{equation}
where $G(z)$ is the Barnes G-function and $b(\alpha)$ satisfies the following difference equation,
\begin{equation*}
b(1+\alpha)-b(\alpha)=\int_{0}^{\infty}\left(\frac{1}{2}-\frac{1}{t}+\frac{1}{e^{t}-1}\right)\frac{e^{-\alpha{t}}}{t}dt.
\end{equation*}
The above is obtained with Binet's  formula, (Page 249, \cite{WW1958}), reproduced here,
\begin{equation*}
\ln{\Gamma(z)}=\left(z-\frac{1}{2}\right)\ln{z}-z+\frac{1}{2}\ln{(2\pi)}+\int_{0}^{\infty}\left(\frac{1}{2}-
\frac{1}{t}+\frac{1}{e^{t}-1}\right)\frac{e^{-zt}}{t}dt, \quad {\rm Re}z>0.
\end{equation*}
Note that the constant of (\ref{Q40}) was found from the double scaling analysis of the singularly  deformed Jacobi ensemble, \cite{CCF2015}. For large $s,$
\begin{align}\label{Q56}
\mathbf{\widetilde{M}}(s; \alpha, \lambda):&=\mathbf{M}(s; \alpha, \lambda)\exp\left(b(\alpha)\right)\nonumber\\
&=\exp\left(\ln\left(\frac{G(1+\alpha+\lambda)}{G(1+\alpha)}\right)+\lambda{s^{\frac{1}{2}}}-\frac{\lambda(2\alpha+\lambda)}{4}{\ln}s
-\frac{\lambda(4\alpha^{2}-1)}{8}s^{-\frac{1}{2}}\right.\nonumber\\
&\quad\left.-\frac{\lambda^{2}(4\alpha^{2}-1)}{16}s^{-1}+\frac{\lambda(4\alpha^{2}-1)(4\alpha^{2}-9-16\lambda^{2})}{384}s^{-\frac{3}{2}}
+\mathcal{O}\left(s^{-2}\right)\right).
\end{align}
\end{rem3}


\section{Asymptotic expansion from Normand's formula}
Normand's formula (\cite{N2004}), was applied in \cite{CCF2015, PFW2006} to obtain an asymptotic expansion of the Hankel determinant
around $t=0.$ Double scaling the variables, $n\rightarrow \infty,$ $t\rightarrow 0$ such that  $s=4nt$ finite, and combining with
Norman's formula, we find the small $s$ expansion for the current problem.
This reduces to ((1.22),\cite{TW1994}), with special choice of parameters.
\par
The Hankel determinant in terms of moment reads
\begin{align}\label{aa22}
D_{n}(t; \alpha,\lambda)
={\rm det}\left[\mu_{j+k}(t)\right]_{j,k=0}^{n-1},
\end{align}
where the moments $\mu_{k}(t)$ are
\bea\label{aa23}
\mu_{k}(t)=\int_{0}^{\infty}x^{k}(t+x)^{\lambda}x^{\alpha}e^{-x}dx, \quad k=0,1,\dots .
\eea
\par
The moments are expressed in terms of Kummer's function $U(a,b,z),$
from which we deduce the leading terms of the expansion of $\mu_k(t)$ for small $t$.
\begin{lemma}
The moment functions are given by $(\ref{aa23})$ and read
\begin{align}\label{aa24}
\mu_{k}(t)&=\int_{0}^{\infty}x^{k}(t+x)^{\lambda}x^{\alpha}e^{-x}dx\nonumber\\
&=\Gamma(k+\alpha+1)U(k+\alpha+1, k+\alpha+\lambda+2, t)t^{k+\alpha+\lambda+1},
\end{align}
and the series expansion of $\mu_{m}(t)$ around $t=0$ reads
\begin{align}\label{aa25}
\mu_{k}(t)=&\varphi_{k}(0)+t\varphi_{k}'(0)+t^{2}\frac{\varphi_{k}''(0)}{2!}+\mathcal{O}(t^{3})+t^{k+\alpha+\lambda+1}
\phi_{k}(0)(1+\mathcal{O}(t^{2}))\nonumber\\ &+\mathcal{O}\left(t^{2(k+\alpha+\lambda+1)}\right),
\end{align}
where
\bea\label{a98}
\varphi_{k}(0)=\Gamma(k+\alpha+\lambda+1),\quad \varphi_{k}'(0)=\lambda\Gamma(k+\alpha+\lambda),
\eea
\bea\label{a99}
\varphi_{k}''(0)=\lambda(\lambda-1)\Gamma(k+\alpha+\lambda-1),
\eea
and
\bea\label{a100}
\phi_{k}(0)=\frac{\Gamma(k+\alpha+1)\Gamma(-k-\alpha-\lambda-1)}{\Gamma(-\lambda)},
\eea
subject to ${\rm Re}(\alpha)>-1,$ $\lambda+\alpha \notin \mathbb{Z},$ $k\in \mathbb{N}$ and $|\arg{t}|<\frac{\pi}{2}.$
\end{lemma}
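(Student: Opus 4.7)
The plan is to reduce the integral defining $\mu_k(t)$ to the standard integral representation of Kummer's confluent hypergeometric function $U(a,b,z)$, then expand $U$ into its two Frobenius solutions to separate the analytic and branch-point parts of the $t \to 0^+$ expansion.

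First I would perform the rescaling $x = tu$, which produces the prefactor $t^{k+\alpha+\lambda+1}$ and leaves
\begin{equation*}
\mu_k(t) = t^{k+\alpha+\lambda+1}\int_0^\infty u^{k+\alpha}(1+u)^{\lambda} e^{-tu}\,du,
\end{equation*}
valid for $\mathrm{Re}(\alpha)>-1$ and $|\arg t|<\pi/2$ so that both the original and transformed integrals converge absolutely. Comparing with the Mellin--Barnes/integral representation
\begin{equation*}
U(a,b,z)=\frac{1}{\Gamma(a)}\int_0^\infty e^{-zu} u^{a-1}(1+u)^{b-a-1}\,du,\qquad \mathrm{Re}(a)>0,\ \mathrm{Re}(z)>0,
\end{equation*}
one reads off $a=k+\alpha+1$ and $b-a-1=\lambda$, i.e.\ $b=k+\alpha+\lambda+2$, which gives (\ref{aa24}) at once.

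Next, to obtain the small-$t$ expansion I would invoke the Kummer connection formula
\begin{equation*}
U(a,b,z)=\frac{\Gamma(1-b)}{\Gamma(a-b+1)}M(a,b,z)+\frac{\Gamma(b-1)}{\Gamma(a)}z^{1-b}M(a-b+1,2-b,z),
\end{equation*}
valid whenever $b\notin\mathbb{Z}$; this is precisely why the hypothesis $\lambda+\alpha\notin\mathbb{Z}$ is imposed. Multiplying through by $\Gamma(k+\alpha+1)\,t^{k+\alpha+\lambda+1}$ one finds
\begin{equation*}
\mu_k(t)=\Gamma(k+\alpha+\lambda+1)\,M(-\lambda,-k-\alpha-\lambda,t)+\frac{\Gamma(k+\alpha+1)\Gamma(-k-\alpha-\lambda-1)}{\Gamma(-\lambda)}\,t^{k+\alpha+\lambda+1}M(k+\alpha+1,k+\alpha+\lambda+2,t).
\end{equation*}
The first $M$-series contributes the analytic part of (\ref{aa25}): expanding $M(-\lambda,-k-\alpha-\lambda,t)=1+\frac{\lambda}{k+\alpha+\lambda}t+\frac{\lambda(\lambda-1)}{2(k+\alpha+\lambda)(k+\alpha+\lambda-1)}t^2+\cdots$ and using $\Gamma(k+\alpha+\lambda+1)=(k+\alpha+\lambda)(k+\alpha+\lambda-1)\Gamma(k+\alpha+\lambda-1)$ identifies $\varphi_k(0), \varphi_k'(0), \varphi_k''(0)$ as in (\ref{a98})--(\ref{a99}). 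The second, branch-point, series starts with $1$, which recovers $\phi_k(0)$ in (\ref{a100}) and gives error $\mathcal{O}(t^{2})$ inside the parentheses and $\mathcal{O}(t^{2(k+\alpha+\lambda+1)})$ from neglected higher branch contributions.

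As a sanity check on the regular coefficients, differentiation under the integral sign in (\ref{aa23}) yields $\mu_k^{(j)}(0)=\lambda(\lambda-1)\cdots(\lambda-j+1)\,\Gamma(k+\alpha+\lambda-j+1)$ for $j$ up to the first index violating absolute convergence, matching $\varphi_k^{(j)}(0)$ directly. The only real subtlety is the book-keeping around the hypothesis $\lambda+\alpha\notin\mathbb{Z}$: if this were violated, $b$ would be a non-positive integer and the connection formula would break down, producing logarithmic corrections; stating the result in the non-resonant case is therefore natural. Everything else is bounded-convergence manipulation and power-series expansion.
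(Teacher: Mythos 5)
Your proposal is correct and follows essentially the same route as the paper: reduce the integral to the Kummer function $U(k+\alpha+1,k+\alpha+\lambda+2,t)$ via its standard integral representation, apply the connection formula expressing $U$ as a combination of two ${}_1\mathrm{F}_1$ series (valid precisely because $\alpha+\lambda\notin\mathbb{Z}$), and read off the coefficients. Your version is somewhat more explicit than the paper's (which merely cites the representation and says the result follows by a change of parameters), but the underlying argument is identical.
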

\begin{proof}
From the integral representation,
\begin{align}\label{aa97}
U(a,\;b,\;z)&=\frac{1}{\Gamma(a)}\int_{0}^{\infty}e^{-zx}x^{a-1}(1+x)^{b-a-1}dx,\nonumber\\
&=\frac{\Gamma(1-b)}{\Gamma(a-b+1)}\ _{1}\mathrm{F}_{1}(a,b,z)+\frac{\Gamma(b-1)}{\Gamma(a)}z^{1-b}\ _{1}\mathrm{F}_{1}(a-b+1,2-b,z),
\end{align}
where ${\rm Re}(a)>0,$ $|\arg{z}|<\frac{\pi}{2},$ see (equation 13.4.4,\cite{OLBC2010}), the second equality see (P1024 and P1025, \cite{GR2007}) and the series expansion around $z=0.$
\begin{equation*}
\ _{1}\mathrm{F}_{1}(a,b,z)=\sum_{k=0}^{\infty}\frac{(a)^{(k)}}{(b)^{(k)}}\frac{z^{k}}{k!},\end{equation*}
where
$$
\quad (a)^{(0)}=1, \quad (a)^{(n)}=a(a+1)\dots(a+n-1).
$$
The results follow from a minor change of parameters.
\end{proof}

\begin{theorem}
The Hankel determinant $D_{n}(t; \alpha, \lambda)$ defined by $(\ref{aa22})$ has an expansion around $t=0,$
\begin{align}\label{aa26}
D_{n}(t; \alpha, \lambda)&={\rm det}\left[\mu_{k+j}(t)\right]_{k,j=0}^{n-1}\nonumber\\
&=D_{n}(0; \alpha, \lambda)\left[1+\frac{\lambda{n}t}{\lambda+\alpha}+\frac{2\lambda{n}(n\lambda(\alpha+\lambda)-n-\alpha)t^{2}}
{4(\alpha+\lambda)((\alpha+\lambda)^{2}-1)}+\mathcal{O}(t^{3})\right.\nonumber\\ &\quad \left.-t^{\alpha+\lambda+1}\frac{\Gamma(\alpha+1)\Gamma(\lambda+1)
\Gamma(\alpha+\lambda+n+1)\sin{\lambda{\pi}}}{\Gamma(\alpha+\lambda+1)\Gamma^{2}
(\alpha+\lambda+2)\Gamma(n)\sin{(\alpha+\lambda)\pi}}(1+\mathcal{O}(t))+\mathcal{O}(t^{2(\alpha+\lambda+1)})\right],
\end{align}
with $D_{n}(0;\alpha,\lambda)$ given by (\ref{Q15A}). This is valid for ${\rm Re}(\alpha)>-1,$ $\lambda+\alpha \notin \mathbb{Z},$ and $|\arg{t}|<\frac{\pi}{2}.$
\end{theorem}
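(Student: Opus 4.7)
The plan is to combine Normand's formula \cite{N2004} with the moment expansion of Lemma~1 and to read off the perturbation of $D_{n}(0;\alpha,\lambda)$ in two incommensurate scales: integer powers of $t$ and the non-analytic series generated by $t^{\alpha+\lambda+1}$.

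First I would write the moment matrix as
\[
M(t)=M(0)+tM_{1}+\tfrac{t^{2}}{2}M_{2}+t^{\alpha+\lambda+1}N+\cdots,
\]
where $(M(0))_{jk}=\varphi_{j+k}(0)=\Gamma(j+k+\alpha+\lambda+1)$, and $(M_{1})_{jk}=\varphi'_{j+k}(0)$, $(M_{2})_{jk}=\varphi''_{j+k}(0)$, $(N)_{jk}=\phi_{j+k}(0)$ are read from Lemma~1. Expanding the determinant as a perturbation around $M(0)$ yields
\[
\frac{D_{n}(t;\alpha,\lambda)}{D_{n}(0;\alpha,\lambda)}=1+t\,\mathrm{tr}\bigl(M(0)^{-1}M_{1}\bigr)+\mathcal{O}(t^{2})+t^{\alpha+\lambda+1}\,\mathrm{tr}\bigl(M(0)^{-1}N\bigr)+\cdots,
\]
since $\det M(0)=D_{n}(0;\alpha,\lambda)$ by (\ref{Q15A}). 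Normand's formula packages exactly these trace expressions as ratios of Hankel minors, so the computation is reduced to evaluating minors built from the $\varphi_{k}^{(j)}(0)$ and $\phi_{k}(0)$.

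Next I would compute the traces. The inverse $M(0)^{-1}$ has an explicit representation in terms of the monic Laguerre polynomials orthogonal with respect to $x^{\alpha+\lambda}e^{-x}$ on $(0,\infty)$, and the Lemma~1 identities $\varphi'_{k}(0)/\varphi_{k}(0)=\lambda/(k+\alpha+\lambda)$ and $\varphi''_{k}(0)/\varphi_{k}(0)=\lambda(\lambda-1)/[(k+\alpha+\lambda)(k+\alpha+\lambda-1)]$ reduce the linear and quadratic analytic traces to telescoping sums, reproducing the coefficients $\lambda n/(\alpha+\lambda)$ and $2\lambda n(n\lambda(\alpha+\lambda)-n-\alpha)/[4(\alpha+\lambda)((\alpha+\lambda)^{2}-1)]$ of (\ref{aa26}). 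For the non-analytic contribution, I would substitute $\phi_{k}(0)$ from (\ref{a100}) to form
\[
\frac{\phi_{k}(0)}{\varphi_{k}(0)}=\frac{\Gamma(k+\alpha+1)\Gamma(-k-\alpha-\lambda-1)}{\Gamma(-\lambda)\,\Gamma(k+\alpha+\lambda+1)},
\]
and apply the reflection $\Gamma(z)\Gamma(1-z)=\pi/\sin\pi z$ to both $\Gamma(-\lambda)$ and $\Gamma(-k-\alpha-\lambda-1)$; the $\pi$ factors cancel and, after using $\sin(k\pi+\theta)=(-1)^{k}\sin\theta$ to discharge the $k$-dependence, one obtains the advertised trigonometric factor $\sin(\lambda\pi)/\sin((\alpha+\lambda)\pi)$ together with the ratio $\Gamma(\alpha+1)\Gamma(\lambda+1)\Gamma(\alpha+\lambda+n+1)/[\Gamma(\alpha+\lambda+1)\Gamma^{2}(\alpha+\lambda+2)\Gamma(n)]$.

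The main obstacle is keeping the two scales strictly separated: the analytic powers $t,t^{2},\dots$ and the non-analytic powers $t^{\alpha+\lambda+1},t^{\alpha+\lambda+2},\dots,t^{2(\alpha+\lambda+1)},\dots$ are disjoint precisely when $\alpha+\lambda\notin\mathbb{Z}$, which is exactly the hypothesis of the theorem and also of Lemma~1. One must additionally check that the $\mathcal{O}(t^{2})$ correction hidden inside the $t^{k+\alpha+\lambda+1}\phi_{k}(0)(1+\mathcal{O}(t^{2}))$ term of Lemma~1 does not contaminate the analytic coefficients recorded above, which is why Lemma~1's explicit separation of scales is indispensable. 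Once this bookkeeping is in place, the remaining work is routine manipulation of products of Gamma functions together with Binet's formula, and (\ref{aa26}) follows.
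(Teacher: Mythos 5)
Your overall strategy---expand the moment matrix via Lemma~1 and extract the coefficients of $t$, $t^{2}$ and $t^{\alpha+\lambda+1}$ against the known value of $D_{n}(0;\alpha,\lambda)$---is essentially the paper's, and your treatment of the analytic scale is sound: since the terms $t\varphi'_{j+k}(0)$ and $t^{2}\varphi''_{j+k}(0)/2$ in (\ref{aa25}) carry powers of $t$ that do not depend on the matrix position, the coefficient of $t$ really is $\mathrm{tr}\bigl(M(0)^{-1}M_{1}\bigr)$, and the ratio $\varphi'_{k}(0)/\varphi_{k}(0)=\lambda/(k+\alpha+\lambda)$ gives $\sum_{k}$-type sums that reproduce $\lambda n/(\alpha+\lambda)$. (For the $t^{2}$ coefficient you must also keep the second-order term $\tfrac{1}{2}[(\mathrm{tr}(M(0)^{-1}M_{1}))^{2}-\mathrm{tr}((M(0)^{-1}M_{1})^{2})]$ coming from $tM_{1}$, not only $\tfrac{1}{2}\mathrm{tr}(M(0)^{-1}M_{2})$; you bury this in an $\mathcal{O}(t^{2})$ and then claim the stated coefficient, which is at best incomplete.)

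The genuine error is in the non-analytic scale. By (\ref{aa25}) the $(j,k)$ entry of the moment matrix carries the term $t^{j+k+\alpha+\lambda+1}\phi_{j+k}(0)$: the exponent depends on the position. Hence the coefficient of $t^{\alpha+\lambda+1}$ in $M(t)$ is \emph{not} the full matrix $N=[\phi_{j+k}(0)]_{j,k}$ but the rank-one matrix $\phi_{0}(0)\,e_{0}e_{0}^{T}$; every other entry enters only at order $t^{\alpha+\lambda+2}$ or higher. Your $\mathrm{tr}\bigl(M(0)^{-1}N\bigr)=\sum_{j,k}\phi_{j+k}(0)\bigl(M(0)^{-1}\bigr)_{kj}$ therefore computes the wrong quantity; the correct coefficient is the single term $\phi_{0}(0)\bigl(M(0)^{-1}\bigr)_{00}$, i.e.\ $\phi_{0}(0)$ times the ratio of the $(0,0)$ minor of $[\Gamma(j+k+\alpha+\lambda+1)]$ to $D_{n}(0;\alpha,\lambda)$. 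Evaluating both determinants by Normand's identity (\ref{a101}) gives $\Gamma(n+\alpha+\lambda+1)/[\Gamma(\alpha+\lambda+1)\Gamma(\alpha+\lambda+2)\Gamma(n)]$, and only then does the reflection formula applied to $\phi_{0}(0)=\Gamma(\alpha+1)\Gamma(-\alpha-\lambda-1)/\Gamma(-\lambda)$ from (\ref{a100}) yield the factor $-\Gamma(\alpha+1)\Gamma(\lambda+1)\sin\lambda\pi/[\Gamma(\alpha+\lambda+1)\Gamma^{2}(\alpha+\lambda+2)\sin(\alpha+\lambda)\pi]$. Your step of ``using $\sin(k\pi+\theta)=(-1)^{k}\sin\theta$ to discharge the $k$-dependence'' is a symptom of the same mistake: no sum over $k$ should appear, because only $k=0$ survives at leading order. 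The paper avoids this trap by expanding the determinant multilinearly in columns and extracting the coefficient of $t^{\alpha+\lambda+1}$ directly, which automatically respects the position-dependent exponents; you should either adopt that bookkeeping or replace your $N$ by the correct rank-one perturbation.
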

\begin{proof} In the following
$<t^{m}>f(t)$ denotes the coefficient of $t^{m}$ in the series expansion of $f(t)$ in $t$. By $(\ref{aa25}),$
\begin{align}\label{aa100}
D_{n}(t; \alpha, \lambda)&={\rm det}\left[\mu_{k+j}(t)\right]_{k,j=0}^{n-1} \nonumber\\
&\quad\sim {\rm det}\left[\varphi_{k+j}(0)+t\varphi_{k+j}'(0)+t^{2}
\frac{\varphi_{k+j}''(0)}{2!}+t^{k+j+\alpha+\lambda+1}\phi_{k+j}(0)\right]_{k,j=0}^{n-1}\nonumber\\
&\quad\sim {\rm det}\left[\varphi_{k+j}(0)\right]_{k,j=0}^{N-1}+t<t>{\rm det}\left[\varphi_{k+j}(0)+t\varphi_{k+j}'(0)\right]_{k,j=0}^{n-1}\nonumber\\
&\quad\quad+t^{2}<t^{2}>{\rm det}\left[\varphi_{k+j}(0)+t\varphi_{k+j}'(0)+t^{2}\frac{\varphi_{k+j}''(0)}{2!}\right]_{k,j=0}^{n-1}\nonumber\\
&\quad\quad+t^{\alpha+\lambda+1}<t^{\alpha+\lambda+1}>{\rm det}\left[\varphi_{k+j}(0)+t^{k+j+\alpha+\lambda+1}\phi_{k+j}(0)\right]_{k,j=0}^{n-1}.
\end{align}
 Here $\varphi_{k}(0)$, $\varphi_{k}'(0)$, $\varphi_{k}''(0)$ and $\phi_{k}(0)$ are given by $(\ref{a98})$, $(\ref{a99})$ and $(\ref{a100})$, respectively. To proceed further, we note,
\bea\label{a101}
{\rm det}\left[\Gamma(z_{k}+j)\right]_{k,j=0}^{n-1}=\prod_{k=0}^{n-1}\Gamma(z_{k})\prod_{0\leq \ell<k \leq n-1}(z_{k}-z_{\ell}),
\eea
as found in \cite{N2004} and the determinant is a linear functional of the $j^{th}$ column. Hence,
\begin{align*}<t>{\rm det}\left[\varphi_{k+j}(0)+t\varphi_{k+j}'(0)\right]_{k,j=0}^{n-1}=D_{n}(0; \alpha, \lambda)\frac{\lambda{n}\Gamma(\alpha+\lambda)}{\Gamma(\lambda+\alpha+1)},
\end{align*}
and the coefficients of $t^{2}$ and $t^{\alpha+\lambda+1}$ are obtained, but we do not reproduce the steps involved.
\end{proof}

\begin{cor}
The small $t$ expansion of $H_{n}(t)$ reads,
\begin{align}\label{aa28}
H_{n}(t)=&t\frac{d}{dt}\ln{\frac{D_{n}(t; \alpha, \lambda)}{D_{n}(0; \alpha, \lambda)}}=\frac{\lambda{n}t}{\lambda+\alpha}-\frac{\lambda{\alpha}(n+\alpha+\lambda)n}{(\alpha+\lambda)^2
\left((\alpha+\lambda)^2-1\right)}t^{2}+\mathcal{O}(t^{3})\nonumber\\ &-t^{\alpha+\lambda+1}\frac{\Gamma(\alpha+1)\Gamma(\lambda+1)\Gamma(\alpha+\lambda+n+1)\sin{\lambda{\pi}}}
{\Gamma^{2}(\alpha+\lambda+1)\Gamma(\alpha+\lambda+2)\Gamma(n)\sin{(\alpha+\lambda)\pi}}(1+\mathcal{O}(t))+\mathcal{O}(t^{2(\alpha+\lambda+1)}).
\end{align}
\end{cor}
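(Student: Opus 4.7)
The plan is to regard this corollary as a direct calculational consequence of Theorem~4. The definition
\begin{equation*}
H_n(t) \;=\; t\,\frac{d}{dt}\ln\!\frac{D_n(t;\alpha,\lambda)}{D_n(0;\alpha,\lambda)}
\end{equation*}
reduces the claim to substituting the expansion (\ref{aa26}) into this formula and reading off the coefficients, so there is no new analytic input required.

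First I would abbreviate (\ref{aa26}) as
\begin{equation*}
\frac{D_n(t;\alpha,\lambda)}{D_n(0;\alpha,\lambda)} \;=\; 1 + A_1 t + A_2 t^2 + \mathcal{O}(t^3) + B\,t^{\alpha+\lambda+1}\bigl(1+\mathcal{O}(t)\bigr) + \mathcal{O}\bigl(t^{2(\alpha+\lambda+1)}\bigr),
\end{equation*}
where
\begin{equation*}
A_1 = \frac{\lambda n}{\alpha+\lambda}, \quad A_2 = \frac{\lambda n\bigl(n\lambda(\alpha+\lambda)-n-\alpha\bigr)}{2(\alpha+\lambda)\bigl((\alpha+\lambda)^2-1\bigr)}, \quad B = -\frac{\Gamma(\alpha+1)\Gamma(\lambda+1)\Gamma(\alpha+\lambda+n+1)\sin\lambda\pi}{\Gamma(\alpha+\lambda+1)\Gamma^2(\alpha+\lambda+2)\Gamma(n)\sin(\alpha+\lambda)\pi}.
\end{equation*}
The hypothesis $\lambda+\alpha\notin\mathbb{Z}$ ensures that the integer and non-integer powers of $t$ remain distinguishable, so one may organise the expansion of $\ln(1+u)=u-u^2/2+\mathcal{O}(u^3)$ consistently. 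Differentiating the result with respect to $t$, multiplying by $t$, and retaining only the polynomial contributions up to $t^2$ together with the leading $t^{\alpha+\lambda+1}$ term yields
\begin{equation*}
H_n(t) \;=\; A_1\,t + (2A_2 - A_1^2)\,t^2 + \mathcal{O}(t^3) + (\alpha+\lambda+1) B\,t^{\alpha+\lambda+1}\bigl(1+\mathcal{O}(t)\bigr) + \mathcal{O}\bigl(t^{2(\alpha+\lambda+1)}\bigr).
\end{equation*}

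The remaining work is the algebraic simplification of the three surviving coefficients. The linear term reproduces $\lambda n t/(\lambda+\alpha)$ at once. For the non-analytic term, using $\Gamma(\alpha+\lambda+2)=(\alpha+\lambda+1)\Gamma(\alpha+\lambda+1)$ converts $(\alpha+\lambda+1)B$ into the stated form involving $\Gamma^2(\alpha+\lambda+1)\Gamma(\alpha+\lambda+2)$ in the denominator. The only step that requires care is the $t^2$ coefficient: one must verify the identity
\begin{equation*}
2A_2 - A_1^2 \;=\; \frac{\lambda n}{\alpha+\lambda}\cdot\frac{\bigl(n\lambda(\alpha+\lambda)-n-\alpha\bigr)(\alpha+\lambda)-\lambda n\bigl((\alpha+\lambda)^2-1\bigr)}{(\alpha+\lambda)\bigl((\alpha+\lambda)^2-1\bigr)} \;=\; -\frac{\lambda\alpha\,n(n+\alpha+\lambda)}{(\alpha+\lambda)^2\bigl((\alpha+\lambda)^2-1\bigr)},
\end{equation*}
which comes out after the leading $n\lambda(\alpha+\lambda)^2$ contributions in the numerator cancel. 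This cancellation is the only step that is not purely mechanical; once it is confirmed, the expansion matches (\ref{aa28}) term by term.

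I do not anticipate a serious obstacle: the corollary is essentially the logarithmic derivative of Theorem~4, and the only subtlety is keeping track of which power of $t$ each correction contributes to, which is cleanly handled by the assumption $\alpha+\lambda\notin\mathbb{Z}$. The constant of integration (i.e.\ the absence of a $t^0$ term in $H_n$) is automatic from the factor of $t$ in front of $d/dt$.
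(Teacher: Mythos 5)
Your proposal is correct and is exactly the argument the paper intends: the corollary is stated without proof as an immediate consequence of Theorem 4, obtained by expanding $\ln(1+u)$ and applying $t\,d/dt$, precisely as you do. Your verification of the $t^2$ coefficient $2A_2-A_1^2=-\lambda\alpha n(n+\alpha+\lambda)/\bigl((\alpha+\lambda)^2((\alpha+\lambda)^2-1)\bigr)$ and the absorption of $(\alpha+\lambda+1)$ into $\Gamma(\alpha+\lambda+2)$ both check out.
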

Under double scaling,
we obtained the asymptotic behavior of the logarithmic derivative of the Hankel determinant for small $s.$
\begin{cor}
The series of $\mathcal{H}(s)$ given by $(\ref{Q10})$ around $s=0$ reads,
\begin{align}\label{aa29}
\mathcal{H}(s)&=\frac{\lambda{s}}{4(\lambda+\alpha)}-\frac{\lambda{\alpha}}{16(\alpha+\lambda)^2\left((\alpha+\lambda)^2-1\right)}s^{2}
+\mathcal{O}(s^{3})\nonumber\\ &\quad -\frac{s^{\alpha+\lambda+1}}{4^{\alpha+\lambda+1}}
\frac{\Gamma(\alpha+1)\Gamma(\lambda+1)\sin{\lambda{\pi}}}{\Gamma^{2}(\alpha+\lambda+1)
\Gamma(\alpha+\lambda+2)\sin{(\alpha+\lambda)\pi}}(1+\mathcal{O}(s))+\mathcal{O}(s^{2(\alpha+\lambda+1)}).
\end{align}
The expansion of the {\it double scaled} MGF is given by
\begin{align}\label{aa30}
\ln\mathbf{M}(s; \alpha, \lambda)=&\frac{\lambda}{4(\lambda+\alpha)}s-\frac{\lambda{\alpha}}{32(\alpha+\lambda)^2\left((\alpha+\lambda)^2-1\right)}s^{2}+
\mathcal{O}(s^{3})\nonumber\\ &-\frac{s^{\alpha+\lambda+1}}{4^{\alpha+\lambda+1}}\frac{\Gamma(\alpha+1)\Gamma(\lambda+1)\sin{\lambda{\pi}}}{(\alpha+\lambda+1)\Gamma^{2}
(\alpha+\lambda+1)\Gamma(\alpha+\lambda+2)\sin{(\alpha+\lambda)\pi}}(1+\mathcal{O}(s))\nonumber\\
&+\mathcal{O}(s^{2(\alpha+\lambda+1)}),
\end{align}
valid for ${\rm Re}(\alpha)>-1,$ $\lambda+\alpha \notin \mathbb{Z},$ and $|\arg{t}|<\frac{\pi}{2}.$
\end{cor}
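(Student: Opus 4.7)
The plan is to obtain (\ref{aa29}) from the finite-$n$ expansion (\ref{aa28}) of $H_{n}(t)$ by setting $t=s/(4n)$ and letting $n\to\infty$ with $s$ held fixed, in accordance with the definitions (\ref{Q10}) and (\ref{Q12}), and then to integrate in $s$ to obtain (\ref{aa30}).

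First I would treat the two analytic head terms of (\ref{aa28}). Substituting $t=s/(4n)$ and using $nt=s/4$ together with $(n+\alpha+\lambda)/n\to 1$, the term $\tfrac{\lambda nt}{\lambda+\alpha}$ becomes $\tfrac{\lambda s}{4(\lambda+\alpha)}$ and the term $-\tfrac{\lambda\alpha(n+\alpha+\lambda)n}{(\alpha+\lambda)^{2}((\alpha+\lambda)^{2}-1)}t^{2}$ becomes $-\tfrac{\lambda\alpha}{16(\alpha+\lambda)^{2}((\alpha+\lambda)^{2}-1)}s^{2}$, matching the first two summands of (\ref{aa29}).

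Next, the non-analytic tail of (\ref{aa28}) carries the ratio $\Gamma(\alpha+\lambda+n+1)/\Gamma(n)$. After substitution, the prefactor $t^{\alpha+\lambda+1}$ contributes $s^{\alpha+\lambda+1}/(4n)^{\alpha+\lambda+1}$, and the standard asymptotic $\Gamma(n+c)/\Gamma(n)\sim n^{c}$ cancels the factor $n^{-(\alpha+\lambda+1)}$ in the limit $n\to\infty$. The surviving $\Gamma$- and $\sin$-coefficients are $n$-independent and reproduce exactly the non-analytic contribution in (\ref{aa29}). The error terms $\mathcal{O}(t^{3})$ and $\mathcal{O}(t^{2(\alpha+\lambda+1)})$ pass to $\mathcal{O}(s^{3})$ and $\mathcal{O}(s^{2(\alpha+\lambda+1)})$ after the rescaling, which is the step that deserves care: one has to know that the coefficient of $t^{k}$ in the Normand-type determinantal expansion (\ref{aa26}) is polynomial in $n$ of degree at most $k$, so that the $n^{k}$ growth is exactly compensated by $(4n)^{-k}$ coming from the substitution. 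This structural property is what is produced by the multilinear cofactor expansion of $\det[\mu_{j+k}(t)]$ used in the proof of Theorem 4, so no anomalous subdominant contribution is left behind.

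Finally, for (\ref{aa30}) I would integrate. The relation $\mathcal{H}(s)=s\,\tfrac{d}{ds}\ln\mathbf{M}(s;\alpha,\lambda)$ is the double-scaling limit of $H_{n}(t)=t\,\tfrac{d}{dt}\ln\mathbf{M}_{n}(t;\alpha,\lambda)$, which survives because $dt/t=ds/s$ under $t=s/(4n)$. Integrating (\ref{aa29}) term by term against $dx/x$ from $0$ to $s$, and using $\mathbf{M}(0;\alpha,\lambda)=1$, divides the analytic $s^{k}$ coefficient by $k$ and the non-analytic $s^{\alpha+\lambda+1}$ coefficient by $\alpha+\lambda+1$, yielding (\ref{aa30}) verbatim. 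The main obstacle, as indicated above, is not the arithmetic but the justification that the formal $t$-series (\ref{aa26}) can be rescaled term by term and the $\mathcal{O}$-remainders behave as advertised; this rests on the balance between the polynomial-in-$n$ growth of the Hankel coefficients and the $(4n)^{-k}$ from the substitution.
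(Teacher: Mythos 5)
Your proposal is correct and follows essentially the same route as the paper: substitute $t=s/(4n)$ into the finite-$n$ expansion and pass to the limit, with the analytic coefficients of $t^{k}$ growing like $n^{k}$ and the non-analytic one like $\Gamma(n+\alpha+\lambda+1)/\Gamma(n)\sim n^{\alpha+\lambda+1}$, exactly compensating the rescaling. The only (inessential) difference is that for (\ref{aa30}) the paper takes the logarithm of the determinant ratio (\ref{aa26}) directly via (\ref{Q12}), whereas you integrate (\ref{aa29}) against $dx/x$ as in (\ref{Q17}); the two are equivalent and yield identical coefficients.
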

\begin{proof}
Substituting $t=\frac{s}{4n}$ into $(\ref{aa26})$ and combining with $(\ref{Q3}),$ $(\ref{Q10}),$ followed by letting $n \rightarrow \infty$, we find the equation $(\ref{aa29})$.\par
With the aid of $(\ref{Q12}),$ $(\ref{aa26})$ and $t=\frac{s}{4n},$ letting $n \rightarrow \infty$, one easily arrives at $(\ref{aa30}).$
\end{proof}
\section{Rational solutions of the Painlev\'e equations.}

There are special polynomials associated with the rational solutions of the third Painlev\'e transcendent. These are given in the following theorem of Kajiwara and Masuda \cite{KM1999}, also considered by Clarkson \cite{Clarkson2003} and generalized by Clarkson \cite{PAC2005}. For convenience, we restate Theorem 3.4 in \cite{PAC2005} here.

\begin{theorem}\label{TH1}
Suppose that $S_{n}(\zeta;\mu)$ satisfies the recursion relation
\begin{equation}\label{E1}
S_{n+1}S_{n-1}=-\zeta\left[S_{n}\frac{d^{2}S_{n}}{d\zeta^{2}}-\left(\frac{dS_{n}}{d\zeta}\right)^2\right]
-S_{n}\frac{dS_{n}}{d\zeta}+(\zeta+\mu)S_{n}^{2},
\end{equation}
with $S_{-1}(\zeta; \mu)=S_{0}(\zeta; \mu)=1.$ Then
\bea\label{E2}
v(\zeta;a,b,1,-1):=1+\frac{d}{d\zeta}\left\{\ln\left[\frac{S_{n-1}(\zeta;\mu)}{S_{n}(\zeta;\mu+1)}\right]\right\}
=\frac{S_{n}(\zeta;\mu)S_{n-1}(\zeta;\mu+1)}{S_{n}(\zeta;\mu+1)S_{n-1}(\zeta;\mu)},
\eea
satisfies the third Painlev\'e transcendent ${P_{III}}(a_{n}, b_{n}, 1, -1),$
\bea\label{E2}
\frac{d^{2}v}{d\zeta^2}=\frac{1}{v}\left(\frac{dv}{d\zeta}\right)^{2}-\frac{1}{\zeta}\frac{dv}{d\zeta}+\frac{a_{n}v^{2}+b_{n}}{\zeta}+v^{3}-\frac{1}{v},
\eea
with $a_{n}:=2n+2\mu+1$ and $b_{n}:=2n-2\mu-1.$
\end{theorem}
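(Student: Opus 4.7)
The plan is induction on $n$, using the B\"acklund-transformation hierarchy of $P_{III}$ together with a $\tau$-function interpretation of the polynomials $S_{n}(\zeta;\mu)$. For the base case $n=0$, the initial data $S_{-1}=S_{0}=1$ give $v_{0}=1$, whence $v_{0}'=v_{0}''=0$ and the equation $P_{III}(2\mu+1,-2\mu-1,1,-1)$ collapses to $(a_{0}+b_{0})/\zeta+1-1=0$, which holds identically because $a_{0}+b_{0}=0$.

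For the inductive step, I would invoke the Schlesinger/B\"acklund transformation $T$ for $P_{III}(a,b,1,-1)$ that shifts the parameters by $(a,b)\mapsto(a+2,b+2)$; this is the rational map expressing the new solution in terms of $v$ and $v'$ that comes out of Okamoto's Hamiltonian theory. Writing
\begin{equation*}
v_{n}(\zeta;\mu)=\frac{S_{n}(\zeta;\mu)\,S_{n-1}(\zeta;\mu+1)}{S_{n}(\zeta;\mu+1)\,S_{n-1}(\zeta;\mu)},
\end{equation*}
the recursion \eqref{E1} should be recognised as the Hirota-bilinear (Toda-type) relation satisfied by the $\tau$-functions of this B\"acklund chain, so iterating $T$ starting from the seed $v_{0}=1$ reproduces exactly this ratio; after $n$ iterations the parameters advance to $a_{n}=2n+2\mu+1$ and $b_{n}=2n-2\mu-1$, matching the statement.

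The main obstacle is the direct algebraic verification that $T(v_{n})=v_{n+1}$. Expanding $T(v_{n})$ and substituting the ratio definition of $v_{n}$ forces one to reduce combinations of $S_{n\pm 1}$, $S_{n}$, their $\zeta$-derivatives and their $\mu$-shifts by repeated use of \eqref{E1}, together with a first-order bilinear companion identity in $\mu$ that arises from the underlying Hamiltonian structure (this is the identity implicit in the equality between the logarithmic-derivative form and the ratio form of $v$ given in the statement). The cleanest route is to represent $S_{n}(\zeta;\mu)$ as a Wronskian determinant in modified-Bessel-type seed solutions in the spirit of Kajiwara--Masuda: the bilinear recursion \eqref{E1} and its $\mu$-companion then both reduce to Jacobi's identity for minors, and the fact that $v_{n}$ solves the stated $P_{III}$ follows from the linear isomonodromic system whose $\tau$-functions these Wronskians are.
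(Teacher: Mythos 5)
The paper offers no proof of this statement: it is quoted verbatim as Theorem~3.4 of Clarkson \cite{PAC2005} (building on Kajiwara and Masuda \cite{KM1999}), so there is no internal argument to compare yours against. Judged on its own terms, your proposal correctly identifies the strategy used in those references --- a rational seed $v_{0}\equiv 1$ at $a_{0}+b_{0}=0$, propagated along a B\"acklund chain that shifts $(a,b)\mapsto(a+2,b+2)$, with the $S_{n}$ playing the role of $\tau$-functions and (\ref{E1}) the associated Toda-type bilinear relation --- and your base case is verified correctly. But there is a genuine gap: the inductive step, which is the entire content of the theorem, is precisely the ``main obstacle'' you name and then do not overcome. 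You still need to (i) show that the ratio $v_{n}$ built from the $S_{n}$ is what the B\"acklund map produces from $v_{n-1}$, and (ii) establish the first-order bilinear identity in $\mu$, namely $S_{n}(\mu)S_{n-1}(\mu+1)=\bigl(S_{n-1}(\mu)+S_{n-1}'(\mu)\bigr)S_{n}(\mu+1)-S_{n-1}(\mu)S_{n}'(\mu+1)$, which is what makes the logarithmic-derivative and ratio forms of $v$ agree. Both are deferred to a determinant representation that you never construct.

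Moreover, the representation you propose is the wrong one for this hierarchy. Wronskians of modified-Bessel seeds generate the special-function (Riccati-type) solutions of $P_{III}$ --- the family this paper actually uses later for $\alpha\in\{0,1\}$, $\lambda\in\mathbb{N}$ --- whereas the Umemura polynomials $S_{n}(\zeta;\mu)$ of the rational hierarchy admit, in Kajiwara--Masuda, a Jacobi--Trudi-type determinant with Laguerre-polynomial entries; the reduction of (\ref{E1}) and its $\mu$-companion to Jacobi/Pl\"ucker identities must be carried out for that determinant, not for a Bessel Wronskian. Finally, if you intend to realise your map $T$ by the paper's transformation $\mathfrak{K}_{1}$ of Lemma~2, note that while it does shift $(a,b)\mapsto(a+2,b+2)$, it simultaneously reverses the independent variable $x\mapsto -x$; this sign alternation (visible in (\ref{AA17}), where the polynomials are evaluated at $-x$) must be tracked through the induction, and your sketch does not account for it.
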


Our strategy for solving the fifth Painlev\'e equation (\ref{Q13}) with special choices of the parameters
is to turn it into a particular third Painlev\'e equation or other equations solved previously. We state below a theorem which describe the connection between the $P_{III}$ and the $P_V$; this is Theorem 34.1 in \cite{GLS2002}.

\begin{theorem}\label{TH4}
Let $y=y(x)$ be a solution of the third Painlev\'e equation $P_{III}(a,b,1,-1)$
\begin{equation}\label{A01}
y''(x)=\frac{(y'(x))^2}{y(x)}-\frac{y'(x)}{x}+\frac{ay^{2}(x)+b}{x}+y^{3}(x)-\frac{1}{y(x)},
\end{equation}
such that
\begin{equation}\label{A02}
{\rm R}(x):=y'-{\varepsilon}{y^{2}}+\frac{(1-{\varepsilon}{a})y}{x}+1\neq 0,\quad {\rm with} \quad \varepsilon^{2}=1.
\end{equation}
Then the function
\begin{equation}\label{A03}
u(\tau):=1-\frac{2}{{\rm R}(\sqrt{2\tau})}
\end{equation}
is a solution of the fifth Painlev\'e transcendent,
\begin{equation}\label{A04}
u''(\tau)=\frac{3u-1}{2u(u-1)}\left(u'\right)^{2}-\frac{u'}{\tau}
+\frac{(u-1)^{2}}{\tau^{2}}\left(\alpha_{1}u+\frac{\beta_{1}}{u}\right)+\frac{\gamma_{1}}{\tau}u,
\end{equation}
with parameter values
\bea\label{AA13}
\left(\alpha_{1}, \beta_{1}, \gamma_{1}, 0\right)=\left(\frac{(b-\varepsilon{a}+2)^{2}}{32}, -\frac{(b+\varepsilon{a}-2)^{2}}{32}, -\varepsilon, 0\right).
\eea
\end{theorem}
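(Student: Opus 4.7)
The plan is a direct verification of the substitution, organized to minimize algebraic overhead. First, I would introduce the change of independent variable $x = \sqrt{2\tau}$, so that $d/d\tau = x^{-1}\, d/dx$ and therefore
\[
u'(\tau) = \frac{1}{x}\frac{du}{dx}, \qquad u''(\tau) = \frac{1}{x^{2}}\frac{d^{2}u}{dx^{2}} - \frac{1}{x^{3}}\frac{du}{dx}.
\]
Inverting the definition $u = 1 - 2/R$ gives $R = 2/(1-u)$ and, with $' = d/dx$,
\[
\frac{du}{dx} = \frac{2R'}{R^{2}}, \qquad \frac{d^{2}u}{dx^{2}} = \frac{2R''}{R^{2}} - \frac{4(R')^{2}}{R^{3}}.
\]
The whole task is thereby reduced to computing $R'$ and $R''$ by means of $P_{III}(a,b,1,-1)$, and showing that what results for $u''(\tau)$ reassembles into the right-hand side of the claimed $P_V(\alpha_{1},\beta_{1},\gamma_{1},0)$.

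Next, I would differentiate $R = y' - \varepsilon y^{2} + (1-\varepsilon a)y/x + 1$ once, replace the emerging $y''$ by the right-hand side of $P_{III}(a,b,1,-1)$, and eliminate the remaining $y'$ using the rearranged definition $y' = R + \varepsilon y^{2} - (1-\varepsilon a)y/x - 1$. After simplification I expect $xR'$ to collapse into a Riccati-type quadratic in $R$ with coefficients polynomial in $y$, $1/y$, $1/x$ and the parameters $a,b,\varepsilon$; the two squared combinations $(b - \varepsilon a + 2)^{2}$ and $(b + \varepsilon a - 2)^{2}$ should be visible already at this stage as discriminants. Iterating the same procedure produces $R''$ as a rational function of $R$, $y$, $1/y$, $1/x$. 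Finally, solving the previously derived Riccati identity for $y$ in terms of $R$, $R'$, $x$ removes the last occurrence of $y$, and the change of variable back to $\tau = x^{2}/2$ and $u = 1 - 2/R$ delivers an ODE in $u(\tau)$ alone. Matching its coefficients term by term against $P_V(\alpha_{1},\beta_{1},\gamma_{1},0)$ completes the verification, with $\gamma_{1} = -\varepsilon$ appearing from the constant-in-$y$ piece of $R'$ and $\delta_{1}=0$ manifesting as the absence of a cubic-in-$R$ contribution.

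The principal obstacle is keeping the algebra under control so that the squared parameter values $(b \pm \varepsilon a \mp 2)^{2}/32$ emerge in the exact form claimed rather than as opaque polynomials. To tame this I would prefer to reorganize through Okamoto's Hamiltonian formulation: $P_{III}(a,b,1,-1)$ is generated by a polynomial Hamiltonian $H_{III}(q,p,x)$ with $q=y$, and inspection of the definition of $R$ suggests that $R$ is essentially affine in the conjugate momentum $p$, so that the map $(q,p) \mapsto (u, \text{something})$ is canonical up to a time-change. Under $\tau = x^{2}/2$, the $P_{III}$ Hamiltonian system transforms into the Hamiltonian system of degenerate $P_{V}$ with $\delta = 0$, and the parameter map $(a,b,\varepsilon) \mapsto (\alpha_{1},\beta_{1},\gamma_{1})$ is then read off by comparing the two Hamiltonians rather than by brute substitution. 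The hypothesis $R \neq 0$ is essential throughout: it ensures that $u = 1 - 2/R$ is a well-defined meromorphic function and that the transformation is locally invertible, so that the transformed equation is a genuine non-degenerate $P_V$ rather than one of its Riccati reductions.
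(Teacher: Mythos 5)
The paper does not actually prove this statement: it is quoted verbatim as Theorem~34.1 of Gromak--Laine--Shimomura \cite{GLS2002}, so there is no in-paper argument to compare against. Your plan reconstructs the standard proof from that source, and it is sound. In particular, the first-order identity you predict does materialize exactly as you describe: differentiating $R$, substituting $P_{III}(a,b,1,-1)$ for $y''$, completing the square via $\tfrac{(y')^{2}}{y}-2\varepsilon yy'+y^{3}=\tfrac{(y'-\varepsilon y^{2})^{2}}{y}$ (using $\varepsilon^{2}=1$), and then eliminating $y'-\varepsilon y^{2}=R-1-\tfrac{(1-\varepsilon a)y}{x}$ causes all $y^{2}/x$ and $y/x^{2}$ terms to cancel and yields
\begin{equation*}
xR'=\frac{xR(R-2)}{y}-(2-\varepsilon a)(R-1)+b,
\end{equation*}
which is quadratic in $R$ and linear in $1/y$; evaluating the affine part at the roots $R=0$ and $R=2$ produces precisely the combinations $b-\varepsilon a+2$ and $b+\varepsilon a-2$ whose squares become $32\alpha_{1}$ and $-32\beta_{1}$, confirming your ``discriminant'' heuristic. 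Two small cautions. First, the elimination of $y$ requires dividing by $R(R-2)$, so you implicitly need $R\not\equiv 0$ \emph{and} $R\not\equiv 2$ (equivalently $u\not\equiv\infty$ and $u\not\equiv 0$); the theorem as stated only hypothesizes $R\neq 0$, and the excluded cases are exactly the Riccati/degenerate solutions, which is worth a sentence in a complete write-up. Second, the Hamiltonian detour via Okamoto's formulation is a reasonable organizing device but is not needed once the displayed identity is in hand: differentiating it once more, substituting $1/y$ back from it and $y'$ from the definition of $R$, and changing variables to $\tau=x^{2}/2$, $u=1-2/R$ completes the verification directly. What remains between your proposal and a finished proof is carrying out that final (routine but lengthy) substitution and coefficient match.
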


With the help of the two theorems above, we study the \textit{rational solutions} of $P_{V}\left(\frac{\alpha^{2}}{2},-\frac{\lambda^{2}}{2},\frac{1}{2},0\right)$ (\ref{Q13}), and the corresponding $\sigma$-form given by (\ref{Q14}) by setting $\alpha=k+\frac{1}{2}, k\in \mathbb{N}$. The rational solutions of these Painlev\'e equations
are cast in terms of the special polynomials which satisfy (\ref{E1}).

\begin{theorem}\label{TH2}
Let $\alpha=k+\frac{1}{2},$ $k\in \mathbb{N},$ then rational solutions 
may be obtained as follows:
\begin{equation}\label{C11}
\mathcal{H}(s)=\frac{(sg'(s))^{2}}{4g(s)(g(s)-1)^{2}}-\frac{1}{4g(s)}\left[\lambda+\left(k+\frac{1}{2}\right)g(s)\right]^{2}
+\frac{sg(s)}{4(g(s)-1)},
\end{equation}
and
\begin{equation}\label{C12}
g(s)=1-\frac{2}{R(\sqrt{s})},
\end{equation}
with $R(x)$ given by
\bea\label{AA16}
R(x):=y'(x)+y^{2}(x)+\frac{\left(2k-2\lambda\right)y(x)}{x}+1,
\eea
and $y(x)$ in terms of special polynomials,
\bea\label{AA17}
y(x)=\frac{S_{k}(-x;\lambda+1)S_{k-1}(-x;\lambda)}{S_{k}(-x;\lambda)S_{k-1}(-x;\lambda+1)}.
\eea
Here $y(x)$ satisfies $P_{III}\left(2k-2\lambda-1, 2k+2\lambda+1, 1,-1\right),$ and $S_{k}(x)$ satisfies (\ref{E1}).
\end{theorem}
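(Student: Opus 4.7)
The plan is to reduce the $P_V$ equation (\ref{Q13}) to a $P_{III}$ via the Bäcklund-type map of Theorem \ref{TH4}, then invoke the rational-solution construction of Theorem \ref{TH1}. Formula (\ref{Q15}) will then convert a rational solution of $P_V$ into the stated expression for $\mathcal{H}(s)$.

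First I would match parameters. Theorem \ref{TH4} produces a $P_V$ with third parameter $\gamma_1 = -\varepsilon \in \{\pm 1\}$, whereas our (\ref{Q13}) has third parameter $1/2$. To bridge this, use the elementary scaling symmetry of $P_V$: if $u(\tau)$ solves $P_V(\alpha_1,\beta_1,\gamma_1,0)$, then $g(s) := u(cs)$ solves $P_V(\alpha_1,\beta_1,c\gamma_1,0)$. Choosing $c = 1/2$ and $\varepsilon = -1$ makes $c\gamma_1 = 1/2$, and the resulting $g(s) = u(s/2)$ coincides with $1 - 2/R(\sqrt{s})$ of (\ref{A03}), thereby giving (\ref{C12}). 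Specializing (\ref{AA13}) to $\varepsilon = -1$ yields $\alpha_1 = (a+b+2)^2/32$, $\beta_1 = -(b-a-2)^2/32$, and equating these with the target values $(2k+1)^2/8$ and $-\lambda^2/2$ (picking the sign choice that keeps $b>0$) determines $a = 2k-2\lambda-1$ and $b = 2k+2\lambda+1$. Finally, $1 - \varepsilon a = 1 + a = 2k - 2\lambda$, which is precisely the coefficient appearing in (\ref{AA16}).

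Next I would apply Theorem \ref{TH1} to $y(x)$, now known to satisfy $P_{III}(2k-2\lambda-1,2k+2\lambda+1,1,-1)$. Solving $a_n = 2n+2\mu+1 = 2k-2\lambda-1$ and $b_n = 2n-2\mu-1 = 2k+2\lambda+1$ simultaneously forces $n=k$, $\mu = -\lambda-1$, and Theorem \ref{TH1} then supplies a rational solution expressed through $S_k$, $S_{k-1}$. Rewriting this in the form (\ref{AA17}) uses the equivalence of the two parameter presentations of the seed rational solution under the involution $\zeta \mapsto -\zeta$ combined with the reindexing $\mu \mapsto \mu + 1$, which follows from inspecting the Kajiwara--Masuda recursion (\ref{E1}) at low orders and propagating the identity inductively. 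Once (\ref{C12}) and (\ref{AA17}) are established, (\ref{C11}) is immediate: it is just (\ref{Q15}) with $\alpha = k + \tfrac{1}{2}$.

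The main obstacle will be the parameter bookkeeping, especially the two-step conversion $P_V \to P_{III}$ (with its sign choice $\varepsilon = -1$ and rescaling $\tau = s/2$) and the reconciliation of the $S_n$-presentation coming naturally from Theorem \ref{TH1} (with parameter $\mu = -\lambda - 1$) with the form $S_k(-x;\lambda+1)$, $S_{k-1}(-x;\lambda)$, etc., that appears in (\ref{AA17}). Once the branch/sign conventions are fixed and the symmetry of (\ref{E1}) under $\zeta \to -\zeta$ is exploited, the remaining calculations are algebraic substitutions.
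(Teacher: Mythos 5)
Your proposal follows essentially the same route as the paper: rescale $\tau=s/2$ in Theorem \ref{TH4} with $\varepsilon=-1$ to map $P_{V}\left(\frac{\alpha^{2}}{2},-\frac{\lambda^{2}}{2},\frac{1}{2},0\right)$ to $P_{III}(2k-2\lambda-1,\,2k+2\lambda+1,\,1,-1)$, pull the rational solution from Theorem \ref{TH1} with $n=k$, $\mu=-\lambda-1$, and then read off $\mathcal{H}$ from (\ref{Q15}); the parameter bookkeeping you describe matches (\ref{AA14})--(\ref{AA17}). You are in fact more explicit than the paper about the sign/branch choices and about reconciling the $(\zeta,\mu)\mapsto(-\zeta,-\mu)$ presentation of the $S_n$ with the form in (\ref{AA17}), where the only quibble is that the relevant involution sends $\mu$ to $-\mu$ (with the induced signs cancelling in the ratio), rather than a shift $\mu\mapsto\mu+1$.
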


\begin{proof}
With the change of variable $\tau=\frac{s}{2}$ in the $P_V$ (\ref{A04}), then $u(s)$ satisfies another $P_V$ ,
\begin{equation}\label{A05}
u''(s)=\frac{3u(s)-1}{2u(s)(u(s)-1)}\left(u'(s)\right)^{2}-\frac{u'(s)}{s}
+\frac{(u(s)-1)^{2}}{s^{2}}\left(\alpha_{1}u+\frac{\beta_{1}}{u}\right)+\frac{\gamma_{1}}{2s}u
\end{equation}
with parameters given by (\ref{AA13}). Comparing the equations (\ref{A05}) and (\ref{Q13}), one obtains $\varepsilon=-1.$ The solutions of ${\rm P_{V}\left(\frac{\alpha^{2}}{2},-\frac{\lambda^{2}}{2},\frac{1}{2},0\right)}$ from (\ref{Q13}),  in terms of
 ${\rm P_{III}(a, b, 1, -1)}$ of (\ref{A01}), then (\ref{C12}) are obtained, and ${\rm R}$ is given by
\begin{equation}\label{AA15}
{\rm R}(x):=y'(x)+y^{2}(x)+\frac{(1+a)y(x)}{x}+1.
\end{equation}
Comparing equation (\ref{A05}) with (\ref{Q13}), and with our concrete problem, we choose $a$ and $b$ as follows,
\bea\label{AA14}
\left\{
\begin{array}{l}
a=2\alpha-2\lambda-2 \\[3mm]
b=2\alpha+2\lambda
\end{array}
\right.
\eea
although there are four situations.

Let $\alpha=k+\frac{1}{2},$ $k\in \mathbb{N},$ one gets $a=2k-2\lambda-1$ and $b=2k+2\lambda+1$ in (\ref{A01}), with the aid of Theorem \ref{TH1}, then we obtained (\ref{AA17}), (\ref{C11}). The equation (\ref{AA16}) is obtained from (\ref{Q15}), (\ref{AA15}).
\end{proof}

\begin{rem11}
Setting $g(s)=u^{-1}(\zeta),$ and $s=\zeta^{2},$ in $P_{V}\left(\frac{\alpha^{2}}{2},-\frac{\lambda^{2}}{2},\frac{1}{2},0\right),$ (\ref{Q13}), then we arrive at
the second order differential equation ((3.5), \cite{PAC2005}) with parameters given by ((3.7), \cite{PAC2005}). With the aid of the Theorem 3.5 in \cite{PAC2005}, we derive the rational solution $P_{V}\left(\frac{\alpha^{2}}{2},-\frac{\lambda^{2}}{2},\frac{1}{2},0\right),$ with $\alpha=n-\frac{1}{2}$ and $\lambda$ is arbitrary, as follows:
\begin{equation*}
g(s)=\frac{{-\lambda}S_{n-1}(\sqrt{s};-\lambda+1)S_{n-1}(\sqrt{s};-\lambda-1)}{S_{n}(\sqrt{s};-\lambda)S_{n-2}(\sqrt{s};-\lambda)}.
\end{equation*}
\end{rem11}

\subsection{${\rm P_{V}},$ the sine-Gordon equation and discrete Painlev\'e ${\rm II}.$}
The third Painlev\'e equation can be transformed into a sine-Gordon equation, as in Casini, Fosco
 and Huerta \cite{CFH2005}. A special case of $P_{III}$ was reduced to a discrete Painlev\'e {\rm II}
 by Periwal and Shevitz \cite{PS1990}, which can also be
found in \cite{TW1999}. With the help of these papers, we obtain solutions for special ${\rm P_{V}\left(\frac{\alpha^{2}}{2},-\frac{\lambda^{2}}{2},\frac{1}{2},0\right)},$ 
by setting $\alpha=0$ and $\lambda$ to be a positive integer, finally combining with a \textit{seed solution} of a particular $P_{III}$ and its four transformations. For completness, we present all four B\"acklund transformations. The first transformation leads to physically significant solutions as in Theorems \ref{TH5} and \ref{TH6}; we are not aware of physically significant implications of the other transformations.
The solutions of ${\rm P_{V}\left(\frac{\alpha^{2}}{2},-\frac{\lambda^{2}}{2},\frac{1}{2},0\right)},$ (\ref{Q13}), are obtained where  $\alpha$ and $\lambda$ are both integers.
\par
For convenience, we denote the solutions of ${\rm P_{V}\left(\frac{\alpha^{2}}{2},-\frac{\lambda^{2}}{2},\frac{1}{2},0\right)}$ in the style (\ref{Q13}) by
$g(s; \alpha, \lambda),$ and the
solutions of $\sigma$-form, (\ref{Q14}), by $\mathcal{H}(s; \alpha, \lambda).$

\begin{theorem} \label{TH5}
Let $\alpha=0,$ and $\lambda\in \mathbb{N}$, 
then the corresponding solutions are given by
\bea\label{B1}
\mathcal{H}(s; 0, \lambda)=\frac{(sg'(s; 0, \lambda))^{2}}{4g(s; 0, \lambda)(g(s; 0, \lambda)-1)^{2}}-\frac{\lambda^{2}}{4g(s; 0, \lambda)}+\frac{sg(s; 0, \lambda)}{4(g(s; 0, \lambda)-1)},
\eea
where $g(s; 0, \lambda)$ satisfies the difference equation in $\lambda,$
\bea\label{B2}
\frac{1}{\sqrt{1-g(s; 0, \lambda+1)}}+\frac{1}{\sqrt{1-g(s; 0, \lambda-1)}}=-\frac{2\lambda\sqrt{1-g(s; 0, \lambda)}}{\sqrt{s}g(s; 0, \lambda)},
\eea
with the initial condition $g(s; 0, 0)=0.$ For $\lambda=1,$ $g(s; 0, 1)$  is given by
\bea\label{B3}
g(s; 0, 1)=1-\frac{I_{0}^{2}(\sqrt{s})}{I_{1}^{2}(\sqrt{s})},
\eea
where $I_{k}(\cdot)$ is the modified Bessel function of the first kind of order $k.$
Iterating forward in $\lambda$ in (\ref{B2}) and (\ref{B3}), one obtains exact solutions of ${\rm P_{V}\left(\frac{\alpha^{2}}{2},-\frac{\lambda^{2}}{2},\frac{1}{2},0\right)},$for positive integral $\lambda.$
\end{theorem}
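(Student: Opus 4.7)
Equation (\ref{B1}) is immediate: it is just (\ref{Q15}) of Theorem 2 specialized to $\alpha=0$, so the substance of the theorem is the identification of $g(s;0,\lambda)$. My strategy is to route $g(s;0,\lambda)$ through a particular $P_{III}$ by invoking Theorem 4, obtain an explicit classical solution at the seed value $\lambda=1$ from the modified Bessel equation, and then exhibit the recursion (\ref{B2}) as the transcription into the $g$-variable of the $\lambda\mapsto\lambda\pm 1$ Bäcklund transformations of $P_{III}$. Specializing the parameter assignment (\ref{AA14}) with $\varepsilon=-1$ to $\alpha=0$ gives $(a,b)=(-2\lambda-2,\,2\lambda)$, so that $y(x)=y(x;\lambda)$ satisfies $P_{III}(-2\lambda-2,\,2\lambda,\,1,-1)$ and, from (\ref{C12}) and (\ref{AA15}),
\begin{equation*}
g(s;0,\lambda)=1-\frac{2}{R(\sqrt{s};\lambda)},\qquad R(x;\lambda)=y'(x;\lambda)+y^{2}(x;\lambda)-\frac{(2\lambda+1)\,y(x;\lambda)}{x}+1.
\end{equation*}

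For the seed, take $\lambda=1$: the pair $(a,b)=(-4,2)$ satisfies the resonance $a+b+2=0$ under which $P_{III}(a,b,1,-1)$ admits a Riccati reduction, linearizable via a logarithmic-derivative substitution $y(x;1)=-u'(x)/u(x)$ (modulo powers of $x$ absorbed into $u$). The resulting linear equation for $u$ is the modified Bessel equation $u''+u'/x-u=0$; the solution regular at the origin, selected by the finite-$n$ orthogonality governing the deformed Laguerre weight, is $u(x)=I_{0}(x)$. A direct calculation using $I_{0}'=I_{1}$ and $(xI_{1})'=xI_{0}$ then yields $R(x;1)=2I_{1}^{2}(x)/I_{0}^{2}(x)$, and substituting into $g(s;0,1)=1-2/R(\sqrt{s};1)$ gives (\ref{B3}). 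The boundary value $g(s;0,0)=0$ is consistent with the trivial case $\lambda=0$, where $(x+t)^{\lambda}=1$ and the Hankel determinant is independent of $t$, so that $\mathcal{H}\equiv 0$.

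To obtain (\ref{B2}), I will compose the standard Bäcklund transformations of $P_{III}(a,b,1,-1)$ that shift $(a,b)\mapsto(a\pm 2,\,b\pm 2)$ (as in Gromak--Laine--Shimomura \cite{GLS2002}, Chapter 32); in our parametrization these are precisely $\lambda\mapsto\lambda\pm 1$. Each transformation expresses $y(x;\lambda\pm 1)$ as a rational function of $y(x;\lambda)$ and $y'(x;\lambda)$, and eliminating $y'(x;\lambda)$ between the two using the $P_{III}$ equation itself produces a three-term difference relation in $\lambda$ — a discrete Painlevé II, exactly in the style of Periwal--Shevitz \cite{PS1990} and Tracy--Widom \cite{TW1999}. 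Recasting this relation in terms of $R(x;\lambda)=2/(1-g(s;0,\lambda))$ and $s=x^{2}$, and simplifying, must reproduce the symmetric form (\ref{B2}). Iteration then extends the solution from $\lambda=1$ to all $\lambda\in\mathbb{N}$.

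The main obstacle is the final algebraic reduction. The Bäcklund output is naturally expressed in terms of $y$ and $y'$ at a single value of $\lambda$, whereas (\ref{B2}) is purely algebraic in the $g$-values at three consecutive levels $\lambda-1,\lambda,\lambda+1$. The clean elimination of derivatives and the matching of the resulting coefficients against $1/\sqrt{1-g_{\lambda+1}}+1/\sqrt{1-g_{\lambda-1}}=-2\lambda\sqrt{1-g_{\lambda}}/(\sqrt{s}\,g_{\lambda})$ is the delicate step; the Casini--Fosco--Huerta sine-Gordon reduction of $P_{III}$ \cite{CFH2005}, which already packages $P_{III}$ as a single trigonometric second-order equation in which Bäcklund acts by a half-angle shift, furnishes a convenient intermediate form in which the square-root structure of (\ref{B2}) appears transparently.
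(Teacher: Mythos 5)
Your overall architecture --- identify $g$ via $g=1-2/R$ with $y$ a solution of $P_{III}(-2\lambda-2,\,2\lambda,\,1,-1)$, then realize the $\lambda$-recursion as a composition of B\"acklund transformations --- is coherent, and the reduction of (\ref{B1}) to (\ref{Q15}) at $\alpha=0$ is fine; but the seed step contains a genuine error that breaks the argument. For $P_{III}(a,b,1,-1)$ with $(a,b)=(-4,2)$, the one-parameter Riccati family is $y'=-y^{2}+3y/x-1$ (writing $y'=py^{2}+qy/x+r$ with $p^{2}=r^{2}=1$, the compatibility condition $pa+rb=2$ forces $p=r=-1$, $q=3$), and this is \emph{exactly} the equation $R(x)=y'+y^{2}-3y/x+1=0$. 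In other words, the Riccati solutions of $P_{III}(-4,2,1,-1)$ lie precisely on the excluded locus (\ref{A02})/(\ref{E16}) of Theorem \ref{TH4}, where $g=1-2/R$ is undefined, so they cannot serve as the seed; moreover they linearize, via $y=u'/u$, to $u''-3u'/x+u=0$, i.e.\ ordinary Bessel functions of order $2$ after $u=x^{2}v$, not to the modified Bessel equation. Concretely, the function $y=-I_{1}(x)/I_{0}(x)$ that your substitution produces satisfies $P_{III}(0,-2,1,-1)$, not $P_{III}(-4,2,1,-1)$, and inserting it into your $R$ gives $2I_{1}(xI_{1}+2I_{0})/(xI_{0}^{2})$ rather than the required $2I_{1}^{2}/I_{0}^{2}$.

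The Riccati/Bessel structure you are sensing lives one level up, in the variable $\phi=1/\sqrt{1-g}$ rather than in the $P_{III}$ transcendent $y$. The paper's proof sets $\phi(r;\lambda)=1/\sqrt{1-g(s;0,\lambda)}$ with $r=\sqrt{s}$, shows that $\phi$ satisfies the radial sine-Gordon equation (\ref{E4}) together with the differential-difference (B\"acklund) pair (\ref{E5})--(\ref{E6}); eliminating $d\phi/dr$ between these gives the discrete $P_{II}$ (\ref{E11}), which is literally (\ref{B2}) once $\phi=1/\sqrt{1-g}$ is substituted back --- so the ``delicate elimination'' you defer to the end is automatic in this variable, with the square roots built in from the start. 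The seed is then (\ref{E5}) at $\lambda=1$ with $\phi(r;0)=1$: the Riccati $\phi'+\phi/r-(1-\phi^{2})=0$, which linearizes via $\phi=u'/u$ to $u''+u'/r-u=0$ and yields $\phi(r;1)=I_{1}(r)/I_{0}(r)$, i.e.\ (\ref{B3}). If you insist on routing through $P_{III}$, the correct (non-Riccati) solution of $P_{III}(-4,2,1,-1)$ is $y=\phi(r;1)/\phi(r;2)$, the reciprocal of the $W$ of Theorem \ref{TH3} at $\lambda=2$, which already presupposes the recursion you are trying to establish. As written, the proposal would not reproduce (\ref{B3}); the fix is to relocate the Riccati reduction from $y$ to $\phi$.
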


\begin{proof}
 Put $\alpha=0,$ 
 then $g(s; 0, \lambda)$ satisfies
\begin{equation}\label{E3}
\frac{d^2g}{ds^2}=\frac{3g-1}{2g(g-1)}\left(\frac{dg}{ds}\right)^2-\frac{1}{s}\frac{dg}{ds}-\frac{\lambda^2(g-1)^2}{2s^2g}+\frac{g}{2s}.
\end{equation}
Making a transformation and a change of variable;
\begin{equation}\label{B4}
\phi(r; \lambda)=\frac{1}{\sqrt{1-g(s; 0, \lambda)}}, \quad {\rm and} \quad r=\sqrt{s},
\end{equation}
we see that $\phi(r; \lambda)$ satisfies a $2$-dimensional sine-Gordon equation in the radial coordinates,
\begin{equation}\label{E4}
\frac{d^{2}\phi(r; \lambda)}{dr^{2}}+\frac{1}{r}\frac{d\phi(r; \lambda)}{dr}+\frac{\phi(r; \lambda)}{1-\phi^{2}(r; \lambda)}\left[\left(\frac{d\phi(r; \lambda)}{dr}\right)^{2}-\frac{\lambda^{2}}{r^{2}}\right]+\phi(r; \lambda)\left(1-\phi^{2}(r; \lambda)\right)=0.
\end{equation}
In addition, $\phi(r; \lambda)$ also satisfies differential-difference equations in $\lambda$,
\bea\label{E5}
\frac{d\phi(r; \lambda)}{dr}+\frac{\lambda}{r}\phi(r; \lambda)-(1-\phi^{2}(r; \lambda))\phi(r; \lambda-1)=0,
\eea
\bea\label{E6}
\frac{d\phi(r; \lambda-1)}{dr}-\frac{\lambda-1}{r}\phi(r; \lambda-1)+(1-\phi^{2}(r; \lambda-1))\phi(r; \lambda)=0.
\eea
Replacing $\lambda$ by $\lambda+1$ in (\ref{E6}), followed eliminating the first order derivative,
 we deduce that $\phi(r; \lambda)$ satisfies a `difference' equation with respect to $\lambda,$
\begin{equation}\label{E11}
\phi(r; \lambda+1)+\phi(r; \lambda-1)=\frac{2\lambda}{r}\frac{\phi(r; \lambda)}{1-\phi^{2}(r; \lambda)}.
\end{equation}
If $\lambda$ is an integer, 
then this equation is known as discrete Painlev\'e ${\rm II},$ as obtained by Periwal and Shavitz \cite{PS1990}, and Hisakado \cite{H1996}. Equations (\ref{E5}) and (\ref{E6}) are also obtained by Hisakado \cite{H1996}. These appeared Tracy and Widom \cite{TW1999} in the
study of connections between the characteristic function of the unitary matrices ensemble and the distribution functions of the length of the longest increasing subsequence. In Barashenkov and Pelinovsky \cite{BP1998}, they appeared in the construction of  explicit multi-vortex solutions of the complex sine-Gordon equation.
\par
From (\ref{B4}) and (\ref{E11}), it is easy to find that the (\ref{B2}), with boundary condition $g(s; \alpha, 0)=0,$ $(\ref{B4})$ and the differential-difference equation (\ref{E5}), has the solution,
\bea\label{E12}
\phi(r; 0)=1, \qquad \phi(r; 1)=\frac{I_{1}(r)}{I_{0}(r)}.
\eea
\end{proof}

\par
If $\lambda$ is an integer, then $\phi(r; \lambda)$ is related to a special case of $P_{III}$ obtained by Hisakado \cite{H1996}. See also Tracy and Widom \cite{TW1999}. We restate this connection as a Theorem  below.
\begin{theorem}\label{TH3}
If $\phi(r; \lambda)$ satisfies (\ref{E5}) and (\ref{E6}), then
\bea\label{E7}
W(r):=\frac{\phi(r; \lambda)}{\phi(r; \lambda-1)},
\eea
satisfies
\bea\label{E8}
\frac{d^{2}W}{dr^{2}}=\frac{1}{W}\left(\frac{dW}{dr}\right)^{2}-\frac{1}{r}\frac{dW}{dr}
+\frac{2(1-\lambda)}{r}W^{2}+\frac{2\lambda}{r}+W^{3}-\frac{1}{W},
\eea
which is a special case of the third Painlev\'e transcendent, ${\rm P_{III}}(-2\lambda+2, 2\lambda, 1, -1).$
\end{theorem}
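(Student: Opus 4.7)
The plan is to perform a direct elimination: use the two first-order differential-difference equations (\ref{E5}) and (\ref{E6}) to express everything in terms of $W$, $W'$ and $r$, and then differentiate once more to reach a single second-order ODE for $W$. Throughout, write $\phi=\phi(r;\lambda)$ and $\psi=\phi(r;\lambda-1)$, so that $W=\phi/\psi$ and $\phi^{2}=W^{2}\psi^{2}$.

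First I would compute the logarithmic derivative. From (\ref{E5}) and (\ref{E6}),
\begin{equation*}
(\ln \phi)'=-\frac{\lambda}{r}+\frac{1-\phi^{2}}{W},\qquad (\ln\psi)'=\frac{\lambda-1}{r}-W(1-\psi^{2}),
\end{equation*}
and subtracting, together with $\phi^{2}=W^{2}\psi^{2}$, gives
\begin{equation*}
\frac{W'}{W}=-\frac{2\lambda-1}{r}+\frac{1}{W}+W-2W\psi^{2}.
\end{equation*}
Solving algebraically yields the key identity
\begin{equation*}
2W^{2}\psi^{2}=W^{2}+1-W'-\frac{(2\lambda-1)W}{r},
\end{equation*}
which eliminates $\psi^{2}$ (and hence $\phi^{2}=W^{2}\psi^{2}$) in favour of $W$, $W'$ and $r$.

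Next I would differentiate this identity once more. On the left, use
\begin{equation*}
\psi\psi'=\psi^{2}(\ln\psi)'=\psi^{2}\Bigl[\tfrac{\lambda-1}{r}-W+W\psi^{2}\Bigr]
\end{equation*}
to express $(\psi^{2})'$ in terms of $\psi^{2}$ and $\psi^{4}$; then replace every occurrence of $\psi^{2}$ by the polynomial $Q/(2W^{2})$ with $Q:=W^{2}+1-W'-(2\lambda-1)W/r$, and every $\psi^{4}$ by $Q^{2}/(4W^{4})$. After cross-multiplying by $W^{2}$ the resulting equation is polynomial in $W,W',W'',1/r$ and $\lambda$; collecting powers of $W$ one checks that the coefficients of $W^{2}$, $(W')^{2}$, $W^{2}W'$, $W'$ and $W^{2}/r^{2}$ all cancel (the last one owing to the identity $(2\lambda-1)+2(\lambda-1)(2\lambda-1)-(2\lambda-1)^{2}=0$), leaving exactly
\begin{equation*}
WW''-(W')^{2}=W^{4}-1-\frac{WW'}{r}+\frac{2(1-\lambda)W^{3}}{r}+\frac{2\lambda W}{r},
\end{equation*}
which is equivalent to (\ref{E8}), i.e. $P_{III}(-2\lambda+2,2\lambda,1,-1)$.

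The main obstacle is purely book-keeping: the coefficients coming from expanding $Q^{2}$, from $-2(\lambda-1)WQ/r$, and from $-W'Q$ must be combined carefully, and a miscount would corrupt one of the four parameters of $P_{III}$. There is no analytic subtlety, because the derivation above is formal in $\phi,\psi$ and the algebraic elimination is forced: once $\psi^{2}$ is expressed through $W,W',r$, the only scalar second-order ODE that can arise is the one claimed. The clean cancellations of the $1/r^{2}$, $(W')^{2}$ and $W^{2}W'$ contributions provide a good internal check that the constants $a=-2\lambda+2$, $b=2\lambda$, $c=1$, $d=-1$ are correct.
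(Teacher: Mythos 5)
Your elimination is correct: I have checked the algebra, and with $c=2\lambda-1$, $Q=W^{2}+1-W'-cW/r$, the substitution $\psi^{2}=Q/(2W^{2})$, $\psi^{4}=Q^{2}/(4W^{4})$ into the derivative of the key identity does collapse to $WW''-(W')^{2}=W^{4}-1-WW'/r+2(1-\lambda)W^{3}/r+2\lambda W/r$, which is exactly (\ref{E8}) with parameters $(-2\lambda+2,\,2\lambda,\,1,\,-1)$. The paper, however, does not prove this theorem at all: it simply ``restates the connection'' and refers the reader to Hisakado and to Tracy and Widom, where the computation (for the unitary-ensemble discrete Painlev\'e II system) is carried out. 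So your argument is a genuinely different contribution in the sense that it makes the statement self-contained: the two differential-difference relations (\ref{E5})--(\ref{E6}) are taken as the only input, $\psi^{2}$ is eliminated algebraically via the logarithmic-derivative identity, and one further differentiation forces the ODE. What the citation buys the authors is brevity; what your derivation buys is a verifiable proof inside the paper and an explicit consistency check on the parameter values $a=-2\lambda+2$, $b=2\lambda$. One small imprecision: you list $(W')^{2}$ among the coefficients that ``cancel,'' yet a term $-(W')^{2}$ survives into your final display (only one of the two $u^{2}$ contributions, $-2u^{2}+u^{2}$, cancels); this is a bookkeeping slip in the commentary, not in the result, and the final equation you state is the correct one.
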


 Translating $\lambda$ to $\lambda+1$ in Theorem \ref{TH3}, we obtain solutions of ${\rm P_{V}\left(\frac{\alpha^{2}}{2},-\frac{\lambda^{2}}{2},\frac{1}{2},0\right)},$ (\ref{Q13}), with $\alpha=1$ and $\lambda\in \mathbb{N}.$
We state this result as a theorem in the following.

\begin{theorem} \label{TH6}
Put $\alpha=1$ and $\lambda\in \mathbb{N},$ in  ${\rm P_{V}\left(\frac{\alpha^{2}}{2},-\frac{\lambda^{2}}{2},\frac{1}{2},0\right)},$ which has the $\sigma$-form (\ref{Q14}). Then the closed form solutions can be expressed  in terms of the modified Bessel function of the first kind. That is
\begin{equation}\label{C01}
\mathcal{H}(s; 1, \lambda)=\frac{(sg'(s; 1, \lambda))^{2}}{4g(s; 1, \lambda)(g(s; 1, \lambda)-1)^{2}}-\frac{\left(\lambda+g(s; 1, \lambda)\right)^{2}}{4g(s; 1, \lambda)}+\frac{sg(s; 1, \lambda)}{4(g(s; 1, \lambda)-1)},
\end{equation}
where
\begin{align}\label{C02}
g(s; 1, \lambda)&=1-\frac{2}{R(\sqrt{s})},\\
R(x)&:=y'(x)+y^{2}(x)+\frac{\left(1-2\lambda\right)y(x)}{x}+1,\\
\label{(5.100)}y(x)&=\frac{\phi(x; \lambda+1)}{\phi(x; \lambda)}.\end{align}
Here $\phi(x; \lambda)$ satisfies 
(\ref{E11}); the initial conditions read
\bea\label{C03}
\phi(x; 0)=1, \qquad \phi(x; 1)=\frac{I_{1}(x)}{I_{0}(x)}.
\eea
\end{theorem}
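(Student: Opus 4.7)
The plan is to follow the template used in the proof of Theorem~\ref{TH2}: lift a solution of a suitable $P_{III}(a,b,1,-1)$ to a solution of $P_V(\alpha^2/2,-\lambda^2/2,1/2,0)$ via Theorem~\ref{TH4}, and then realize that $P_{III}$ solution through the Bessel-indexed ratios $\phi(x;\lambda+1)/\phi(x;\lambda)$ supplied by Theorem~\ref{TH3}. Formula (\ref{C01}) itself is just the general identity (\ref{Q15}) specialized to $\alpha=1$, so only the identification of $g(s;1,\lambda)$ requires work.

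I would first fix the parameters in Theorem~\ref{TH4}. Setting $\tau=s/2$ converts (\ref{A04}) into (\ref{A05}); matching the $u/(2s)$ coefficient against the $g/(2s)$ coefficient in (\ref{Q13}) forces $\gamma_{1}=1$, hence $\varepsilon=-1$ by (\ref{AA13}). Equating $\alpha_{1}=1/2$ and $\beta_{1}=-\lambda^{2}/2$ in (\ref{AA13}), and selecting the sign choice of (\ref{AA14}) consistent with $\alpha=1$, yields $a=-2\lambda$ and $b=2\lambda+2$. Substituting these values and $\varepsilon=-1$ into (\ref{A02}) produces the $R(x)$ displayed in the theorem, while (\ref{A03}) combined with $\sqrt{2\tau}=\sqrt{s}$ gives $g(s;1,\lambda)=1-2/R(\sqrt{s})$.

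It remains to exhibit an explicit $y(x)$ solving $P_{III}(-2\lambda,2\lambda+2,1,-1)$. This is exactly what Theorem~\ref{TH3} supplies after the shift $\lambda\mapsto\lambda+1$: the ratio $y(x)=\phi(x;\lambda+1)/\phi(x;\lambda)$ solves $P_{III}(-2\lambda,2\lambda+2,1,-1)$ provided $\phi(\cdot;\lambda)$ satisfies the differential-difference system (\ref{E5})--(\ref{E6}), hence the discrete $P_{II}$ equation (\ref{E11}). The seed data $\phi(x;0)=1$ and $\phi(x;1)=I_{1}(x)/I_{0}(x)$ are already established in (\ref{E12}) during the proof of Theorem~\ref{TH5}; iterating (\ref{E11}) forward in $\lambda$ then generates $\phi(x;\lambda)$ for every $\lambda\in\mathbb{N}$, and with it the closed-form $y(x)$, $R(x)$, $g(s;1,\lambda)$, and $\mathcal{H}(s;1,\lambda)$.

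The step most likely to require care is the parameter bookkeeping across the two theorems: one must verify that under the $\tau\mapsto s/2$ rescaling the sign choice $\varepsilon=-1$ is compatible with the branch of (\ref{AA14}) selected at $\alpha=1$, and that the shifted Theorem~\ref{TH3} produces $P_{III}$ parameters matching precisely that branch. Once this is settled and the non-vanishing hypothesis $R\neq 0$ in Theorem~\ref{TH4} is noted to hold away from an isolated set, the assembly of $\mathcal{H}(s;1,\lambda)$, $g(s;1,\lambda)$, $R$, and $y$ is purely mechanical.
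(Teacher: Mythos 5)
Your proposal is correct and follows essentially the same route as the paper: specialize (\ref{AA14}) at $\alpha=1$ to get $P_{III}(-2\lambda,2\lambda+2,1,-1)$, realize its solution as $y(x)=\phi(x;\lambda+1)/\phi(x;\lambda)$ via the $\lambda\mapsto\lambda+1$ shift of Theorem~\ref{TH3} with seed data (\ref{E12}), and lift to $P_V$ through Theorem~\ref{TH4} with $\varepsilon=-1$ and $\tau=s/2$. Your write-up is in fact more explicit about the parameter matching ($\gamma_1=1\Rightarrow\varepsilon=-1$, the branch choice in (\ref{AA14})) than the paper's own terse proof.
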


\begin{proof}
Translating $\lambda$ to $\lambda+1$ in (\ref{E7}), then
\bea\label{E9}
\widehat{W}(r)=\frac{\phi(r; \lambda+1)}{\phi(r; \lambda)},
\eea
is a solution of ${\rm P_{III}}(-2\lambda, 2\lambda+2, 1, -1)$, a special ${\rm P_{III}}$ (\ref{A01})
with $a,$ $b$ given by (\ref{AA14}) and $\alpha=1$, so (\ref{(5.100)}) is obtained. The initial data (\ref{C03}) are from (\ref{E12}). With the help of Theorem \ref{TH4} and Theorem \ref{TH5}, we obtain (\ref{C01}) and (\ref{C02}).
\end{proof}
To investigate the solution of $P_{V}(\frac{\alpha^2}{2}, -\frac{\lambda^2}{2},\frac{1}{2},0)$,
 where $\alpha \in \mathbb{N}$ and $\lambda \in \mathbb{N},$ we introduce four transformations.
\begin{lemma} \label{Le2}
\begin{enumerate}
$\bullet$ If $y(x)$ is a solution of the third Painlev\'e transcendent $P_{III}(a, b, 1, -1),$ (\ref{A01}), and the transformation $\mathfrak{K}_{1}$ is defined by
\begin{equation}\label{E15}
\mathfrak{K}_{1}:\left(x, y\right) \rightarrow \left(-x, h\right)=\left(-x, -\frac{xy'(x)-xy^{2}(x)-by(x)-y(x)-x}{y(x)\left[xy'(x)-xy^{2}(x)+ay(x)+y(x)-x\right]}\right),
\end{equation}
subject to the following condition,
\begin{equation}\label{E16}
y'(x)-y^{2}(x)+\frac{1+a}{x}y(x)-1 \neq 0,
\end{equation}
then $h(x)$ satisfies $P_{III}(2+a, 2+b, 1, -1).$
\par
$\bullet$ If the transformation $\mathfrak{K}_{2}$ is given by,
\begin{align}\label{A5}
\mathfrak{K}_{2}:\left(x, y\right) \rightarrow \left(-x, h\right)=\left(-x, -\frac{xy'(x)+xy^{2}(x)+(-1+b)y(x)+x}{y(x)\left[xy'(x)+xy^{2}(x)+(1-a)y(x)+x\right]}\right),
\end{align}
subject to the following condition,
\begin{equation*}
y'(x)+y^{2}(x)+\frac{1-a}{x}y(x)+1 \neq 0,
\end{equation*}
then $h(x)$ satisfies $P_{III}(-2+a, -2+b, 1, -1).$
\par
$\bullet$ If the transformation $\mathfrak{K}_{3}$ is defined by
\begin{align}\label{A6}
\mathfrak{K}_{3}:\left(x, y\right) \rightarrow \left(-x, h\right)=\left(-x, \frac{xy'(x)+xy^{2}(x)-by(x)-y(x)-x}{y(x)\left[xy'(x)+xy^{2}(x)-ay(x)+y(x)-x\right]}\right),
\end{align}
with restriction of,
\begin{equation*}
y'(x)+y^{2}(x)+\frac{1-a}{x}y(x)-1 \neq 0,
\end{equation*}
then $h(x)$ satisfies $P_{III}(a-2, b+2, 1, -1).$
\par
$\bullet$ If the transformation $\mathfrak{K}_{4}$ is given by
\begin{align}\label{A7}
\mathfrak{K}_{4}:\left(x, y\right) \rightarrow \left(-x, h\right)=\left(-x, \frac{xy'(x)-xy^{2}(x)+by(x)-y(x)+x}{y(x)\left[xy'(x)-xy^{2}(x)+ay(x)+y(x)+x\right]}\right),
\end{align}
subject to,
\begin{equation*}
y'(x)-y^{2}(x)+\frac{1+a}{x}y(x)+1 \neq 0,
\end{equation*}
then $h(x)$ satisfies $P_{III}(a+2, b-2, 1, -1).$
\end{enumerate}
\end{lemma}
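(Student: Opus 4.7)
The four maps are B\"acklund transformations of $P_{III}(a,b,1,-1)$; together they realize the four pure translations $(a,b)\mapsto (a\pm 2,b\pm 2)$ of the affine Weyl group of type $A_{1}^{(1)}\oplus A_{1}^{(1)}$ attached to $P_{III}$. My plan is to verify each by direct substitution, exploiting the fact that the numerator and denominator of every $\mathfrak{K}_i$ are linear in $y'(x)$.

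First I would write each transformation as $h=-N_i(x,y,y')/[y\,D_i(x,y,y')]$, where $N_i$ and $D_i$ are \emph{Riccati-type residues} of the form $xy'\pm xy^{2}+(\text{linear in }y)\pm x$. The non-degeneracy hypotheses such as (\ref{E16}) say precisely that $D_i\ne 0$, i.e.\ that one is not sitting on a Riccati-type special solution of $P_{III}$ where the transformation collapses. I would then differentiate $h$ twice with respect to $x$; every $y''(x)$ generated by the chain and quotient rules is eliminated using (\ref{A01}), giving $h'$ and $h''$ as explicit rational functions of $(x,y,y')$ with denominators of the form $y^{k}D_i^{m}$. Finally, substituting into the target equation $P_{III}(a',b',1,-1)$ with independent variable $X=-x$ (so that $h_{X}=-h_{x}$, and the signs of the $a'h^{2}/X$ and $b'/X$ terms come out consistently), the identity collapses, after clearing denominators, to a polynomial identity in $x,y,y'$ that follows from $P_{III}(a,b,1,-1)$.

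The main obstacle is bookkeeping rather than conceptual difficulty, and I would reduce the work in two ways. The four transformations split into two essentially independent cases because the involution $\sigma:(y,a,b)\mapsto (-y,-a,-b)$ is a symmetry of (\ref{A01}), and a short check against the formulas shows that $\mathfrak{K}_2=\sigma\circ\mathfrak{K}_1\circ\sigma$ and $\mathfrak{K}_4=\sigma\circ\mathfrak{K}_3\circ\sigma$; the parameter shifts $(+2,+2)$ and $(-2,+2)$ of $\mathfrak{K}_1,\mathfrak{K}_3$ are thus converted into $(-2,-2)$ and $(+2,-2)$ for $\mathfrak{K}_2,\mathfrak{K}_4$, so it suffices to verify $\mathfrak{K}_1$ and $\mathfrak{K}_3$ in full. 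Within each of those two calculations, I would reuse the simplification of the combination $N_i'D_i-N_iD_i'$ (which appears whenever one differentiates $h$) once, so that the rest is linear algebra over the rational function field $\mathbb{Q}(x,y,y')$. This reduces the whole statement to verifying one polynomial identity in each of two cases, which is tedious but straightforward.
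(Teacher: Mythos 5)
Your proposal is correct, and the symmetry reduction is a genuine (and verifiable) economy: writing $\sigma:(y,a,b)\mapsto(-y,-a,-b)$, one checks directly from the displayed formulas that $\mathfrak{K}_2=\sigma\circ\mathfrak{K}_1\circ\sigma$ and $\mathfrak{K}_4=\sigma\circ\mathfrak{K}_3\circ\sigma$, with the parameter shifts conjugating as you state, so only $\mathfrak{K}_1$ and $\mathfrak{K}_3$ need to be verified by hand. The paper does not use this; it verifies $\mathfrak{K}_1$ and asserts the other three are similar. Where you differ more substantively is in the organization of the verification itself. Rather than computing $h'$ and $h''$ as rational functions of $(x,y,y')$ and substituting into the target equation, the paper first \emph{inverts} (\ref{E15}) to express $y'$ as a rational function of $x$, $y$ and $h$, namely
$y'=y^{2}+\frac{(2+a+b)y}{x(1+hy)}-\frac{(1+a)y}{x}+1$;
differentiating this and eliminating $y''$ via $P_{III}(a,b,1,-1)$ (with $x$ replaced by $-x$) then yields a \emph{first-order} ODE for $h$ coupled to $y$, from which the second-order target equation for $h$ follows. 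This two-step route keeps all intermediate expressions first order in derivatives and makes the role of the non-degeneracy condition (\ref{E16}) (non-vanishing of the denominator when solving for $y'$, i.e.\ avoidance of the Riccati-type special solutions, exactly as you interpret it) transparent; your direct double differentiation reaches the same polynomial identity but with heavier intermediate expressions. Either way the computation closes, so there is no gap; combining your $\sigma$-conjugation trick with the paper's first-order reduction would give the cleanest complete argument.
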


\begin{proof}
By (\ref{E15}) 
one finds,
\begin{equation}\label{E18}
y'(x)=y^{2}(x)+\frac{(2+a+b)y(x)}{x+xh(x)y(x)}-\frac{(1+a)y(x)}{x}+1.
\end{equation}
Replacing $x$ by $-x$ in ${\rm P_{III}(a, b, 1, -1)},$ the equation (\ref{A01}) becomes,
\begin{equation*}
y''(x)=\frac{(y'(x))^{2}}{y(x)}-\frac{y'(x)}{x}-\frac{ay(x)^{2}+b}{x}+y^{3}(x)-\frac{1}{y(x)}.
\end{equation*}
Now substitute (\ref{E18}) into the above equation, we find
\begin{equation*}
h'(x)=\frac{y(x)h^{3}(x)}{1+h(x)y(x)}+\frac{x+(1+a)y(x)}{x+xh(x)y(x)}h^{2}(x)
+\frac{-(1+b)+xy(x)}{x+xh(x)y(x)}h(x)+\frac{1}{1+h(x)y(x)}.
\end{equation*}
With the above equation, we verify that $h(x)$ satisfies ${\rm P_{III}(2+a, 2+b, 1, -1)}.$ The proofs of other three are similar.
\end{proof}
We would like to point out that these transformations are different from those in \cite{MCB1995}; for more about the B\"{a}cklund transformations
of $P_{III},$ see \cite{Murata1995} and \cite{OLBC2010}.
\par
The seed solution obtained in Theorem \ref{TH6}, combined with the transformation $\mathfrak{K}_{1},$ gives us all the
solutions of $P_{V}\left(\frac{\alpha^{2}}{2},-\frac{\lambda^{2}}{2},\frac{1}{2},0\right),$ 
where $\alpha$ and $\lambda$ are positive integers, in terms of $a$ and $b$ given by (\ref{AA14}).

\begin{theorem}
Let $\alpha \in \mathbb{N},$ $\lambda \in \mathbb{N},$ in the fifth Painlev\'e equation
$P_{V}\left(\frac{\alpha^{2}}{2},-\frac{\lambda^{2}}{2},\frac{1}{2},0\right),$
then its solution reads
\begin{equation*}
g(s; \alpha, \lambda)=1-\frac{2}{R_{\alpha}(\sqrt{s})},
\end{equation*}
with $R_{\alpha}(x)$ given by
\begin{equation*}
R_{\alpha}(x):=\frac{d}{dx}\left(\mathfrak{K}^{\alpha-1}_{1}[y](x)\right)
+\left(\mathfrak{K}^{\alpha-1}_{1}[y](x)\right)^{2}+\frac{\left(2\alpha-2\lambda-1\right)\mathfrak{K}^{\alpha-1}_{1}[y](x)}{x}+1.
\end{equation*}
Here $\mathfrak{K}^{k}_{1}[f](x)$ denotes applying the transformation $\mathfrak{K}_{1}$  $k$ times to the function $f(x),$ where $y(x)$ is given by (\ref{(5.100)}). 
\end{theorem}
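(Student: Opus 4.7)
The plan is to build up the solution inductively on $\alpha$, using Theorem \ref{TH6} as the base case and iterating the B\"acklund transformation $\mathfrak{K}_1$ of Lemma \ref{Le2} to increment $\alpha$ by one at each step. First, observe that the parameter correspondence (\ref{AA14}) assigns to $\alpha = 1,\lambda\in\mathbb{N}$ the ${\rm P_{III}}(a,b,1,-1)$ with $(a,b)=(-2\lambda,2\lambda+2)$, and Theorem \ref{TH6} supplies the corresponding seed solution $y(x)=\phi(x;\lambda+1)/\phi(x;\lambda)$ with $\phi$ determined by (\ref{E11}) and (\ref{C03}). This is the $k=0$ case $\mathfrak{K}_1^{0}[y]=y$ of the claimed formula.

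Next I would run the induction on $\alpha\ge 1$. The first part of Lemma \ref{Le2} tells us that, whenever the non-degeneracy (\ref{E16}) holds, $\mathfrak{K}_1$ maps a solution of ${\rm P_{III}}(a,b,1,-1)$ to a solution of ${\rm P_{III}}(a+2,b+2,1,-1)$. Thus after $\alpha-1$ applications we obtain $\mathfrak{K}_1^{\alpha-1}[y]$, a solution of ${\rm P_{III}}(2\alpha-2\lambda-2,2\alpha+2\lambda,1,-1)$, which is precisely the parameter pair dictated by (\ref{AA14}). Feeding this into Theorem \ref{TH4} with $\varepsilon=-1$ (as already chosen in the proof of Theorem \ref{TH2}) and using the change of variables $\tau=s/2$, so that $\sqrt{2\tau}=\sqrt{s}$, converts $\mathfrak{K}_1^{\alpha-1}[y]$ into the $P_V$ solution $g(s;\alpha,\lambda)=1-2/R_\alpha(\sqrt{s})$, where $R_\alpha$ is the function (\ref{A02}) specialised to $y\mapsto\mathfrak{K}_1^{\alpha-1}[y]$ and $a=2\alpha-2\lambda-2$, giving exactly the displayed expression.

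A small verification step I would then perform is to check that the resulting $P_V$ parameters match $(\tfrac{\alpha^2}{2},-\tfrac{\lambda^2}{2},\tfrac12,0)$: substituting $(a,b)=(2\alpha-2\lambda-2,2\alpha+2\lambda)$ into (\ref{AA13}) gives $b-\varepsilon a+2=b+a+2=4\alpha$ and $b+\varepsilon a-2=b-a-2=4\lambda$, hence $\alpha_1=(4\alpha)^2/32=\alpha^2/2$ and $\beta_1=-(4\lambda)^2/32=-\lambda^2/2$, with $\gamma_1=-\varepsilon=1$ producing the $\tfrac{1}{2s}$ coefficient after rescaling $\tau=s/2$. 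This confirms that each iterate lies in the correct $P_V$ family.

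The main technical obstacle will be ensuring that the B\"acklund transformation is admissible at every stage, i.e. that the non-vanishing condition (\ref{E16}) holds for $\mathfrak{K}_1^{k}[y]$ at each $k=0,1,\dots,\alpha-2$, and likewise that $R_\alpha\not\equiv 0$ so that (\ref{A02}) applies. For the seed $y=\phi(x;\lambda+1)/\phi(x;\lambda)$, relations (\ref{E5})--(\ref{E6}) make the relevant combination computable explicitly in terms of $\phi(x;\lambda\pm 1)$, and I expect the required non-degeneracy to follow from the strict positivity and monotonicity of the Bessel-type seeds in (\ref{C03}). The inductive step would then be handled by observing that $\mathfrak{K}_1$ carries the analogous non-vanishing property forward; if this fails only at isolated points, the rational dependence on $y$ and $y'$ in (\ref{E15}) shows $\mathfrak{K}_1^{k}[y]$ remains a meromorphic solution, so the formula for $g(s;\alpha,\lambda)$ still defines a solution of $P_V$ off a discrete exceptional set, and the statement follows.
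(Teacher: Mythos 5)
Your proposal is correct and follows essentially the same route as the paper's (very terse) proof: take the ${\rm P_{III}}(-2\lambda,2\lambda+2,1,-1)$ seed solution $y=\phi(\cdot;\lambda+1)/\phi(\cdot;\lambda)$ from Theorem \ref{TH6}, apply $\mathfrak{K}_1$ from Lemma \ref{Le2} a total of $\alpha-1$ times to reach the parameter pair $(2\alpha-2\lambda-2,\,2\alpha+2\lambda)$ of (\ref{AA14}), and pass to $P_V$ via Theorem \ref{TH4} with $\varepsilon=-1$ and $\tau=s/2$. Your explicit verification of the resulting $P_V$ parameters and your attention to the non-degeneracy conditions (\ref{E16}) and $R_\alpha\neq 0$ supply details the paper leaves implicit, but do not change the argument.
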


\begin{proof}
We take the solution (\ref{(5.100)}) of ${\rm P_{III}}(-2\lambda, 2\lambda+2, 1, -1),$ $\lambda \in \mathbb{N},$
as a seed solution. Apply  the transformation
$\mathfrak{K}_{1}$ $\alpha-1$ ($\alpha \in \mathbb{N}$) times to prove the Theorem.
\end{proof}

\section{Asymptotic behavior of the Double Scaled Moment Generating Function.}
In this section, we focus our attention on the (double-scaled) moment generating function for large and small $s.$
From these asymptotic expansions, we find that the distribution of the outage probability deviates significantly
from Gaussian. This has been partially observed in Chen and McKay \cite{YangMcky2012}. Forrester and Witte \cite{FW} (3.29), (6.10)  have otained asymptotic expansions of the logarithmic derivative of the isomonodromic $\tau$ function associated with the Laguerre ensemble; this $\tau$ function is given by a Hankel determinant analogous to our ( \ref{Q2}).  Similar series appear also in \cite{FO2010}, and see section 37 of  \cite{GLS2002} for details regarding $P_V$.  We compute and compare the first three or four cumulants.
\\
\par
For small $s,$ there is a Puiseux expansion of ${\mathcal{H}}(s)$ from (\ref{Q14}), which when combined with $(\ref{Q3}),$ $(\ref{Q1}),$ $(\ref{Q10}),$ and $\alpha \in \mathbb{Z}_{+},$ gives
\begin{align}\label{B5}
&\ln\mathbf{M}(s; \alpha, \lambda)\nonumber\\
&=\frac{\lambda}{4(\alpha+\lambda)}s-\frac{\alpha{\lambda}}{32(\alpha+\lambda)^{2}[(\alpha+\lambda)^{2}-1]}s^{2}
+\frac{\alpha{\lambda}}{96(\alpha+\lambda)^{3}[(\alpha+\lambda)^{2}-1][(\alpha+\lambda)^{2}-4]}s^{3}+\mathcal{O}\left(s^{4}\right)\nonumber\\
&\quad-\frac{C_{3}}{\alpha+\lambda+1}s^{\alpha+\lambda+1}\left(1+\frac{\alpha-\lambda}{16(\alpha+\lambda)^{2}(\alpha+\lambda-1)(\alpha+\lambda+2)}s+\mathcal{O}\left(s^{2}\right)\right)\nonumber\\
&\quad-\frac{C_{3}^{2}}{2(\alpha+\lambda+1)^{2}}s^{2(\alpha+\lambda+1)}\left(1+\frac{(\alpha-\lambda)(\alpha+\lambda+1)}{(\alpha+\lambda)(\alpha+\lambda+2)^{2}}s+\mathcal{O}\left(s^{2}\right)\right)\nonumber\\
&\quad-\frac{C_{3}^{3}}{3(\alpha+\lambda+1)^{3}}s^{3(\alpha+\lambda+1)}\left(1+\mathcal{O}\left(s\right)\right)+\mathcal{O}\left(s^{4(\alpha+\lambda+1)}\right),
\end{align}
with $C_{3}$ given by
\begin{equation*}
C_{3}=(-1)^{\alpha}\frac{\Gamma(1+\alpha)\Gamma(1+\lambda)}{4^{\alpha+\lambda+1}\Gamma^{2}(\alpha+\lambda+1)\Gamma(\alpha+\lambda+2)}.
\end{equation*}

$\bullet$ The cumulant generating function has an expansion $(\ref{a3}),$ so we list below four cumulants:
\begin{align*}
\kappa_{1}=&\frac{1}{4\alpha}s-\frac{1}{32\alpha(\alpha^{2}-1)}s^{2}
+\frac{1}{96\alpha^{2}(\alpha^{2}-1)(\alpha^{2}-4)}s^{3}+\mathcal{O}\left(s^{4}\right)\nonumber\\
&+(-1)^{\alpha}\frac{2+(\alpha+1)(\gamma+\ln4-\ln{s}+3\psi(0, \alpha+1))}{4^{\alpha+1}(\alpha+1)^{3}\Gamma^{2}(1+\alpha)}s^{1+\alpha}\left(1+\mathcal{O}\left(s\right)\right)\nonumber\\
&+\frac{2+(1+\alpha)(\gamma+\ln4-\ln{s}+3\psi(0, \alpha+1))}{16^{1+\alpha}(1+\alpha)\Gamma^{4}(\alpha+2)}s^{2(1+\alpha)}\left(1+\mathcal{O}\left(s\right)\right)
+\mathcal{O}\left(s^{3(1+\alpha)}\right),\\
\kappa_{2}=&-\frac{s}{2\alpha^{2}}+\frac{2\alpha^{2}-1}{8\alpha^{2}(\alpha^{2}-1)^{2}}s^{2}
-\frac{7\alpha^{4}-25\alpha^{2}+12}{48\alpha^{3}(\alpha^{2}-1)^{2}(\alpha^{2}-4)^{2}}s^{3}+\mathcal{O}\left(s^{4}\right)\nonumber\\
&+\left(C_{4}+6(1+\alpha)[4+(1+\alpha)(2\gamma+\ln{16})]\ln{s}-36(1+\alpha)^{2}\psi(0, \alpha+1)\ln{s}-6(1+\alpha)^{2}(\ln{s})^{2}\right.\nonumber\\
&\quad\left.+36(1+\alpha)(2+(1+\alpha)(\gamma+\ln{4}))\psi(0, \alpha+1)+54(1+\alpha)^{2}\psi(0, \alpha+1)^{2}\right.\nonumber\\
&\quad\left.-18(1+\alpha)^{2}\psi(1, \alpha+1)
\right)\frac{2(-1)^{1+\alpha}}{3(\alpha+1)^{4}4^{\alpha+2}\Gamma^{2}(1+\alpha)}s^{1+\alpha}\left(1+\mathcal{O}\left(s\right)\right)+\mathcal{O}\left(s^{2(1+\alpha)}\right),\\
\kappa_{3}=&\frac{3}{2\alpha^{3}}s-\frac{9-27\alpha^{2}+30\alpha^{4}}{16\alpha^{3}(\alpha^{2}-1)^{3}}s^{2}+\frac{96-340\alpha^{2}+447\alpha^{4}-195\alpha^{6}+28\alpha^{8}}{16\alpha^{4}(\alpha^{2}-1)^{3}(\alpha^{2}-4)^{3}}s^{3}+\mathcal{O}\left(s^4\right)+\mathcal{O}\left(s^{1+\alpha}\right),\\
\kappa_{4}=&-\frac{6}{\alpha^{4}}s+\frac{-3+12\alpha^{2}-18\alpha^{4}+15\alpha^{6}}{\alpha^{4}(\alpha^{2}-1)^{4}}s^{2}\nonumber\\
&-\frac{640-3120\alpha^{2}+6240\alpha^{4}-6655\alpha^{6}+3450\alpha^{8}-855\alpha^{10}+84\alpha^{12}}{4\alpha^{5}(\alpha^{2}-1)^{4}(\alpha^{2}-4)^{4}}s^{3}+\mathcal{O}\left(s^4\right)+\mathcal{O}\left(s^{1+\alpha}\right);
\end{align*}
here the $(-1)^{\alpha}$ arises from $\sin \lambda\pi /\sin (\lambda +\alpha )\pi$ in (\ref{aa29}). Also$C_{4}$ is given by
\begin{align*}
C_{4}&=36+(1+\alpha)^{2}(6\gamma^{2}+\pi^{2}+24(\ln2)^{2})+24(1+\alpha)(\gamma+2\ln{2}+\gamma(1+\alpha))\ln2\nonumber\\
&\approx 116.126+113.128\alpha+33.0018\alpha^{2},
\end{align*}
and
\begin{equation}\label{A16}
\psi(n,x)=\frac{d^{n}\psi(x)}{dx^{n}},  \quad \psi(x)=\frac{\Gamma'(x)}{\Gamma(x)},
\end{equation}
$\gamma=0.57721\dots$ is Euler's constant and $\zeta(\cdot)$ is Riemann's Zeta function.

\par
$\bullet$ For small $s,$ and $\alpha=0$ in (\ref{B5}), we find,
\begin{align}\label{B6}
\ln\mathbf{M}(s; 0, \lambda)
&=\frac{1}{4}s-\frac{1}{4^{\lambda+1}\Gamma^{2}(\lambda+2)}s^{\lambda+1}
\left(1-\frac{1}{16\lambda(\lambda-1)(\lambda+2)}s+\mathcal{O}\left(s^{2}\right)\right)\nonumber\\
&\quad-\frac{1}{2^{4\lambda+5}\Gamma^{4}(\lambda+2)}s^{2(\lambda+1)}\left(1-\frac{\lambda+1}
{(\lambda+2)^{2}}s+\mathcal{O}\left(s^{2}\right)\right)+\mathcal{O}\left(s^{3(\lambda+1)}\right).
\end{align}
We list here four cumulants:

\begin{align*}
\kappa_{1}&=\frac{s}{4}\left(2-2\gamma+\ln4\right)-\frac{s\ln{s}}{4}+\mathcal{O}
\left(s^{2}\right)+\mathcal{O}\left(s^{2}\ln{s}\right)+\mathcal{O}\left(s\ln^{2}{s}\right),\\
\kappa_{2}&=c_{21}s+c_{22}s{\ln{s}}-\frac{1}{4}s(\ln{s})^{2}+\mathcal{O}\left(s^{2}\right)+\mathcal{O}\left(s^{2}\ln{s}\right)+\mathcal{O}\left(s\ln^{3}{s}\right),\\
\kappa_{3}&=c_{31}s+c_{32}s\ln{s}+c_{33}s(\ln{s})^{2}-\frac{1}{4}s(\ln{s})^{3}+\mathcal{O}\left(s^{2}\right)+\mathcal{O}\left(s^{2}\ln{s}\right)+\mathcal{O}\left(s\ln^{3}{s}\right),\\
\kappa_{4}&=c_{40}s+c_{41}s\ln{s}+c_{42}s(\ln{s})^2+c_{43}s(\ln{s})^3-\frac{1}{4}s(\ln{s})^4++\mathcal{O}\left(s^{2}\right)+\mathcal{O}\left(s^{2}\ln{s}\right)+\mathcal{O}\left(s\ln^{5}{s}\right),
\end{align*}
\noindent where $c_{21}$ and $c_{22}$ in $\kappa_2$ are given by
\begin{align*}
c_{21}&=\frac{1}{12}\left[-12\gamma^{2}+\pi^{2}+12\gamma\left(2+\ln{4}\right)-3\left(6+\ln{4}\left(4+\ln{4}\right)\right)\right]\approx -0.923,\\
c_{22}&=-\gamma+1+\ln{2} \approx 1.116;
\end{align*}

\noindent where $c_{31},$ $c_{32},$ $c_{33}$ and $c_{34}$ in $\kappa_3$ are given by
\begin{align*}
c_{31}&=\frac{1}{4}\left[24-8\gamma^{3}+12\gamma^{2}(2+\ln{4})-\pi^{2}(2+\ln{4})+\ln{4}\left(18+\ln{4}(6+\ln{4})\right)\right.\nonumber\\
&\quad\left.+2\gamma(\pi^{2}-3(6+\ln{4}(4+\ln{4})))-4\zeta(3)\right]\approx 0.418,\nonumber\\
c_{32}&=\frac{1}{4}\left[-12\gamma^{2}+\pi^{2}+12\gamma(2+\ln{4})-3(6+\ln{4}(4+\ln{4}))\right] \approx -0.692,\nonumber\\
c_{33}&=\frac{1}{2}\left[3-3\gamma+\ln{8}\right] \approx 0.419,
\end{align*}

\noindent and where $c_{40},$ $c_{41},$ $c_{42},$ $c_{43}$ and $c_{44}$ in $\kappa_4$ are given by
\begin{align*}
c_{40}=&-30-4\gamma^{4}-\frac{\pi^{4}}{20}+8(2+\ln{4})\gamma^{3}+\pi^{2}\left(3+2\ln{4}+\frac{(\ln{4})^{2}}{2}\right)+2\gamma^{2}\left(\pi^{2}-18-12\ln{4}-3(\ln{4})^{2}\right)\nonumber\\
&-\ln{4}\left(24+9\ln{4}+2(\ln{4})^2+\frac{1}{4}(\ln{4})^{3}-4\zeta(3)\right)+2\gamma\left[24-2\pi^{2}(1+\ln{2})+36\ln{2}\right.\nonumber\\
&\left.+6(\ln{4})^{2}+(\ln{4})^{3}-4\zeta(3)\right]+8\zeta(3)\approx 4.238,\nonumber\\
c_{41}=&24-36\gamma+24\gamma^{2}-8\gamma^{3}-2\pi^{2}+2\gamma\pi^{2}+18\ln{4}-24\gamma\ln{4}+12\gamma^{2}\ln{4}-\pi^{2}\ln{4}\nonumber\\
&+6(\ln{4})^{2}-6\gamma(\ln{4})^{2}+(\ln{4})^{3}-4\zeta(3)\approx 1.673,\nonumber\\
c_{42}=&-9+12\gamma-6\gamma^{2}+\frac{\pi^{2}}{2}-6\ln{4}+6\gamma\ln{4}-\frac{3}{2}(\ln{4})^{2}\approx -5.537,\nonumber\\
c_{43}=&2-2\gamma+\ln{4} \approx 2.232.
\end{align*}

\par
For small $s,$ and $\alpha=1$ in (\ref{B5}), we find,
\begin{align}\label{BB6}
&\ln\mathbf{M}(s; 1, \lambda)\nonumber\\
=&\frac{\lambda}{4(1+\lambda)}s-\frac{1}{32(2+\lambda)(1+\lambda)^{2}}s^{2}+\frac{1}{96(\lambda-1)(\lambda+1)^{3}(2+\lambda)(3+\lambda)}s^{3}+\mathcal{O}\left(s^{4}\right)\nonumber\\
&+\frac{1}{(\lambda+1)4^{\lambda+2}\Gamma^{2}(\lambda+3)}s^{\lambda+2}\left(1+\frac{1-\lambda}{16\lambda(\lambda+1)(\lambda+3)}s+\mathcal{O}\left(s^{2}\right)\right)\nonumber\\
&-\frac{1}{2^{9+4\lambda}(\lambda+1)^{2}(\lambda+2)^{2}\Gamma^{2}(\lambda+1)\Gamma^{2}(\lambda+3)}s^{2\lambda+4}\left(1+\frac{(1-\lambda)(\lambda+2)}{(\lambda+1)^{3}(\lambda+3)^{2}}s+\mathcal{O}\left(s^{2}\right)\right)\nonumber\\
&+\mathcal{O}\left(s^{3(\lambda+2)}\right).
\end{align}
We list here three cumulants:
\begin{align*}
\kappa_{1}&=\frac{1}{4}s+\left(-\frac{3}{128}+\frac{1}{32}\gamma-\frac{1}{32}\ln{2}\right)s^2+\frac{1}{64}s^{2}\ln{s}
+\mathcal{O}\left(s^{3}\right)+\mathcal{O}\left(s^{2}\ln^{2}{s}\right)+\mathcal{O}\left(s^{3}\ln{s}\right),\\
\kappa_{2}&=-\frac{1}{2}s+\left[\frac{11}{64}-\frac{\gamma}{4}+\frac{\gamma^{2}}{16}-\frac{\pi^{2}}{192}+\frac{\ln{2}}{4}+\frac{(\ln{2})^{2}}{16}-\frac{\gamma{\ln{2}}}{8}\right]s^{2}+\frac{1}{16}\left(\gamma-2-\ln{2}\right)s^{2}\ln{s}\nonumber\\
&\quad+\frac{1}{64}s^{2}(\ln{s})^{2}+\mathcal{O}\left(s^{3}\right)+\mathcal{O}\left(s^{2}\ln^{3}{s}\right)+\mathcal{O}\left(s^{3}\ln{s}\right),\\
\kappa_{3}&=\frac{3}{2}s+\left[c_{30}+\left(\frac{117}{128}-\frac{3\gamma}{4}+\frac{3\gamma^{2}}{16}-\frac{\pi^{2}}{64}+\frac{3\ln{2}}{4}-\frac{3\gamma\ln{2}}{8}+\frac{3(\ln{2})^{2}}{16}\right)\ln{s}\right.\nonumber\\
&\quad\left.+\left(\frac{3\gamma}{32}-\frac{3}{16}-\frac{3\ln{2}}{32}\right)(\ln{s})^{2}+\frac{1}{64}(\ln{s})^{3}\right]s^{2}+\mathcal{O}\left(s^{3}\right)+\mathcal{O}\left(s^{2}\ln^{4}{s}\right)+\mathcal{O}\left(s^{3}\ln{s}\right),
\end{align*}
\noindent with $c_{30}$ given by
\begin{align*}
c_{30}=&-\frac{285}{256}+\frac{117\gamma}{64}-\frac{3\gamma^{2}}{4}+\frac{\gamma^{3}}{8}+\frac{\pi^{2}}{16}-\frac{\gamma\pi^{2}}{32}+\frac{3\gamma}{2}\ln{2}-\frac{3\gamma^{2}}{8}\ln{2}+\frac{\pi^{2}}{32}\ln{2}\nonumber\\
&-\frac{3}{4}\left(\ln{2}\right)^{2}+\frac{3\gamma}{8}\left(\ln{2}\right)^{2}-\frac{1}{8}\left(\ln{2}\right)^{3}+\frac{1}{32}\psi(2, 3)\approx 0.678.
\end{align*}
It is now clear that the moment generating function of the outage probability in small $s=4n^{2}/P$ (or equivalently, large transmitting power P) regime, the pdf of the outage probability
is not Gaussian. 
\par
For large $s,$ and from $(\ref{Q56}),$ we list below several cumulants,
\begin{align*}
k_{1}=&\frac{1}{2}\left(-1-2\alpha+\ln{2\pi}+2\alpha\psi(0, 1+\alpha)\right)+\sqrt{s}-\frac{\alpha}{2}\ln{s}+\frac{4-\alpha^{2}}{8}s^{-\frac{1}{2}}\nonumber\\
&+\left(\frac{3}{128}-\frac{5\alpha^{2}}{48}+\frac{\alpha^{4}}{24}\right)s^{-\frac{3}{2}}+\mathcal{O}\left(s^{-2}\right),\\
k_{2}=&-1+\psi(0, 1+\alpha)+\alpha\psi(1, 1+\alpha)-\frac{1}{2}\ln{s}+\frac{1-4\alpha^{2}}{8}s^{-1}+\mathcal{O}\left(s^{-2}\right),\\
k_{3}=&2\psi(1, 1+\alpha)+\alpha\psi(2, 1+\alpha)+\frac{1-4\alpha^{2}}{4}s^{-\frac{3}{2}}+\mathcal{O}\left(s^{-2}\right).
\end{align*}
For large $s,$ and $\alpha=0$ in (\ref{Q56}), we find,
\begin{align}\label{A8}
\ln\mathbf{\widetilde{M}}(s; 0, \lambda):
=&\lambda\left(\frac{1}{2}\ln\left(2\pi\right)-\frac{1}{2}+s^{\frac{1}{2}}+\frac{1}{8}s^{-\frac{1}{2}}+\frac{3}{128}s^{-\frac{3}{2}}+\frac{45}{1024}s^{-\frac{5}{2}}+\mathcal{O}\left(s^{-\frac{7}{2}}\right)\right)\nonumber\\
&+\frac{\lambda^{2}}{2!}\left(-1-\gamma-\frac{1}{2}\ln{s}+\frac{1}{8}s^{-1}+\frac{9}{64}s^{-2}+\frac{9}{16}s^{-3}+\mathcal{O}\left(s^{-4}\right)\right)\nonumber\\
&+\frac{\lambda^{3}}{3!}\left(2\zeta(2)+\frac{1}{4}s^{-\frac{3}{2}}+\frac{27}{32}s^{-\frac{5}{2}}+\mathcal{O}\left(s^{-\frac{7}{2}}\right)\right)\nonumber\\
&+\frac{\lambda^{4}}{4!}\left(-6\zeta(3)+\frac{3}{4}s^{-2}+\frac{45}{8}s^{-3}+\mathcal{O}\left(s^{-4}\right)\right)\nonumber\\
&+\sum_{j=5}^{\infty}\frac{\lambda^{j+1}}{(j+1)!}\left((-1)^{j}j!\,\zeta(j)+\mathcal{O}\left(s^{-\frac{5}{2}}\right)\right).
\end{align}
For comparison, we restate the cumulants obtained in Chen and McKay \cite{YangMcky2012}, where $P$ gets large, but in the variable $s=4n^2/P;$ see ((172),\cite{YangMcky2012}), and ((178),\cite{YangMcky2012}),
\begin{align*}
\kappa_{1}&\sim n\ln{P}-n +\frac{1}{8}\frac{1}{s^{1/2}}+\frac{3}{128}\frac{1}{s^{3/2}}+\frac{45}{1024}\frac{1}{s^{5/2}}+\cdots\\
\kappa_{2}&\sim \frac{1}{2}\ln{P}-\ln{2}+\frac{1}{8s}+\frac{9}{64\:s^2}+\cdots\\
\kappa_{3}&\sim \frac{1}{4n}+\frac{1}{4\:s^{3/2}}+\cdots.
\end{align*}
\par
If $k=0,$ or $\alpha=\frac{1}{2},$ then there is a rational solution
\begin{equation}\label{A10}
\mathcal{H}(s)=-\frac{1}{4}\lambda(1+\lambda)+\frac{\lambda\sqrt{s}}{2}, \qquad g(s)=\frac{\lambda}{\lambda-\sqrt{s}},
\end{equation}
which combines with (\ref{Q17}) to give
\begin{align}\label{A9}
\ln\mathbf{\widetilde{M}}\left(s; \frac{1}{2}, \lambda\right)&
:=\ln\left(\frac{G(1+\frac{1}{2}+\lambda)}{G(1+\frac{1}{2})}\right)+\lambda\left(\sqrt{s}-\frac{1}{4}\ln{s}\right)-\frac{\lambda^{2}}{4}\ln{s}\nonumber\\
&=\lambda\left(-1+\frac{1}{2}\ln(2\pi)+\frac{1}{2}\psi(0,\frac{3}{2})+\sqrt{s}-\frac{1}{4}\ln{s}\right)+\frac{\lambda^{2}}{2!}\left(-3+\frac{\pi^{2}}{4}+\psi(0,\frac{3}{2})-\frac{1}{4}\ln{s}\right)\nonumber\\
&\quad+\frac{\lambda^{3}}{3!}\left(-8+\pi^{2}+\psi(2,\frac{3}{2})\right)
+\frac{\lambda^{4}}{4!}\left(-48+\frac{\pi^{2}}{2}+3\psi(2,\frac{3}{2})\right)+\cdots,
\end{align}
which shows that the variance approximation breaks down if $s\geq 1.$ In fact, the fourth cumulant is \textit{negative}.
\par

\section{Limiting behavior of the reproducing kernel with the deformed Laguerre weight.}
In random matrix theory, Tracy and Widom \cite{TW1994} identified some reproducing kernels which arise from several matrix models, and which may be considered as exhibiting universal properties, independent of the particular model. The Bessel kernel is associated with `hard-edge' when one rescales the Laguerre ensemble at the edge of the spectrum. In this concluding section, we take the system of orthogonal polynomials for the deformed Laguerre weight and show they satisfy a system of ODE  in the style of  \cite{TW19942}. In Theorem \ref{TheoremA}, we show how this system behaves under double scaling, and then we recover the Bessel kernel in a particular case.

First, recall the lowering and raising operators in \cite{YangMcky2012}, and when we combine them with ((64), (65), (219) and (220), \cite{YangMcky2012})
we arrive at
\begin{equation}\label{Q41}
P'_{n}(x;t)+\left(\frac{r_{n}(t)}{x+t}-\frac{n+r_{n}(t)}{x}\right)P_{n}(x;t)=
\beta_{n}\left(\frac{1-R_{n}(t)}{x}+\frac{R_{n}(t)}{x+t}\right)P_{n-1}(x;t),
\end{equation}
\begin{equation}\label{Q42}
P_{n-1}'(x;t)+\left(\frac{\alpha+n+r_{n}}{x}+\frac{\lambda-r_{n}}{x+t}-1\right)P_{n-1}(x;t)=\left(\frac{R_{n-1}-1}{x}-\frac{R_{n-1}}{x+t}\right)P_{n}(x;t),
\end{equation}
where $R_{n}(t)$ and $r_{n}(t)$ given by ((221) and (222),  \cite{YangMcky2012}),
$\beta_{n}(t)=h_{n}(t)/h_{n-1}(t)$, and $h_{n}(t)$ is the square of the $L^{2}$ norm ((59),\cite{YangMcky2012}).
\par
With the aid of Christoffel--Darboux formula \cite{Szego1939}, one has the reproducing kernel,
\begin{equation}\label{Q43}
{\rm K}_{n}\left(x,y\right):=\sqrt{\frac{h_{n}(t)}{h_{n-1}(t)}}\frac{\varphi_{n}(x)\varphi_{n-1}(y)-\varphi_{n}(y)\varphi_{n-1}(x)}{x-y},
\end{equation}
where
\begin{equation*}
\varphi_{n}(x):=\frac{P_{n}(x;t)w^{\frac{1}{2}}(x;t)}{\sqrt{h_{n}(t)}}=\frac{P_{n}(x;t)}{\sqrt{h_{n}(t)}}
\left(x+t\right)^{\frac{\lambda}{2}}x^{\frac{\alpha}{2}}e^{-\frac{x}{2}},
\end{equation*}
\begin{equation*}
\varphi_{n-1}(x):=\frac{P_{n-1}(x;t)w^{\frac{1}{2}}(x;t)}{\sqrt{h_{n-1}(t)}}=\frac{P_{n-1}(x;t)}{\sqrt{h_{n-1}(t)}}
\left(x+t\right)^{\frac{\lambda}{2}}x^{\frac{\alpha}{2}}e^{-\frac{x}{2}}.
\end{equation*}
Combining these with the ladder operator relations of (\ref{Q41}), (\ref{Q42}) and the above definitions of $\varphi_{n}(x; t),$ $\varphi_{n-1}(x; t),$ we find
\begin{equation}\label{Q44}
\frac{d}{dx}\varphi_{n}(x;t)=\left(\frac{\alpha+2n+2r_{n}}{2x}+\frac{\lambda-2r_{n}}{2(x+t)}-\frac{1}{2}\right)\varphi_{n}(x;t)+\left(\frac{1-R_{n}}{x}+\frac{R_{n}}{x+t}\right)\beta_{n}^{\frac{1}{2}}\varphi_{n-1}(x;t),
\end{equation}
\begin{equation}\label{Q45}
\frac{d}{dx}\varphi_{n-1}(x;t)=-\left(\frac{1-R_{n-1}}{x}+\frac{R_{n-1}}{x+t}\right)\beta_{n}^{\frac{1}{2}}\varphi_{n}(x;t)-\left(\frac{\alpha+2n+2r_{n}}{2x}+\frac{\lambda-2r_{n}}{2(x+t)}-\frac{1}{2}\right)\varphi_{n-1}(x;t).
\end{equation}
Observe that we can write this as ${\frac{d}{dx}}\Phi_n(x)=\Omega_n(x;t)\Phi_n(x)$ where $\Omega_n(x;t)$ is a $2\times 2$ matrix with entries that are rational functions of $x$, and ${\rm{trace}}(\Omega_n(x,t)))=0$. This brings us within the context of \cite{TW19942}.
If we eliminate $\varphi_{n-1}(x; t)$ in (\ref{Q45}) with (\ref{Q44}), one find $\varphi_{n}(x;t)$ satisfies a second order ODE,
\begin{equation}\label{Q46}
D_{1}(x;t)\frac{d^{2}}{dx^{2}}\varphi_{n}+D_{2}(x;t)\frac{d}{dx}\varphi_{n}+D_{3}(x;t)\varphi_{n}=0,
\end{equation}
where
\begin{equation*}
D_{1}(x;t)=x+t-tR_{n},\qquad D_{2}(x;t)=1+\frac{t}{x}-\frac{tR_{n}}{x}-\frac{tR_{n}}{x+t},
\end{equation*}

\begin{align*}
D_{3}(x;t)=&-\frac{4H_{n}+\frac{\alpha^{2}}{2}+(1-R_{n})(\alpha{\lambda}+2H_{n}-2nt-\alpha{t})-(2n+\alpha+\lambda-t)R_{n}-2r_{n}+\lambda-t}{2x}\nonumber\\
&-\frac{\frac{\lambda^{2}}{2}+R_{n}(2H_{n}+\alpha{\lambda})+t\lambda{R_{n}}+(2n+\alpha+\lambda+t)R_{n}+2r_{n}-\lambda}{2(x+t)}+\frac{t\lambda^{2}R_{n}}{4(x+t)^{2}}\nonumber\\
&-\frac{t(1-R_{n})(\frac{\alpha^{2}}{4}+2H_{n})}{x^{2}}+\frac{2n+1+\alpha+\lambda}{2}-\frac{x+t(1-R_{n})}{4}.
\end{align*}
Hence the reproducing kernel of (\ref{Q43}) is characterized by the second order ODE, (\ref{Q46}).
\indent Let $f$ be a smooth real function of compact support, and for $\zeta\in \mathbb{R}$ consider the weight
\begin{equation}
w(x; t, \alpha,\lambda,\zeta )=(x+t)^{\lambda}x^{\alpha}e^{-x-\zeta f(x)},\;\;t>0,\;\;\alpha>-1,\; x\geq 0,
\end{equation}
which arises when we multiply $w(x;t,\alpha, \lambda ,\zeta)$ by $e^{-\zeta f(x)}$. We write $M_{e^{-\zeta f}-1}$ for the operation of multiplying by $e^{-\zeta f(x)}-1$. Corresponding to  (\ref{Q2}), we introduce
\begin{equation}\label{Q2}
D_{n}(t; \alpha, \lambda, \zeta )=\frac{1}{n!}\int_{{\mathbb{R}}_{+}^{n}}\prod_{j<k}(x_{j}-x_{k})^2\prod_{\ell=1}^{n}
\left(t+x_{\ell}\right)^{\lambda}x_{\ell}^{\alpha}
e^{-x_{\ell}-\zeta f(x_\ell)}dx_{\ell}.
\end{equation}
Then the ratio $D_{n}(t; \alpha, \lambda, \zeta )/D_{n}(t; \alpha, \lambda)$ is the moment generating function of $\sum_{\ell =1}^n f(x_\ell )$ under a suitable probability distribution. The kernel $K_n$ determines an integral operator on $L^2(0, \infty )$ so that 
\begin{equation}{\frac{D_{n}(t; \alpha, \lambda, \zeta )}{D_{n}(t; \alpha, \lambda)}}=\det \bigl( I+K_nM_{e^{-\zeta f}-1}),\end{equation} 
by standard results of random matrix theory. It is therefore of interest to know how $K_n$ behaves under the double scaling. 
 
\begin{theorem}\label{TheoremA}
Let
\begin{equation}\label{Q47}
X=4nx,\quad Y=4ny,\quad s=4nt,\quad \varphi(X; s):=\lim_{n\rightarrow\infty}\varphi_{n}\left(\frac{X}{4n}; \frac{s}{4n}\right).
\end{equation}
Let $n\rightarrow\infty,$ $t\rightarrow 0,$ $x\rightarrow 0,$ $y\rightarrow 0,$ such that $s,$ $X$ and $Y$ are in
compact subsets of $(0,\infty),$ 
then
\begin{align}\label{Q48}
\lim_{n\rightarrow \infty}\frac{1}{4n}{\rm K}_{n}\left(\frac{X}{4n},\frac{Y}{4n}\right)=\frac{A(Y)\varphi(X)-A(X)\varphi(Y)}{X-Y},
\end{align}
with  $A(z)$ given by
\begin{equation*}
A(z)=\frac{s}{2}\left(\frac{\lambda\left(\mathcal{H}(s)-\mathcal{H}'(s)\right)}{\mathcal{H}'(s)}
+4s\mathcal{H}''(s)\right)z\varphi(z)+\frac{z(z+s)}{z+s-4s\mathcal{H}'(s)}\varphi'(z).
\end{equation*}
Let
$\varphi'(z)$ denote $d\varphi(z)/dz$. Furthermore, $\varphi(z)$ satisfies the following second order ODE,
\begin{align}\label{Q49}
\varphi''&(z)+\left(\frac{1}{z}+\frac{1}{z+s}-\frac{1}{z+s-4s\mathcal{H}'(s)}\right)\varphi'(z)\nonumber\\
&-\left[\frac{\alpha^{2}}{4z^{2}}+\frac{\lambda^{2}}{4(z+s)^{2}}-\frac{z+s-2\alpha\lambda-4\mathcal{H}(s)}{4z(z+s)}
+\frac{\lambda{s}(\mathcal{H}'(s)-\mathcal{H}(s))
-4s^{2}\mathcal{H}'(s)\mathcal{H}''(s)}{2\mathcal{H}'(s)z(z+s)(z+s-4s\mathcal{H}'(s))}\right]\varphi(z)=0;
\end{align}
here $\mathcal{H}'$ denotes $d\mathcal{H}(s)/ds$ and $\mathcal{H}(s)$ is solution of the
$\sigma$-form of $P_V$ equation (\ref{Q14}).
\end{theorem}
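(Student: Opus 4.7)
The proof naturally splits into identifying the limiting reproducing kernel and deriving the ODE satisfied by the scaled function $\varphi$. For the kernel, my starting point is the Christoffel--Darboux formula (\ref{Q43}) together with the lowering relation (\ref{Q44}). Solving (\ref{Q44}) algebraically for $\sqrt{\beta_n}\,\varphi_{n-1}$ yields
\begin{equation*}
\sqrt{\beta_n}\,\varphi_{n-1}(x;t)=\frac{x(x+t)}{x+t-tR_n(t)}\Bigl[\varphi_n'(x;t)-\Bigl(\tfrac{\alpha+2n+2r_n}{2x}+\tfrac{\lambda-2r_n}{2(x+t)}-\tfrac{1}{2}\Bigr)\varphi_n(x;t)\Bigr],
\end{equation*}
and inserting this into (\ref{Q43}) eliminates $\varphi_{n-1}$ completely, producing ${\rm K}_n(x,y)$ as an antisymmetric bilinear expression in $\varphi_n,\varphi_n'$ of the shape $(B_n(y)\varphi_n(x)-B_n(x)\varphi_n(y))/(x-y)$ that mirrors the target formula (\ref{Q48}).

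Next I would substitute $x=X/(4n)$, $y=Y/(4n)$, $t=s/(4n)$ and rescale derivatives via $\partial_x=4n\,\partial_X$; dividing by $4n$ brings the kernel to the scale $\tfrac{1}{4n}{\rm K}_n$. The passage to the limit requires the asymptotics of $R_n,r_n$ under double scaling. From $R_n(t)=y(t)/(y(t)-1)$ and $y(s/4n)\to g(s)$ one obtains $R_n\to g/(g-1)$. Feeding this into the Riccati relation (\ref{Q6}) and comparing with $H_n(t)=t\,\partial_t\ln D_n$, one can extract
\begin{equation*}
r_n(s/4n)=-n\,\frac{g(s)}{g(s)-1}+\rho(s)+o(1),
\end{equation*}
where $\rho(s)$ is computable from $g(s),g'(s)$, and hence from $\mathcal{H}(s),\mathcal{H}'(s)$ via (\ref{Q15}). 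The subleading term $\rho(s)$ is exactly what is needed so that the $O(n)$ pieces in the square bracket above combine with the $O(1/n)$ prefactor into the finite coefficient $A(z)$ announced in the theorem; converting every remaining instance of $g,g'$ into $\mathcal{H},\mathcal{H}',\mathcal{H}''$ via (\ref{Q15}) completes the identification.

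For the ODE piece, the same substitution is applied to (\ref{Q46}), using $\partial_x^2=(4n)^2\partial_X^2$. Each coefficient $D_j(x;t)$ rescales, and the asymptotics of $R_n,r_n,H_n$ just described produce nontrivial $n\to\infty$ limits; the ratios $D_2/D_1$ and $D_3/D_1$ then reproduce the coefficients of $\varphi'$ and $\varphi$ in (\ref{Q49}), with the $\sigma$-form equation (\ref{Q14}) available to tidy up expressions containing $\mathcal{H}''$. The main obstacle in the whole proof is precisely the bookkeeping required to verify the cancellation of the $O(n)$ contributions hidden inside $\alpha+2n+2r_n$ and inside the prefactor $x+t-tR_n$; this demands the subleading $\rho(s)$, not merely the leading behavior of $r_n$. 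A convenient way to organize the calculation is to recast (\ref{Q44})--(\ref{Q45}) as a $2\times 2$ traceless system $\partial_x\Phi_n=\Omega_n(x;t)\Phi_n$ with $\Phi_n=(\varphi_n,\varphi_{n-1})^T$, rescale $\Omega_n$ directly, and read off both the limiting kernel and the ODE from the entries of $\Omega_\infty(X;s)$ in the isomonodromic style of \cite{TW19942}.
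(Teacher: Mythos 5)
Your proposal follows essentially the same route as the paper's proof: solve the lowering relation (\ref{Q44}) for $\sqrt{\beta_n}\,\varphi_{n-1}$ (this is the paper's (\ref{Q50})), insert it into the Christoffel--Darboux kernel (\ref{Q43}), pass to the double-scaling limit using refined asymptotics of $r_n$ and $R_n$ re-expressed through $\mathcal{H},\mathcal{H}',\mathcal{H}''$, and obtain the ODE by the same elimination applied to (\ref{Q45})/(\ref{Q46}). The one bookkeeping caveat is that the $O(1/n)$ correction to $R_n$ itself (not only the correction $\rho(s)$ to $r_n$) enters at the surviving order, because $x+t-tR_n$ is $O(1/n)$; the paper supplies exactly this via the quadratic relation expressing $R_n$ through $H_n$, $H_n'$, $H_n''$ and $r_n$, whereas in your scheme it would have to be extracted jointly with $\rho(s)$ from the Riccati pair (\ref{Q5})--(\ref{Q6}) rather than from (\ref{Q7}) alone.
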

\begin{proof}
Solving the equation (\ref{Q44}) in terms of $\varphi_{n}(x)$ as
\begin{equation}\label{Q50}
\varphi_{n-1}(x)=\frac{\left[-2n(t+x)+(x-\alpha)(t+x)-\lambda{x}-2tr_{n}\right]\varphi_{n}(x)+2x(t+x)\varphi_n'(x)}{2(x+t-tR_{n})\sqrt{\beta_{n}}}.
\end{equation}
Making use the above expression, we find,
\begin{align*}
R_{n}&=\frac{tH_{n}''(t)+\sqrt{[tH_{n}''(t)]^{2}+4[r_{n}^{2}(t)-\lambda{r_{n}(t)}]
[n(n+\alpha+\lambda)+tH'_{n}(t)-H_{n}(t)]}}{2[n(n+\alpha+\lambda)+tH'_{n}(t)-H_{n}(t)]}\nonumber\\
&\sim 4\mathcal{H}'(s)+\frac{\lambda\mathcal{H}(s)-4(\alpha+\lambda)(\mathcal{H}'(s))^{2}
+4s\mathcal{H}'(s)\mathcal{H}''(s)}{2n\mathcal{H}'(s)}.
\end{align*}
Proceeding further, we note that, (recall $4nt=s$),
\begin{equation*}
r_{n}=-H_{n}'(t)\sim -4n\mathcal{H}'(s),
\end{equation*}
obtained from \cite{YangMcky2012}. Substituting (\ref{Q47}) into the kernel (\ref{Q43}), we find that
(\ref{Q48}) follows immediately after some computations.
\par
If we eliminate $\varphi_{n-1}(x)$ in (\ref{Q45}) with the aid of (\ref{Q50}) and (\ref{Q47}), and
the estimation of  $R_{n}(t)$ and $r_{n}(t),$ then the second order ODE (\ref{Q49}) is obtained.
\end{proof}
\begin{rem12}
If $s=0,$ then the limiting kernel degenerates to the Bessel kernel obtained by \cite{TW1994} as,
\begin{equation}
\lim_{n\rightarrow \infty}\frac{1}{4n}{\rm K}_{n}\left(\frac{X}{4n},\frac{Y}{4n}\right)=\frac{Y\varphi'(Y)\varphi(X)-X\varphi'(X)\varphi(Y)}{X-Y},
\end{equation}
and the second order ODE (\ref{Q49}) becomes the Bessel differential equation,
\begin{equation}
\varphi''(X)+\frac{1}{X}\varphi'(X)+\left(\frac{1}{4X}-\frac{(\alpha+\lambda)^{2}}{4X^{2}}\right)\varphi(X)=0.
\end{equation}
Similarly, this degenerate case also appeared in the study of double scaling of the relevant kernel appearing
the singular weight problem in \cite{CCF2015}. Here $\varphi(X)$ is a constant multiple of $J_{\alpha +\lambda}(\sqrt {X}).$
\end{rem12}
\begin{rem13}
The weight (\ref{AA1}) can be compared with $x^{\alpha+\lambda}e^{-x+\frac{\lambda{t}}{x}},$ which is the so-called singularly perturbed Laguerre weight considered in \cite{ChenIts2010}. There the authors obtained a connection with $P_{III}$ for finite $n$.
Heuristically, we show this by recalling $s=4nt$, and taking the limit
\begin{align}
(t+x)^{\lambda}x^{\alpha}e^{-x}&=\Bigl({\frac{s}{4n}}+x\Bigr)^{\lambda}x^{\alpha}e^{-x}
=x^{\alpha+\lambda}\:\left(1+\frac{s}{4nx}\right)^{(4nx/s)\;\lambda\;t/x}\; e^{-x}\nonumber\\
&\to x^{\lambda+\alpha} e^{\lambda\;t/x-x},\qquad n\to\infty.\end{align}

See also \cite{XuDaiZhao2014} for the double scaling limit of the polynomial kernel, with physical background provided in \cite{OVAKE2007} and for a relevant study (for Hermite) see \cite{MFMO2009}. From the double scaling process, we indeed obtain a $P_{V}\left(\frac{\alpha^{2}}{2},-\frac{\lambda^{2}}{2},\frac{1}{2},0\right)$ (\ref{Q13}) which is equivalent to a particular $P_{III}.$
A referee suggested such a discussion from the view point of singularity of the weights, for which we would like to express our thanks.
\end{rem13}
\par
{\bf Acknowledgement.}
\par
We would like to thank the Science and Technology Development Fund of the Macau SAR for generous support:
FDCT 077/2012/A3 and FDCT 130/2014/A3. We also like to thank the University of Macau
for supporting us with MYRG 2014-00011-FST and MYRG 2014-00004-FST.
\\

\end{document}